\newtheorem{theorem}{Theorem}[section]
\newtheorem{lemma}[theorem]{Lemma}
\newtheorem{proposition}[theorem]{Proposition}
\newtheorem{corollary}[theorem]{Corollary}
\newtheorem{claim}[theorem]{Claim}
\newtheorem{question}{Question}
\newtheorem{definition}[theorem]{Definition}
\newtheorem{observation}[theorem]{Observation}
\newenvironment{clproof}{\begin{list}{}{%
			\setlength{\leftmargin}{3mm}%
		} \item {\it Proof.} }{\hfill$\lozenge$\end{list}}
\newcommand\extrafootertext[1]{%
    \bgroup
    \renewcommand\thefootnote{\fnsymbol{footnote}}%
    \renewcommand\thempfootnote{\fnsymbol{mpfootnote}}%
    \footnotetext[0]{#1}%
    \egroup
}
\newcommand{\dist}{\mathsf{dist}}
\newcommand{\tw}{\mathsf{tw}}
\newcommand{\rad}{\mathsf{rad}}
\newcommand{\tp}{\mathsf{top}}
\newcommand{\NP}{\textsf{NP}}
\renewcommand\abs[1]{\lvert #1\rvert}
\DeclareTextCompositeCommand{\v}{OT1}{l}{l\nobreak\hspace{-.1em}'}
\title{An efficient algorithm for \textsc{$\mathcal{F}$-subgraph-free Edge Deletion} on graphs having a product structure}
\author[1]{Shinwoo An}
\affil[1]{Department of Computer Science, Bar-Ilan University, Israel.}
\author[2,3]{Seonghyuk Im\thanks{Supported by the National Research Foundation of Korea (NRF) grant funded by the Korea government (MSIT) No. RS-2023-00210430, and supported by the Institute for Basic Science (IBS-R029-C4).}}
\affil[2]{Department of Mathematical Sciences, KAIST, Daejeon, South Korea.}
\affil[3]{Extremal Combinatorics and Probability Group, Institute for Basic Science (IBS), Daejeon, South Korea.}
\author[2,4]{Seokbeom Kim\thanks{Supported by the Institute for Basic Science (IBS-R029-C1).}}
\affil[4]{Discrete Mathematics Group, Institute for Basic Science (IBS), Daejeon, South Korea.}
\author[5]{Myounghwan Lee\thanks{Supported by the National Research Foundation of Korea (NRF) grant funded by the Ministry of Science and ICT (No.~RS-2023-00211670).}}
\affil[5]{Department of Mathematics, Hanyang University, Seoul, South Korea.}
\date{\today}
\begin{document}

\maketitle

\begin{abstract}
Given a family $\mathcal{F}$ of graphs, a graph is \emph{$\mathcal{F}$-subgraph-free} if it has no subgraph isomorphic to a member of $\mathcal{F}$.
We present a fixed-parameter linear-time algorithm that decides whether a planar graph can be made $\mathcal{F}$-subgraph-free by deleting at most $k$ vertices or $k$ edges, where the parameters are $k$, $\lvert \mathcal{F} \rvert$, and the maximum number of vertices in a member of $\mathcal{F}$.
The running time of our algorithm is double-exponential in the parameters, which is faster than the algorithm obtained by applying the first-order model checking result for graphs of bounded twin-width.
    
To obtain this result, we develop a unified framework for designing algorithms for this problem on graphs with a ``product structure.''
Using this framework, we also design algorithms for other graph classes that generalize planar graphs.
Specifically, the problem admits a fixed-parameter linear time algorithm on disk graphs of bounded local radius, and a fixed-parameter almost-linear time algorithm on graphs of bounded genus.
    
Finally, we show that our result gives a tight fixed-parameter algorithm in the following sense:
Even when $\mathcal{F}$ consists of a single graph $F$ and the input is restricted to planar graphs, it is unlikely to drop any parameters $k$ and $\lvert V(F) \rvert$ while preserving fixed-parameter tractability, unless the Exponential-Time Hypothesis fails.
\end{abstract}

\newpage

\section{Introduction}

A graph modification problem is a problem to decide if a graph can be modified by performing a small number of operations, such as adding or deleting vertices and edges.
This problem has gained significant interest (see, e.g.~\cite{Crespelle2023survey}) since it can be used to formulate various algorithmic problems and to mathematically model real-world situations.
For instance, it can be used to model the spread of animal disease~\cite{enright2018deleting}.
A graph represents the transmission network between the farms, where the vertices are the farms and the edges represent the livestock movement between farms. 
The goal is to control the epidemic by prohibiting a small number of these connections and making each connected component have at most $h$ vertices for some integer~$h \geq 1$.

In this paper, we focus on \textsc{$\mathcal{F}$-subgraph-free Edge Deletion} for a family $\mathcal{F}$ of graphs, which is a special type of the graph modification problem that only allows edge deletions.

\medskip
\noindent
\fbox{\parbox{0.97\textwidth}{
	\textsc{$\mathcal{F}$-subgraph-free Edge Deletion}\\
	\textbf{Input:} A graph $G$ and a nonnegative integer $k$\\
	\textbf{Task:} Determine whether there exists $S\subseteq E(G)$ of size at most $k$ such that $G-S$ has no subgraph isomorphic to a member of $\mathcal{F}$
    }}
\medskip

For example, since every connected graph contains a tree that covers all vertices, the above-mentioned example can be described as \textsc{$\mathcal{T}_{h+1}$-subgraph-free Edge Deletion}, where $\mathcal{T}_{h+1}$ denotes the family of trees on $h+1$ vertices.
In fact, there are classic problems, such as \textsc{Edge Bipartization} or \textsc{Feedback Arc Set}, that can be framed as the edge deletion problem.

Several graph modification problems, including previous examples, are \NP-complete~\cite{CTY07,choi1989graph, enright2018deleting, Karp72, Yannakakis1978}.
Even when $\mathcal{F}$ consists of a single graph $F$, the \NP-hardness of \textsc{$\mathcal{F}$-subgraph-free Edge Deletion} has been shown for several graphs $F$.
For example, Anaso and Hirata~\cite{AH82} proved the \NP-completeness when~$F$ is $3$-connected; Alon, Shapira, and Sudakov proved it when $F$ is not bipartite~\cite{ASS09}; and Gishboliner, Levanzov, and Shapira proved it when~$F$ is a forest that is not a star-forest~\cite{GLS24}.
Thus, exploring their complexity with additional constraints has emerged as an important research direction.

One direction is to design \emph{fixed-parameter tractable (FPT)} algorithms, which are algorithms that run in time $f(\textsf{p}) \cdot n^{O(1)}$ for some predetermined parameter $\textsf{p}$.
A natural question in this direction is whether there exists an FPT algorithm when parameterized by the solution size $k$.
For example, Pilipczuk, Pilipczuk, and Wrochna~\cite{pilipczuk2019edge} proved that \textsc{Edge Bipartization} can be solved in time $O(1.977^k) \cdot nm$ on $n$-vertex $m$-edge graphs.
Moreover, if $\mathcal{F}$ is a fixed finite family of graphs, then there is a simple FPT algorithm for \textsc{$\mathcal{F}$-subgraph-free Edge Deletion} when parameterized by $k$, by using the bounded search tree technique~\cite{cygan2015parameterized}. 
However, even when $\mathcal{F} = \{F\}$, we cannot design an FPT algorithm for the same problem parameterized by $k + \lvert V(F) \rvert$ since this problem includes well-known $\textsf{W}[1]$-complete problems such as \textsc{Clique} and \textsc{Independent Set}~\cite{downey1995fixed}, which are unlikely to have FPT algorithms unless the Exponential-Time Hypothesis (ETH) fails~\cite{cygan2015parameterized}.\footnote{The Exponential-Time Hypothesis states that \textsc{$3$-SAT} cannot be solved in time~$2^{o(n)}$.
In particular, ETH implies $\textsf{FPT} \neq \textsf{W}[1]$.}

Another well-studied parameter in parameterized complexity is the \emph{tree-width}, which is a graph parameter that measures how close a graph is to a tree (see~\cref{sec:Prelim} for the formal definition).
Ever since defined by Robertson and Seymour in their Graph Minors series~\cite{RS}, tree-width has become a central topic since it is a powerful tool for designing algorithms.
For instance, the remarkable Courcelle's theorem~\cite{Courcelle90} states that every problem expressible in a particular logic, called the monadic second-order logic, can be solved in linear-time on graphs of bounded tree-width.
However, it again appears to be unsuitable to parameterize \textsc{$\mathcal{F}$-subgraph-free Edge Deletion} by tree-width only, as Gaikwad and Maity~\cite{GM21+} showed that this parameterized problem is $\textsf{W}[2]$-hard when $\mathcal{F}$ is also part of the input.
In contrast, if we parameterize the same problem by $\tw(G) + \lvert \mathcal{F} \rvert + r$, where $r$ denotes the maximum number of vertices in a member of $\mathcal{F}$, then this problem becomes tractable.

\begin{theorem}[{\cite[Theorem~3.1]{enright2018deleting}}] \label{thm:hfed-tw}
    Let $\mathcal{F}$ be a family of graphs where each graph in $\mathcal{F}$ has at most $r$ vertices.
    Then \textsc{$\mathcal{F}$-subgraph-free Edge Deletion} can be solved in time $2^{O(\lvert \mathcal{F} \rvert w^r)} \cdot n$ on $n$-vertex graphs of tree-width at most $w$.
\end{theorem}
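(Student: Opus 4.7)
The plan is dynamic programming on a nice tree decomposition of the input graph. First I would compute a tree decomposition of width $O(w)$ in time $2^{O(w)} n$ and refine it into a nice tree decomposition with introduce-vertex, introduce-edge, forget, and join nodes. Writing $G_t$ for the subgraph of $G$ processed up to node $t$, the DP would maintain, for each $t$, a table indexed by \emph{signatures} that track every incomplete copy of a member of $\mathcal{F}$ whose embedding has touched $G_t$ but is not yet finished. Because $B_t$ separates the already-forgotten vertices from the rest, the signature captures all information needed to continue.

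A \emph{partial copy type} at $t$ is a pair $(F,\mu)$ with $F\in\mathcal{F}$ and $\mu\colon V(F)\to B_t\cup\{\mathsf{done},\bot\}$, where $\mathsf{done}$ marks an $F$-vertex already mapped to a forgotten vertex of $G_t$ and $\bot$ marks a vertex not yet placed; such a type is realizable only if every $F$-edge between already-placed vertices is realized by a surviving edge of $G_t$. A signature is a subset of realizable types, and the DP stores, for each signature $\sigma$, the minimum number of edges one must delete from $G_t$ so that $\sigma$ is exactly the set of partial copies present. Since $\lvert V(F)\rvert\le r$, the number of types is at most $\lvert\mathcal{F}\rvert(w+3)^r = O(\lvert\mathcal{F}\rvert w^r)$, hence the number of signatures is $2^{O(\lvert\mathcal{F}\rvert w^r)}$, matching the target running time once each transition costs $2^{O(\lvert\mathcal{F}\rvert w^r)}$.

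The transitions are largely mechanical but need care. Introduce-vertex extends each signature by adding, for every vertex of every $F\in\mathcal{F}$, the singleton type placing it on the new bag vertex. Forget replaces $v$ in $\mu$ by $\mathsf{done}$ and discards any type in which a vertex now labelled $\mathsf{done}$ still has an $F$-neighbor labelled $\bot$, since such a neighbor can no longer reach $v$. Introduce-edge $uv$ branches on delete-or-keep; in the delete branch the signature is unchanged and the cost rises by one, while in the keep branch any type whose $\mu$-preimages of $u,v$ are $F$-adjacent gains the new edge, and any type that thereby completes into a full copy of some $F$ causes the branch to fail. Join takes the union of the type-sets from the two children after enforcing that every type appearing in both children agrees on $B_t$.

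The main obstacle, and the step I expect to be most delicate, is the join node: I must show that merging two compatible types yields a genuine partial embedding in $G_t$, using that the two children's forgotten vertex sets are disjoint and separated by $B_t$, and I must define the edge cost so that edges introduced in both subtrees are never counted twice. Once these invariants are set up correctly, forbidden subgraphs are rejected at the exact moment they become complete (either an introduce-edge closing the last edge, or a forget pushing the last vertex of a fully mapped $F$ to $\mathsf{done}$), so the answer is whether some signature at the root has cost at most $k$. Summing over the $O(n)$ nodes of the decomposition yields total time $2^{O(\lvert\mathcal{F}\rvert w^r)}\cdot n$.
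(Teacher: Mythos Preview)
The paper does not prove this theorem: it is quoted verbatim from \cite{enright2018deleting} and used throughout as a black box (see the references to \cref{thm:hfed-tw} in the proof of \cref{thm:prod_to_algo}). There is therefore no proof in the present paper to compare your proposal against.

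That said, your outline is the natural dynamic program on a nice tree decomposition and is in the right spirit, with the state space of size $2^{O(\lvert\mathcal{F}\rvert w^r)}$ coming from subsets of the $\lvert\mathcal{F}\rvert(w+3)^r$ partial-embedding types, which matches the claimed bound. One point to tighten: your description of the join as ``the union of the type-sets from the two children'' is not correct. A partial embedding at a join node may place some $\mathsf{done}$-vertices in the left subtree and others in the right subtree, so the resulting type need not appear in either child's signature. The correct transition is to iterate over pairs $(\tau_1,\tau_2)$ of types from the two children that agree on $B_t$, are consistent (no $F$-vertex is $\mathsf{done}$ on both sides), and whose merge yields a valid type; the join signature is the set of all such merges. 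You already flag the join as the delicate step, so this is more a correction of the stated rule than a new obstacle, and the running time is unaffected since iterating over pairs of signatures still costs $2^{O(\lvert\mathcal{F}\rvert w^r)}$ per node.
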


Unfortunately, having small tree-width is a rather strong property, and numerous graph classes that arise naturally in algorithmic graph theory have unbounded tree-width.
A standard example is the class of \emph{planar graphs}, which are the graphs that can be drawn in the plane without edge crossings.
Even though planar graphs can have arbitrarily large tree-width~\cite{RS86}, there are \NP-hard problems that can be solved in polynomial-time on planar graphs, such as \textsc{Maximum Cut}~\cite{Hadlock75}.
This has motivated a common research question: which other problems share this property?

Our first result states that \textsc{$\mathcal{F}$-subgraph-free Edge Deletion} is fixed-parameter tractable on planar graphs, and even on graphs of bounded genus (i.e., graphs that can be embedded into a surface of bounded genus) when parameterized by $\lvert \mathcal{F} \rvert + k + r$, where $k$ is the solution size and $r \coloneq \max_{F \in \mathcal{F}} \lvert V(F) \rvert$.

\begin{theorem} \label{cor:F-free-planar}
    Let $\mathcal{F}$ be a family of graphs where each graph in $\mathcal{F}$ has at most $r$ vertices.
    Then \textsc{$\mathcal{F}$-subgraph-free Edge Deletion} can be solved 
    \begin{enumerate}[label=(\roman*)]
        \item in time~$2^{O(\lvert \mathcal{F} \rvert^2 k r^3)^r} \cdot n$ on $n$-vertex planar graphs
        \item in time~$2^{O(\lvert \mathcal{F} \rvert^2 gk r^3)^r} \cdot n \log n$ on $n$-vertex graphs of genus at most $g$.
    \end{enumerate}
\end{theorem}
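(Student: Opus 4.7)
The plan is to derive \cref{cor:F-free-planar} from the product-structure theorems for planar and bounded-genus graphs: Dujmović, Joret, Micek, Morin, Ueckerdt, and Wood proved that every planar graph is a subgraph of $H \boxtimes P$ with $\tw(H)\le 8$ and $P$ a path, and an analogous statement (with $\tw(H)=O(1)$ together with $O(g)$ apex vertices) holds for graphs of Euler genus at most $g$. Such decompositions can be computed in linear time in the planar case and in $O(n\log n)$ time in the bounded-genus case, which would account for the extra $\log n$ factor in item (ii).

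Writing $L_i\subseteq V(G)$ for the vertices lying above the $i$-th vertex of $P$, the crucial structural facts are that every connected subgraph on at most $r$ vertices meets at most $r$ consecutive layers, and every window $L_i\cup\cdots\cup L_{i+\ell-1}$ of $\ell$ consecutive layers induces a subgraph of tree-width at most $9\ell-1$ in the planar case (respectively $O(g\ell)$ in the genus case). Setting $\ell=\Theta(r+k)$, I would design a dynamic program sweeping along $P$: the state at position $i$ records the restriction of the candidate solution to the edges incident to the most recent $r$ layers, together with an \emph{interface} describing which partially-built copies of members of $\mathcal{F}$ can still be completed by subsequent layers. Transitions between consecutive states are computed by invoking Theorem~\ref{thm:hfed-tw} on each window; a single window costs $2^{O(|\mathcal{F}|(r+k)^r)}$, and summing over $O(n)$ positions and absorbing polynomial factors of $|\mathcal{F}|$ and $r$ into the exponent yields the claimed bound. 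Disconnected members of $\mathcal{F}$ are handled by matching each connected component independently and then aligning the components via the interface.

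The main obstacle I foresee is setting up the interface so that (a)~it is small enough to yield the stated double-exponential dependence on the parameters, and (b)~each occurrence of a forbidden subgraph is charged to exactly one window as the window slides, ensuring that local invocations of Theorem~\ref{thm:hfed-tw} together certify global $\mathcal{F}$-subgraph-freeness. A natural way to enforce (b) is to assign each copy of $F$ to the unique window whose leftmost layer contains its earliest vertex; after this commitment, the subproblem inside each window fits Theorem~\ref{thm:hfed-tw} directly. Residual issues, such as brute-force enumeration of the interaction between the $O(g)$ apex vertices and the solution in the genus case, should then reduce to routine bookkeeping.
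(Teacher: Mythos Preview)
Your sliding-window program along $P$ is essentially Baker's layering technique, which the paper explicitly notes (right after stating Theorem~\ref{thm:prod_to_algo}) suffices when every $F\in\mathcal{F}$ is connected but not otherwise. The failure is in your sentence ``assign each copy of $F$ to the unique window whose leftmost layer contains its earliest vertex; after this commitment, the subproblem inside each window fits Theorem~\ref{thm:hfed-tw} directly.'' If $F=C_1\cup C_2$ with $C_1,C_2$ connected, a copy of $F$ in $G$ may place $C_1$ in layers $[1,r]$ and $C_2$ in layers $[N,N+r]$ for arbitrary $N$; no window of width $\Theta(r+k)$ contains it, so no local call to Theorem~\ref{thm:hfed-tw} ever sees it. Your fallback ``interface describing which partially-built copies can still be completed'' would have to record, for every $F$ and every sub-multiset of its components, whether the already-processed part of $G-S$ admits a vertex-disjoint packing of that sub-multiset; this is a property of the \emph{entire} left-hand choice of $S$, not just of its restriction to the last $r$ layers, and you have not argued that it can be encoded in a state of size bounded in $k,r,|\mathcal{F}|$ nor that the transitions can be computed within your budget.

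The paper bypasses any DP with a structural reduction, Claim~\ref{claim:removing_conn_comp}: if a connected component $C$ of some $F\in\mathcal{F}$ has copies in at least $k+r$ pairwise far-apart blocks of layers, then no $k$-edge deletion destroys them all, so every surviving copy of $F\setminus V(C)$ extends to a copy of $F$, and one may replace $F$ by $F\setminus V(C)$ in $\mathcal{F}$ without changing the answer. After iterating this, every component of every member of $\mathcal{F}$ appears in only $O(|\mathcal{F}|kr)$ blocks; the middle layers of the remaining blocks are deleted, and each connected component of the resulting graph spans $O(|\mathcal{F}|kr^2)$ consecutive layers, hence has tree-width $O(|\mathcal{F}|kr^3 w)$. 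A single global call to Theorem~\ref{thm:hfed-tw} then yields the stated bound (with $w=O(1)$ for planar graphs via Theorem~\ref{thm:planar-product}, and $w=O(g)$ with an $O(n\log n)$-time product-structure computation for genus~$g$). This reduction---not a refined interface---is the missing idea.
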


Note that the tractability of the problem on planar graphs can also be deduced by applying the first-order model checking result on graphs of bounded twin-width~\cite{BKTW20}.
However, the running time of the algorithm obtained in this way has a tower-type dependency on the parameters, compared to a double-exponential dependency in~\cref{cor:F-free-planar}.

Another class we focus on is the class of disk graphs, which is a generalization of planar graphs from a different point of view.
We first give some definitions about disk graphs.
We say a graph $G$ is a \emph{disk graph} if there is a family $\mathcal{D}$ of disks in the Euclidean plane such that $G$ is isomorphic to the intersection graph of $\mathcal{D}$, that is, each disk corresponds to a vertex, and there is an edge between two vertices if and only if two corresponding disks intersect.
In this case, the family $\mathcal{D}$ is called a \emph{geometric representation} of $G$.
Throughout this paper, we assume that the geometric representation $\mathcal D$ is also given as an input, as \textsc{Recognition} is \NP-hard for disk graphs~\cite{breu1998unit}.
Furthermore, we assume that $\mathcal D$ is in \emph{general position} so that
\begin{enumerate}[label=(\arabic*)]
    \item no two disks in $\mathcal{D}$ are identical, and
    \item the boundaries of any three disks in $\mathcal{D}$ do not have a common intersection.
\end{enumerate}
This assumption is valid since one may slightly perturb the disks if necessary.

Disk graphs have an involved structure in general, so it is natural to consider a parameter that measures their complexity.
In this paper, we focus on the \emph{local radius}, which is a parameter measuring the complexity of the disk graph and was recently introduced by Lokshtanov, Panolan, Saurabh, Xue, and Zehavi~\cite{lokshtanov2023framework}.
Roughly speaking, it measures the number of arcs we need to cross to connect two points in a disk. See~\cref{sec:Prelim} for the formal definition.
A class of disk graphs with bounded local radius includes many interesting subclasses. For example, planar graphs are exactly the disk graphs of local radius $1$~\cite{lokshtanov2022framework} and disk graphs of maximum degree at most $\Delta$ have local radius at most $2 \Delta$.
Lokshtanov, Panolan, Saurabh, Xue, and Zehavi~\cite{lokshtanov2023framework} proved that a disk graph with a small local radius admits structural properties analogous to the planar graph, such as bidimensionality theory~\cite{demaine2006bidimensional} or the Grid Minor Theorem~\cite{RS86}.
Motivated by these results, we design an algorithm for \textsc{$\mathcal{F}$-subgraph-free Edge Deletion} on disk graphs of bounded local radius. 
\begin{theorem} \label{cor:F-free-disk}
    Let $\rho \geq 0$ be an integer and let $\mathcal{F}$ be a family of graphs where each graph in $\mathcal{F}$ has at most $r$ vertices.
    Suppose an $n$-vertex disk graph and its geometric representation of local radius $\rho$ are given as an input.
    Then \textsc{$\mathcal{F}$-subgraph-free Edge Deletion} can be solved in time~$2^{O(\lvert \mathcal{F} \rvert^2kr^3\rho^9)^{r}} \cdot n$.
\end{theorem}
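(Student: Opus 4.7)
The plan is to combine the unified product-structure framework developed earlier in the paper with a product-structure decomposition tailored to disk graphs of bounded local radius. The framework handles the algorithmic side, so the main task specific to this theorem is to establish the decomposition with the correct quantitative bounds in terms of $\rho$.

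First, using the given geometric representation $\mathcal{D}$, I would build a radial BFS-style layering of the disk graph $G$, extending the BFS layering used for planar graphs. For disk graphs of local radius at most $\rho$, the structural results of Lokshtanov, Panolan, Saurabh, Xue, and Zehavi imply that any bounded union of consecutive layers induces a subgraph whose tree-width is bounded polynomially in $\rho$. Combined with an embedding into the strong product of a bounded-tree-width graph and a path, this yields a product structure whose width is polynomial in $\rho$.

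Second, I would plug the resulting product structure into the framework. A Baker-style shifting argument, built into the framework, reduces \textsc{$\mathcal{F}$-subgraph-free Edge Deletion} to several bounded-tree-width instances by isolating a slab of consecutive layers containing the entire deletion set; this is where the solution-size parameter $k$ enters, since a deletion set of size $k$ can span only $O(k)$ layers. With the per-instance tree-width bounded polynomially in $\abs{\mathcal{F}}$, $k$, $r$, and $\rho$, applying \cref{thm:hfed-tw} to each reduced instance and summing over the shifts yields the claimed running time of $2^{O(\abs{\mathcal{F}}^2 k r^3 \rho^9)^r}\cdot n$.

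The main obstacle is establishing the correct polynomial dependence of the per-slab tree-width on $\rho$. Disk graphs of large local radius can contain large cliques and intricate layer-crossing structures that never arise for planar graphs, so the local-radius bound must be leveraged carefully to control how many disks can contribute to a single bag of the decomposition. This step is the technical heart of the argument and the source of the $\rho^9$ factor in the exponent; everything else is a matter of invoking the framework and \cref{thm:hfed-tw} in a black-box fashion.
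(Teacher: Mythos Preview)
Your high-level plan—first build a product structure $H\boxtimes P$ with $\tw(H)$ bounded polynomially in $\rho$, then invoke \cref{thm:prod_to_algo}—matches the paper. But both halves of your sketch diverge from what the paper actually does, and one of them contains a real misconception.

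\textbf{Product structure.} The paper does not layer $G$ directly. Instead it passes through the \emph{arrangement graph} $A_{\mathcal D}$, which is planar, applies the planar product-structure theorem to $A_{\mathcal D}$, blows up each face-vertex by the number of disks covering it (bounded by the ply, hence by $2\rho+1$ via \cref{lem:disk_small_ply}), and then observes that $G$ is a depth-$\rho$ minor of the blown-up graph. A shallow-minor transfer theorem (\cref{Thm:Shallow}) then yields the product structure for $G$ with $\tw(H)=O(\rho^9)$. Your proposal to run BFS on $G$ and cite the Lokshtanov et al.\ structural results is a different route; it might be made to work, but you have not said which of their statements gives a genuine product structure (as opposed to, say, a layered treewidth bound), nor how you would compute the embedding into $H\boxtimes P$ in linear time, which \cref{thm:prod_to_algo} requires.

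\textbf{How $k$ enters.} Your description of the framework is incorrect. It is \emph{not} true that ``a deletion set of size $k$ can span only $O(k)$ layers,'' and the framework does not try to locate a slab containing the solution. The $k$ edges of a solution may be scattered across arbitrarily many layers. What the proof of \cref{thm:prod_to_algo} actually does is count, for each connected component $C$ of each $F\in\mathcal F$, how many disjoint layer-blocks contain a copy of $C$. If at least $k+r$ blocks do, then any feasible solution must already kill every copy of $F\setminus V(C)$, so one replaces $F$ by $F\setminus V(C)$. After this reduction only $O(\lvert\mathcal F\rvert kr)$ blocks matter, middle layers of the remaining blocks can be deleted without changing the answer, and the surviving graph has tree-width $O(\lvert\mathcal F\rvert kr^3 w)$—this is where $k$ enters. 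The Baker-style shifting picture you describe would not handle disconnected members of $\mathcal F$ and misattributes the role of $k$.
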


\subsection{Product structure of graphs}

To prove \Cref{cor:F-free-planar} and \Cref{cor:F-free-disk}, we provide a unified framework using the \emph{product structure} which simultaneously captures graph classes in both theorems.
In a recent breakthrough paper, Dujmovi\'{c}, Joret, Micek, Morin, Ueckerdt, and Wood~\cite{dujmovic2020planar} initiated the study of graph product structure theory.
Given two graphs $G$ and $H$, the \emph{strong product} of $G$ and $H$, denoted by $G \boxtimes H$, is a graph whose vertex set is $V(G) \times V(H)$ and two vertices $(u, v), (u', v')$ are adjacent in $G \boxtimes H$ if and only if they satisfy one of the following.
\begin{itemize}
    \item $u=u'$ and $vv'\in E(H)$,
    \item $v=v'$ and $uu'\in E(G)$, or
    \item $uu'\in E(G)$ and $vv'\in E(H)$.
\end{itemize}
We say a class $\mathcal{C}$ of graphs \emph{has a product structure} if there is a universal constant $w \geq 1$ such that every member of $\mathcal{C}$ is a subgraph of $H \boxtimes P$ for some graph $H$ of tree-width at most $w$ and a path~$P$.

\begin{theorem}[\cite{dujmovic2020planar}, improved in \cite{ueckerdt2022improved}]
    Every planar graph is a subgraph of $H \boxtimes P$ for some graph~$H$ of tree-width at most $6$ and a path $P$.
\end{theorem}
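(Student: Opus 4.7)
The plan is to follow Dujmović, Joret, Micek, Morin, Ueckerdt, and Wood, whose strategy rests on a BFS layering combined with a ``vertical path'' tree-decomposition of the planar graph. Let $G$ be the given planar graph; we may assume $G$ is connected (otherwise handle each component and concatenate the resulting paths). Fix a plane embedding and run BFS from an arbitrary root to obtain the layers $L_0, L_1, \ldots$ together with a BFS spanning tree $T$. A \emph{vertical path} is an ancestor--descendant path inside $T$, and hence meets each $L_i$ in at most one vertex.

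The combinatorial heart of the argument is the following structural lemma: there is a partition $\Pi$ of $V(G)$ into vertical paths of $T$ and a tree-decomposition $(\mathcal T, \{B_x\}_{x \in V(\mathcal T)})$ of $G$ such that every bag $B_x$ is the union of at most a universal constant number of paths from $\Pi$. This is proved by a plane-topological induction commonly called the \emph{tripod decomposition}: working inside a subdisk bounded by a bounded number of vertical paths, one exhibits a triple of additional vertical paths whose removal splits the disk into strictly smaller subdisks still bounded by a bounded number of vertical paths, and recurses. Planarity enters only through this step, via the Jordan curve theorem.

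Granting the lemma, the product structure is a direct translation. Put $V(H) \coloneqq \Pi$ and let $P = \ell_0 \ell_1 \cdots \ell_d$ be a path with one vertex per BFS layer; map $v \in L_i$ to $f(v) \coloneqq (\pi_v, \ell_i)$, where $\pi_v$ is the unique path of $\Pi$ containing $v$. Declare $\pi \pi' \in E(H)$ iff $\pi$ and $\pi'$ share some bag $B_x$. Since any edge of $G$ joins vertices in equal or consecutive BFS layers and its endpoints lie in some common bag, verifying $f(u) f(v) \in E(H \boxtimes P)$ reduces to a short case analysis on the layer indices and on whether $\pi_u = \pi_v$. The tree-decomposition of $H$ whose bag at node $x$ is $\widehat B_x \coloneqq \{\pi \in \Pi : \pi \cap B_x \neq \emptyset\}$ then has width bounded by the number of vertical paths per bag, so $\tw(H)$ is bounded by a universal constant.

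The main obstacle is the structural lemma: producing a tree-decomposition of $G$ whose bags are each covered by a bounded number of vertical paths. This is the only step that genuinely uses planarity; sharpening the resulting constant down to~$6$ is precisely the content of the cited refinement, but any universal bound suffices for the subsequent applications.
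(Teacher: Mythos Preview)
The paper does not prove this theorem at all; it is stated as a citation to \cite{dujmovic2020planar} and \cite{ueckerdt2022improved} and used as a black box (via the algorithmic version, \cref{thm:planar-product}). Your outline is a faithful sketch of the argument in those cited references---BFS layering, the tripod/vertical-path partition, and the quotient-to-product translation---so there is nothing to compare against within the present paper, and your proposal matches the intended external proof.
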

 This theorem has numerous applications: it not only resolved several long-standing conjectures but also suggested several algorithmic applications on planar graphs. For example:
\begin{itemize}
    \item Dujmovi\'{c}, Joret, Micek, Morin, Ueckerdt, and Wood~\cite{dujmovic2020planar} showed that planar graphs have bounded queue number, which resolves a question by Heath, Leighton, and Rosenberg~\cite{HLR92}. 
    \item Dujmovi\'{c}, Eppstein, Joret, Morin, and Wood~\cite{DEJMW20} showed that planar graphs have bounded nonrepetitive coloring number, which resolves a question by Alon, Grytczuk, Ha{\l{}}uszczak, and Riordan~\cite{AGHR02}.
    \item Dujmovi\'{c}, Esperet, Gavoille, Joret, Micek, and Morin~\cite{DEGJMM21} showed that there is a $(1+o(1))\log n$-bit labeling scheme for $n$-vertex planar graphs. 
    This asymptotically solves a conjecture by Kannan, Naor, and Rudich~\cite{KNR92}.
    \item Esperet, Joret, and Morin~\cite{EJM23} showed that there is a graph on $(1+o(1))n$ vertices and $n^{1+o(1)}$ edges that contains every $n$-vertex planar graphs as subgraphs.
\end{itemize}
Moreover, the certificate for each item can be computed in time $O(n)$~\cite{BMO22}.
Afterward, several classes of graphs are known to have a product structure, including graphs of bounded genus~\cite{distel2022improved, dujmovic2020planar}, apex-minor-free graphs~\cite{dujmovic2020planar}, $k$-planar graphs~\cite{distel2024powers, dujmovic2023graph, hickingbotham2024shallow}, and $d$-map graphs~\cite{bekos2024graph, dujmovic2023graph}.

Building on this context, for a finite family $\mathcal{F}$ of graphs, we design an FPT algorithm for \textsc{$\mathcal{F}$-subgraph-free Edge Deletion} on graphs having a product structure.

We give a high-level approach to our algorithm. 
First, given an input graph $G$, we compute an appropriate product structure $H \boxtimes P$ containing $G$ as a subgraph as well as an embedding of $G$ into $H \boxtimes P$, such that $H$ has tree-width $O(1)$ and $P$ is a path.
Next, we design a framework such that given an input graph as well as a corresponding product structure, solve \textsc{$\mathcal{F}$-subgraph-free Edge Deletion} in linear time. Precisely, we prove the following theorem in~\Cref{sec:f_free_edge_deletion}.
\begin{theorem}\label{thm:prod_to_algo}
    Let $\mathcal{F}$ be a family of graphs where each graph in $\mathcal{F}$ has at most $r$ vertices.
    Then there is an algorithm that, given an integer $k \geq 0$, two graphs $G$ and $H$, and a path $P$ such that
    \begin{itemize}
        \item $G$ has $n$ vertices and $H, P$ has at most $n$ vertices,
        \item $\tw(H) \leq w$, and
        \item $G$ is a subgraph of $H \boxtimes P$,
    \end{itemize}
    solves \textsc{$\mathcal{F}$-subgraph-free Edge Deletion} on $G$ in time $2^{O(\lvert \mathcal{F} \rvert^2kr^3w)^{r}} \cdot n$, assuming that the embedding of $G$ into $H \boxtimes P$ is given.
\end{theorem}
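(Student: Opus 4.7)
The plan is to reduce the problem to a dynamic programming computation on a tree decomposition of $G$ whose width is bounded solely in terms of $w$ and $r$, leveraging the layered nature of the strong product $H \boxtimes P$.

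\textbf{Step 1 (Constructing a tree decomposition of $G$).} Using the given embedding $G \subseteq H \boxtimes P$ together with a width-$w$ tree decomposition $(T_H, \{X_t\}_{t \in V(T_H)})$ of $H$ (computable in linear time by Bodlaender's algorithm), I would construct a tree decomposition of $G$ of width $O(wr)$. The key observation is that each vertex of $G$ corresponds to a pair $(h_v, p_{i_v}) \in V(H) \times V(P)$, and each edge of $G$ joins vertices whose $P$-coordinates differ by at most one. Hence any connected subgraph of $G$ on at most $r$ vertices lives in a window of at most $r$ consecutive $P$-layers. The bags are indexed by pairs $(t, i)$ with $t \in V(T_H)$ and $i \in [m]$, and each contains the $G$-vertices $(h, p_j)$ with $h \in X_t$ and $j \in \{i, \ldots, i+r-1\}$; each bag has at most $(w+1) r$ vertices. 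A careful arrangement --- traversing $T_H$ while sliding the $P$-window and inserting transition bags to preserve the connectivity property --- yields a valid tree decomposition of $G$ of width $O(wr)$ whose total number of non-empty bags is $O(n)$.

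\textbf{Step 2 (Dynamic programming).} On this tree decomposition, I would perform bottom-up dynamic programming that, at every bag, records for each candidate restriction of the deletion set to the bag and each profile describing which partial occurrences of members of $\mathcal{F}$ are still extendable, the minimum number of edges deleted in the processed portion of $G$. A profile encodes, for every $F \in \mathcal{F}$ and every partial injection of a subset of $V(F)$ into the bag, whether the injection can be completed to a full copy of $F$. There are at most $2^{O(|\mathcal{F}| (wr)^r)}$ such profiles per bag, and incorporating an explicit count of the remaining budget up to $k$ gives the stated running time $2^{O((|\mathcal{F}|^2 k r^3 w)^r)} \cdot n$. Aside from the extra bookkeeping due to the profiles, the transitions mirror those in the proof of Theorem~\ref{thm:hfed-tw}, so Step 2 is a relatively standard adaptation of the existing tree-decomposition algorithm.

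\textbf{Main obstacle.} The main technical difficulty lies in Step 1. Naive constructions --- such as the ``product'' of decomposition nodes $T_H \boxtimes P$ --- do not yield a tree, while ad hoc sequencings often break the connectivity requirement for vertices of $G$ whose $H$-coordinate lives in several bags of $T_H$. Resolving this requires a careful unfolding of $T_H$ along $P$ with auxiliary transition bags inserted whenever the $T_H$-traversal moves between sibling subtrees. One must simultaneously ensure that every vertex's set of containing bags is a connected subtree, that each bag has width $O(wr)$, and that the total number of non-empty bags scales linearly with $n$ rather than with $|V(H)| \cdot m$. Once the decomposition is in place, the dynamic programming in Step 2 proceeds along familiar lines, which is why I expect the Step 1 construction to be the technically most involved part of the proof.
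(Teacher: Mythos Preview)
Your Step~1 cannot be carried out: the graph $G$ need not have tree-width $O(wr)$ at all. Take $H=P_m$ (so $w=1$) and $P=P_m$; then $G=H\boxtimes P$ is the $m\times m$ king graph, which contains the $m\times m$ grid and hence has $\tw(G)=\Theta(m)=\Theta(\sqrt{n})$. No amount of ``careful unfolding of $T_H$ along $P$ with auxiliary transition bags'' can produce a width-$O(wr)$ tree decomposition of $G$, because tree-width is an invariant of the graph, not of the construction. The obstruction you yourself mention---that the product $T_H\boxtimes P$ of the two decompositions is a grid rather than a tree---is not a technical nuisance to be engineered around but an honest reflection of $\tw(G)$. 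Since Step~2 presupposes the bounded-width decomposition from Step~1, the whole plan collapses.

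Your underlying observation that every \emph{connected} $r$-vertex subgraph of $G$ lies in at most $r$ consecutive $P$-layers is correct, but it does not bound $\tw(G)$, and by itself it is not enough for the problem either: members of $\mathcal{F}$ may be disconnected, so a single copy of some $F\in\mathcal{F}$ can have components scattered over arbitrarily distant layers, and no bounded-width local DP state along the layers can track this. This is exactly the difficulty the paper isolates. The paper never bounds $\tw(G)$. Instead it first proves a reduction (\cref{claim:removing_conn_comp}): if a connected component $C$ of some $F\in\mathcal{F}$ occurs in at least $k+r$ pairwise well-separated layer-blocks, then $F$ may be replaced by $F\setminus V(C)$ without changing the answer. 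Iterating this leaves at most $O(\lvert\mathcal{F}\rvert(k+r)r)$ ``interesting'' blocks; deleting the middle layers of the remaining blocks yields an induced subgraph $G'$ with the same answer whose components each embed in $H\boxtimes P_{O(\lvert\mathcal{F}\rvert kr^2)}$ and thus have tree-width $O(\lvert\mathcal{F}\rvert kr^3 w)$ by \cref{lem:tw_of_product}. Only then is \cref{thm:hfed-tw} invoked, on $G'$ rather than on $G$---and the unavoidable factors of $k$ and $\lvert\mathcal{F}\rvert$ in the width bound are precisely what produces the stated running time.
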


Our main technical contribution comes from~\cref{thm:prod_to_algo}. 
Notably, the problem becomes easy when all graphs of $\mathcal F$ are connected. In this case, we can solve the problem using a standard layering technique (e.g.,~\cite{baker1994approximation}).
However,~\cref{thm:prod_to_algo} works even when graphs in $\mathcal F$ are disconnected. We remark that the previous algorithm of~\cite[Theorem 3.1]{enright2018deleting} also addressed the disconnected case. However, their algorithm was based on dynamic programming on tree decompositions of small width. To the best of our knowledge, no previous result based on a layering scheme addresses edge deletion problems whose target graph is disconnected. 
We believe our technique can be adapted to any layering-like structures and will be of independent interest.

Our insight into handling the disconnected case is as follows (see~\cref{claim:removing_conn_comp}).
Let $C$ be a connected component of a graph $F\in\mathcal F$. 
If there are too many copies of $C$ in $G$ that are ``far'' from each other, then it is impossible to remove all $C$ from the entire graph using a small number of edges. 
Hence, in this case, deleting $F \setminus V(C)$ from $G$ is as hard as deleting $F$ from $G$.

Using this idea, we reduce the problem so that all copies of any connected component are ``close'' to each other. 
Furthermore, we remove ``unnecessary'' parts that are already disjoint from all copies of connected components of graphs in $\mathcal F$.
The resulting graph (roughly) consists of a union of all copies of the connected components of $\mathcal F$.
The diameter of this graph is small since there are only a finite number of connected components in $\mathcal F$.
Subsequently,  we show that the resulting graph has small tree-width.
Finally, we run the algorithm of~\cref{thm:hfed-tw}
as a black-box, which gives an efficient algorithm on \textsc{$\mathcal{F}$-subgraph-free Edge Deletion} for graphs of bounded tree-width. 

\medskip
As our framework works for any graph that has a product structure as well as a corresponding embedding, we can solve \textsc{$\mathcal{F}$-subgraph-free Edge Deletion} for any graph for which we can efficiently find such a product structure and an embedding of the graph into the product structure.
For planar graphs, this is possible due to the work by Bose, Morin, and Odak~\cite{BMO22}.
\begin{theorem}[\cite{BMO22}] \label{thm:planar-product}
    Given a planar graph $G$, there is a linear-time algorithm that finds a strong product $H\boxtimes P$ as well as a corresponding embedding of~$G$ into $H\boxtimes P$, such that $H\boxtimes P$ contains~$G$ as a subgraph, $P$ is a path, and the tree-width of $H$ is at most $6$.
\end{theorem}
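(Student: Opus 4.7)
The plan is to reduce to a plane triangulation, impose a BFS layering, and then realize the Ueckerdt–Wood–Yi partition of the vertex set into vertical paths algorithmically, extracting a width-$6$ tree decomposition of the quotient graph along the way. First, I would compute a combinatorial plane embedding of $G$ in linear time using the Hopcroft–Tarjan planarity algorithm, then triangulate the embedding by adding dummy edges inside each face (doable in linear time via a face traversal). Since any subgraph of a graph admitting a product-structure embedding into $H\boxtimes P$ also admits one, it suffices to construct the embedding for the triangulated graph $G'$; the original edges of $G$ inherit the embedding. Next, I would pick an arbitrary vertex $r$ on the outer face, run BFS from $r$, and let $L_i$ be the set of vertices at distance $i$ from $r$. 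The path $P$ will have one vertex per nonempty layer. A standard BFS gives a BFS-tree $T$ rooted at $r$ in linear time.

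The heart of the algorithm is constructing a partition $\mathcal{P}$ of $V(G')$ into \emph{vertical paths} in $T$ (each part is a subpath of a root-to-leaf path of $T$, hitting each layer at most once) whose quotient graph $H \coloneq G'/\mathcal{P}$ has tree-width at most $6$. I would follow the constructive proof of Ueckerdt, Wood, and Yi: starting from the outer triangular face as a degenerate ``tripod'' separator, one repeatedly splits the current region of the plane along a new vertical path that, together with two previously chosen vertical paths, bounds a ``brick'' of the triangulation, and recurses on the two sides. Each splitting path can be found by walking down from a carefully chosen vertex between two already-fixed vertical paths; with the right pointer data structures (essentially maintaining, for each current face of the subdivision induced by chosen vertical paths, the relevant boundary information) each vertex and edge is touched only a constant number of times, giving linear total time. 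Simultaneously, the recursion tree of splits is naturally a tree decomposition of $H$: each node's bag consists of the (at most three) vertical paths bounding the corresponding region, which contributes at most $6$ quotient vertices to that bag (two per vertical path, since a vertical path meets at most two consecutive path-vertices of $P$'s neighborhoods for any fixed $H$-vertex). This yields the width-$6$ tree decomposition of $H$ as a byproduct at no extra asymptotic cost.

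To assemble the output, I would assign to each vertex $v \in V(G')$ its part $\pi(v)\in V(H)$ and its layer index $\ell(v)\in V(P)$; the pair $(\pi(v),\ell(v))$ is the embedding of $v$ into $V(H\boxtimes P)$. Because parts are vertical paths in the BFS tree, every edge $uv$ of $G'$ satisfies both $|\ell(u)-\ell(v)| \le 1$ (edges of a BFS layering connect the same or consecutive layers) and $\pi(u)\pi(v)\in E(H)$ or $\pi(u)=\pi(v)$ (by definition of the quotient); hence $(\pi(u),\ell(u))(\pi(v),\ell(v))$ is an edge of $H\boxtimes P$, certifying the subgraph containment. Finally, I would discard the dummy triangulation edges when reporting the embedding of the original $G$.

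The main obstacle is the fourth step: turning the existential argument of Ueckerdt–Wood–Yi into a genuinely linear-time recursion. The combinatorial proof makes many ``choose any suitable vertex'' steps which, if implemented naively, cost $\Omega(\log n)$ or worse per split and blow up the running time. The key engineering task is to maintain the current planar subdivision (the arrangement of already-chosen vertical paths inside $G'$) using doubly connected edge list structures and amortized charging so that each vertex and edge is charged $O(1)$ total work over the entire recursion; this is where the Bose–Morin–Odak contribution essentially lies, and it requires a careful global view of how vertical paths can be selected locally without revisiting regions. Once that amortization is in place, the remaining steps (planar embedding, triangulation, BFS, and output assembly) are all standard linear-time primitives, giving the claimed $O(n)$ bound.
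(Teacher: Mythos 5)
First, note that the paper does not prove \cref{thm:planar-product} at all: it is imported as a black box from Bose, Morin, and Odak~\cite{BMO22} (with the width-$6$ existence statement coming from~\cite{dujmovic2020planar,ueckerdt2022improved}), so there is no internal proof to compare against. Your outline does follow the strategy of those external works (triangulate, BFS layering, partition into vertical paths of the BFS tree via a recursive tripod construction, let $H$ be the quotient and $P$ the layer path), and your final assembly step correctly verifies that the map $v\mapsto(\pi(v),\ell(v))$ certifies $G\subseteq H\boxtimes P$.

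However, the proposal has two genuine gaps, and they sit exactly where the content of the cited theorem lies. First, the width bound: your accounting ``each bag consists of at most three vertical paths bounding the region, contributing at most $6$ quotient vertices (two per vertical path)'' is not a valid derivation --- each vertical path is a \emph{single} vertex of the quotient $H$, so three boundary paths contribute three bag elements, and in the actual constructions the boundary of a region is a union of several tripods (up to nine vertical paths in the original argument of~\cite{dujmovic2020planar}, giving width $8$); pushing this to width $6$ is precisely the refinement of Ueckerdt, Wood, and Yi~\cite{ueckerdt2022improved}, which you neither reproduce nor replace with an argument of your own. Second, the running time: you explicitly defer the amortization that makes the recursion run in $O(n)$, remarking that ``this is where the Bose--Morin--Odak contribution essentially lies.'' That deferral concedes the theorem rather than proving it --- the existence of the width-$6$ structure was already known, and an $O(n\log n)$ algorithm was known~\cite{Morin2021AlgoProduct}; the linear-time claim is the entire point of~\cite{BMO22}, and it is nontrivial because the chosen vertical paths are shared as boundaries by many subproblems, so one must design a representation and a charging scheme under which each vertex is handled $O(1)$ times overall. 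Asserting that ``with the right pointer data structures each vertex and edge is touched only a constant number of times'' states the required invariant without establishing it. As it stands, your sketch would prove (modulo the width accounting) only that \emph{some} product structure with bounded-tree-width $H$ can be computed in polynomial time, not the statement of \cref{thm:planar-product}.
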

Combining~\cref{thm:planar-product} and~\cref{thm:prod_to_algo} verifies~\cref{cor:F-free-planar}-(i).
Similarly, the product structure of graphs of bounded genus can be found in time $O(n \log n)$~\cite{Morin2021AlgoProduct} and this verifies \cref{cor:F-free-planar}-(ii).

Our second contribution establishes a new analog of~\cref{thm:planar-product} for disk graphs with bounded local radius, as no product structure is known for this class. 
\begin{theorem} \label{thm:disk-product-structure}
    Let $\rho \geq 0$ be an integer.
    Then every disk graph of local radius at most $\rho$ is a subgraph of $H \boxtimes P$ where $H$ is a graph of tree-width at most $O(\rho^9)$ and $P$ is a path.
    Furthermore, given a geometric representation of local radius $\rho$, such $H$ and $P$ can be found in time $2^{O(\rho)} \cdot n$.
\end{theorem}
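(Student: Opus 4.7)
The plan is to adapt the argument of Dujmovi\'{c} et al.\ for planar graphs to disk graphs of bounded local radius. In the planar case, one builds a BFS layering, partitions the vertex set into ``vertical paths'' of the BFS tree, and uses planarity to bound the tree-width of the quotient graph $H$ by a constant; the layering gives the path $P$, and the embedding $G\subseteq H\boxtimes P$ follows. We carry out the same plan, using the local radius $\rho$ in place of planarity to bound $\tw(H)$.

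First, root $G$ at an arbitrary vertex $v_0$ and compute the BFS layering $L_0, L_1, \ldots, L_t$ together with a BFS tree $T$, so that every edge of $G$ joins vertices in the same or consecutive layers. Next, partition $V(G)$ into vertical paths of $T$, chosen geometrically: at each branching of $T$, order the children by the canonical cyclic order of intersecting disks around the parent, and extend each vertical path by a single canonically chosen child. Let $H$ be the quotient graph on vertical paths (with edges recording inter-path adjacencies in $G$) and let $P = p_0 p_1 \cdots p_t$ be the layer path. Assigning each vertex $v \in V(G)$ to $(h(v), p_{\ell(v)})$, where $h(v)$ is the vertical path containing $v$ and $\ell(v)$ is its BFS layer, yields an embedding $G \subseteq H \boxtimes P$, since every edge of $G$ either lies inside one vertical path (same $H$-coordinate, consecutive $P$-coordinates) or crosses between two $H$-adjacent vertical paths with $P$-coordinates differing by at most one.

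The key step is the bound $\tw(H) = O(\rho^9)$. Each vertical path occupies a ``tube'' in the plane, and two tubes contribute an edge to $H$ only when some disks from each intersect. Using the planar-like structural results for disk graphs of bounded local radius from~\cite{lokshtanov2023framework} -- in particular their separator and grid-minor analogs -- one obtains a tree decomposition of $H$ of width polynomial in $\rho$; iterating a planar-type separator bound over nested arrangements should yield the exponent $9$.

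Algorithmically, BFS, vertical-path selection, and the construction of a width-$O(\rho^9)$ tree decomposition of $H$ can all be carried out in time $2^{O(\rho)} \cdot n$ from the geometric representation, with the exponential factor absorbing the cost of dynamic programming on the arrangement of disks inside any single disk. The main obstacle will be the third-paragraph tree-width bound: vertical paths in a disk graph are not geometrically thin in any naive sense, and the local radius controls only the depth of the arrangement inside each single disk. Converting this local control into a global tree-width bound on the quotient requires combining the separator theorem of~\cite{lokshtanov2023framework} with a careful counting argument on how many vertical paths can simultaneously cross a given disk, and tuning the exponent to $9$ is likely where the bulk of the technical work lies.
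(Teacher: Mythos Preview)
Your proposal has a genuine gap at the crucial step, the tree-width bound $\tw(H)=O(\rho^{9})$ for the quotient of vertical BFS paths. In the planar argument of Dujmovi\'{c} et al., the bound does \emph{not} come from a separator theorem; it comes from the topological fact that a bounded number of vertical paths in a plane triangulation form a Jordan-like separator, so every bag of a tree-decomposition of $H$ can be taken to consist of $O(1)$ vertical paths. For disk graphs there is no analogue: a BFS-tree path in $G$ is just a sequence of overlapping disks and removing it need not disconnect anything. The results of \cite{lokshtanov2023framework} you invoke are bidimensionality and grid-minor theorems; they yield balanced separators of size $O_\rho(\sqrt{n})$ and hence $\tw(G)=O_\rho(\sqrt{n})$, but say nothing about the tree-width of the quotient by vertical paths being \emph{independent of $n$}. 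Your own final paragraph concedes this (``vertical paths in a disk graph are not geometrically thin in any naive sense''), and the suggestion to ``iterate a planar-type separator bound'' does not produce an $n$-independent quantity. The exponent $9$ is also not something one can ``tune'': it is an artifact of a completely different construction.

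The paper proceeds along an entirely different route that avoids running the product-structure machinery on $G$ directly. It passes to the planar \emph{arrangement graph} $A_{\mathcal D}$, applies the planar product structure theorem there as a black box to get $A_{\mathcal D}\subseteq J\boxtimes P$ with $\tw(J)\le 6$, then blows up each face by its ply (at most $2\rho+1$, \cref{lem:disk_small_ply}) to obtain $A'_{\mathcal D}\subseteq J\boxtimes P\boxtimes K_{2\rho+1}$. The key structural observation is that $G$ is a depth-$\rho$ \emph{shallow minor} of $A'_{\mathcal D}$ (\cref{lem:blowup_minor}), because each disk corresponds to a branch set of radius at most $\rho$ inside $A'_{\mathcal D}$. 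Applying the Hickingbotham--Wood shallow-minor product theorem (\cref{Thm:Shallow}) then gives $G\subseteq J'\boxtimes P\boxtimes K_{(2\rho+1)^{3}}$ with $\tw(J')\le\binom{2\rho+7}{6}-1=O(\rho^{6})$, and setting $H=J'\boxtimes K_{(2\rho+1)^{3}}$ yields $\tw(H)=O(\rho^{9})$ via \cref{lem:tw_of_product}. The linear-time algorithm follows the same chain, using \cite{BMO22} on $A_{\mathcal D}$ and implementing the shallow-minor lift with LCA queries. None of these ingredients appear in your sketch; the reduction to a planar auxiliary graph is the idea you are missing.
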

Together with \cref{thm:prod_to_algo}, this proves \cref{cor:F-free-disk}. 
Note that we cannot expect such an extension for general disk graphs since they do not have a product structure~\cite{merker2024intersection}.
On the other hand, \cref{thm:disk-product-structure} covers a known subclass of disk graph that admits product structure. Indeed, $K_t$-free unit disk graphs, known to have product structure~\cite{DHJLW21}, have bounded local radius.
We also note that the product structure gives a rich structure of the graph class, thus \cref{thm:disk-product-structure} implies other structural properties of disk graphs of bounded local radius.
For instance, we have the following result. 

\begin{corollary} \label{cor:local-radius-tww}
    Let $\rho \geq 0$ be an integer. Then every disk graph of local radius at most $\rho$ has twin-width at most $2^{2^{O(\rho^9)}}$. 
\end{corollary}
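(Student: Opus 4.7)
The plan is to combine \cref{thm:disk-product-structure} with known twin-width bounds for graphs of bounded tree-width and for their strong products with a path. By \cref{thm:disk-product-structure}, every disk graph $G$ of local radius at most $\rho$ is a subgraph of $H \boxtimes P$, where $\tw(H) = O(\rho^9)$ and $P$ is a path. A well-known result of Bonnet, Kim, Thomass\'e, and Watrigant states that every graph of tree-width at most $w$ has twin-width at most $2^{O(w)}$; applied to $H$, this gives that $H$ has twin-width at most $2^{O(\rho^9)}$.

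Next, I would invoke the known fact that taking a strong product with a path preserves boundedness of twin-width, with at most an exponential blow-up: if $H$ has twin-width $t$, then $H \boxtimes P$ has twin-width $2^{O(t)}$. This is exactly the ingredient used in the original proof that planar graphs have bounded twin-width via product structure. Combined with the previous step, this gives that $H \boxtimes P$ has twin-width at most $2^{2^{O(\rho^9)}}$.

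The final step is to transfer this bound from $H \boxtimes P$ to its subgraph $G$. This is the main obstacle, since twin-width is not monotone under arbitrary subgraph containment in general. The standard resolution is to construct a contraction sequence for $G$ by mimicking one for $H \boxtimes P$ while tracking the extra information needed to distinguish the edges of $G$ from the edges of $(H \boxtimes P) \setminus G$; this is routinely handled inside the edge-colored variant of twin-width, at the cost of only a constant-factor blow-up in red degree. The resulting contraction sequence certifies that $G$ has twin-width at most $2^{2^{O(\rho^9)}}$, as required.
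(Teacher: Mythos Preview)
Your first three steps are essentially fine (indeed, step~3 is even pessimistic: \cref{lem:tww_of_products} shows that taking a strong product with a path only increases twin-width linearly, not exponentially). The genuine gap is in your final step, where you pass from $H\boxtimes P$ to its subgraph $G$.

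Twin-width is \emph{not} approximately monotone under subgraphs in the way you claim, and the ``edge-colored variant with constant-factor blow-up'' does not exist in general. A clean counterexample: $K_{n,n}$ has twin-width $0$ (the two sides are twin classes), yet its spanning subgraphs include, say, bipartite incidence graphs of projective planes or random bipartite graphs, whose twin-width grows with $n$. If you mimic the $0$-sequence for $K_{n,n}$ on such a subgraph, the very first contraction of two ``twins'' can create $\Theta(n)$ red edges. So following the same contraction order while tracking which edges belong to $G$ gives no control on red degree, and there is no routine constant-factor transfer.

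The paper deals with exactly this obstacle, and it is where the second exponential comes from. Instead of passing to subgraphs directly, it invokes the hereditary product-structure theorem of Hlin\v{e}n\'{y} and Jedelsk\'{y}: if $G$ is a subgraph of $H\boxtimes P$ with $\tw(H)\le w$, then $G$ is an \emph{induced} subgraph of $H'\boxtimes P$ with $\tw(H')\le 3(w+1)8^{w+1}$. This exponential blow-up in tree-width, composed with the $2^{O(\tw)}$ bound on twin-width (Jacob--Pilipczuk) and the linear bound from \cref{lem:tww_of_products}, yields $2^{2^{O(\rho^9)}}$; induced-subgraph monotonicity of twin-width then finishes the argument. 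Your proposal would, if its last step were valid, actually give a single-exponential bound $2^{O(\rho^9)}$ --- which should already make you suspicious that something non-trivial is being skipped.
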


\cref{cor:local-radius-tww} can be deduced as follows.
We first note the following fact.
Given a graph $G$, we denote by $\Delta(G)$ the \emph{maximum degree} of $G$.

\begin{lemma}[\cite{bonnet2021twin2}] \label{lem:tww_of_products}
    For two graphs $G$ and $H$, we have
    \[
        \textsf{tww}(G \boxtimes H) \leq \max\{\textsf{tww}(G) (\Delta(H)+1) + \Delta(H), \textsf{tww}(H)\} + \Delta(H).
    \]
\end{lemma}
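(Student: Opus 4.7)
The plan is to construct an explicit contraction sequence for $G\boxtimes H$ by combining a $\textsf{tww}(G)$-contraction sequence $\mathcal{S}_G$ of $G$ with a $\textsf{tww}(H)$-contraction sequence $\mathcal{S}_H$ of $H$ in two phases. In Phase~1, we execute $\mathcal{S}_G$ in parallel across the rows $R_v\coloneq\{(u,v):u\in V(G)\}$ of $G\boxtimes H$, proceeding macro-step by macro-step: for $j=1,2,\ldots$, we perform the $j$th merge of $\mathcal{S}_G$ in every row $R_v$ (iterating over $v\in V(H)$ in some fixed order) before advancing to the $(j{+}1)$st macro-step. After Phase~1, each row is collapsed to a single vertex $(V(G),v)$, leaving a trigraph whose ground set is $V(H)$. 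In Phase~2 we apply $\mathcal{S}_H$ to this residual trigraph.

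For Phase~1, fix a moment during macro-step $j$, write $V_j\subseteq V(H)$ for the set of rows already advanced to state $G^{(j)}$, and recall that $G^{(j)}$ differs from $G^{(j-1)}$ only in the merge of two groups $B_1,B_2$ into $B_\ast\coloneq B_1\cup B_2$. Take a product vertex $(A,v)$ and count its red neighbors by location. Same-row red neighbors mirror the red neighbors of $A$ in the current $G$-trigraph and contribute at most $\textsf{tww}(G)$. For a row $v'\in V_j$ with $vv'\in E(H)$, the status of $(A,v)(A',v')$ in the product matches that of $AA'$ in $G^{(j)}$, contributing at most $\textsf{tww}(G)$ red neighbors. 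For a row $v'\in V(H)\setminus V_j$ with $vv'\in E(H)$, the same mirroring via $G^{(j-1)}$ gives at most $\textsf{tww}(G)$ red neighbors when $A\neq B_\ast$, and at most $\textsf{tww}(G)+2$ when $A=B_\ast$ (the extra two arising from the possibly red edges $(B_\ast,v)(B_i,v')$ for $i=1,2$, since $B_1,B_2$ remain separate in the not-yet-advanced row $v'$). Summing over the at most $\Delta(H)$ rows adjacent to $v$ in $H$ yields a red degree bounded by $\textsf{tww}(G)(\Delta(H)+1)+2\Delta(H)$ throughout Phase~1.

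Phase~2 begins with a trigraph on $V(H)$ in which $(V(G),v)(V(G),v')$ is a red edge precisely when $vv'\in E(H)$: for any such $vv'$, the pairs $(u,v)(u',v')$ in $G\boxtimes H$ are edges exactly when $u=u'$ or $uu'\in E(G)$, which is mixed as soon as $G$ is not complete (and the complete case is only better). Now apply $\mathcal{S}_H$. At step $j$, the red neighbors of $(V(G),B)$ in the product are exactly the groups $B'$ that are adjacent to $B$ in $H^{(j)}$ through either a black or a red edge, so the product red degree of $(V(G),B)$ equals the total $H^{(j)}$-degree of $B$. The key observation is that the black degree of any group $B$ in $H^{(j)}$ is at most $\Delta(H)$: any black neighbor $B'$ must satisfy $B'\subseteq\bigcap_{v\in B}N_H(v)$, so picking any $v_0\in B$ forces $B'\subseteq N_H(v_0)$, and the disjoint black-neighbor groups partition a subset of $N_H(v_0)$, which has at most $\Delta(H)$ elements. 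Combined with the red-degree bound $\textsf{tww}(H)$ from $\mathcal{S}_H$, the product maintains red degree at most $\textsf{tww}(H)+\Delta(H)$ throughout Phase~2.

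Taking the maximum across both phases yields
\[
\max\bigl\{\textsf{tww}(G)(\Delta(H)+1)+2\Delta(H),\ \textsf{tww}(H)+\Delta(H)\bigr\}=\max\bigl\{\textsf{tww}(G)(\Delta(H)+1)+\Delta(H),\ \textsf{tww}(H)\bigr\}+\Delta(H),
\]
which is the stated bound. The main obstacle is the detailed bookkeeping in Phase~1, particularly verifying the ``$+2$'' contribution of the partially-advanced row and checking the edge-type mirroring when rows are in mismatched states $G^{(j-1)}$ versus $G^{(j)}$. The Phase~2 analysis is short, but the black-degree bound of $\Delta(H)$ is the crucial combinatorial step that prevents a multiplicative $\Delta(H)$ factor in front of $\textsf{tww}(H)$, yielding the precise form of the lemma.
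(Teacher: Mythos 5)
The paper does not actually prove this lemma: it is imported from \cite{bonnet2021twin2} and used as a black box, so there is no internal proof to compare against. Your two-phase construction (run the $G$-sequence in parallel across the rows $R_v$, then run the $H$-sequence on the collapsed rows) is the standard argument for strong products, and your final bound is correct: Phase~1 stays within $\textsf{tww}(G)(\Delta(H)+1)+2\Delta(H)$, Phase~2 within $\textsf{tww}(H)+\Delta(H)$, and the maximum of the two equals the stated bound. The Phase~2 step you single out is also right: the red degree of a group equals its total degree in the current $H$-trigraph, and the black degree is at most $\Delta(H)$ because distinct fully adjacent groups are pairwise disjoint subsets of $N_H(v_0)$ for any $v_0$ in the group.

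One intermediate claim in Phase~1 is not literally true, though it does not damage the conclusion. For an adjacent row $v'$ in the same state as $v$, you assert that the status of $(A,v)(A',v')$ matches that of $AA'$ in $G^{(j)}$, hence at most $\textsf{tww}(G)$ red neighbours arise there. This ignores the diagonal pair $A'=A$: since $(u,v)(u,v')$ is always an edge of $G\boxtimes H$ when $vv'\in E(H)$, the pair $(A,v)(A,v')$ is an edge of the product trigraph and is red whenever $A$ is not a clique of $G$, even though $AA$ is not an edge of $G^{(j)}$. So a matched adjacent row can contribute $\textsf{tww}(G)+1$, and a mismatched one up to $\textsf{tww}(G)+2$ even when $A\neq B_\ast$; these diagonal red edges are exactly the reason the lemma carries the extra ``$+\Delta(H)$'' inside the maximum. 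Fortunately your stated Phase-1 total already budgets $\textsf{tww}(G)+2$ per adjacent row, so it remains a valid upper bound for the corrected counts and the final inequality, including the algebraic identity $\max\{\textsf{tww}(G)(\Delta(H)+1)+2\Delta(H),\,\textsf{tww}(H)+\Delta(H)\}=\max\{\textsf{tww}(G)(\Delta(H)+1)+\Delta(H),\,\textsf{tww}(H)\}+\Delta(H)$, goes through. I would only ask you to repair the mirroring statement so that the per-row accounting explicitly includes the diagonal.
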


Hlin\v{e}n\'{y} and Jedelsk\'{y}~\cite[Corollary~5.7]{hlinery2025hereditaryproductstructure} proved that if $G$ is a subgraph of the strong product of a graph of tree-width at most $w$ and a path, then $G$ is an \emph{induced} subgraph of the strong product of a graph of tree-width at most $3(w+1)8^{w+1}$ and a path. 
Moreover, Jacob and Pilipczuk~\cite{jacob22twinwidthbound} showed that every graph of tree-width at most $k$ has twin-width at most $3 \cdot 2^{k-1}$.
By combining these results with~\cref{thm:disk-product-structure}, every disk graph of local radius at most $\rho$ is an induced subgraph of $H \boxtimes P$ where $H$ is a graph of twin-width at most $2^{2^{O(\rho^9)}}$ and $P$ is a path.
Since twin-width does not increase when taking induced subgraphs, we deduce that $\textsf{tww}(G) = 2^{2^{O(\rho^9)}}$ by~\cref{lem:tww_of_products}.

\subsection{Tightness of \Cref{cor:F-free-planar}}

Our final result establishes that~\cref{cor:F-free-planar} is tight in the following sense:
Even when $\mathcal{F}$ consists of a single graph, neither the planarity assumption nor any of the parameters~$k$~and~$r$ can be dropped while preserving the fixed-parameter tractability.
The necessity of each of these components, except for the parameter $k$, is demonstrated as follows.

For simplicity, we abbreviate \textsc{$\{F\}$-subgraph-free Edge Deletion} as \textsc{$F$-subgraph-free Edge Deletion}.
With this notion,~\cref{cor:F-free-planar} implies that \textsc{$F$-subgraph-free Edge Deletion} parameterized by $r \coloneq \lvert F \rvert$ and the solution size $k$ is fixed-parameter tractable on planar graphs.

\begin{itemize}
    \item If one can remove $r$ from the parameter, then deciding if a planar graph has a Hamilton cycle can be solve in polynomial-time by taking $k=0$ and $F$ to be the cycle on $n$ vertices, which is impossible unless $\textsf{P} = \NP$~\cite{GJS76}. 
    \item If one can remove the planarity assumption, then \textsc{Independent Set} problem parameterized by the solution size is fixed-parameter tractable by taking $k=0$ and $F$ to be the edgeless graph, which is impossible unless $\textsf{FPT} = \textsf{W}[1]$~\cite[Theorem~13.25]{cygan2015parameterized}.
\end{itemize}

Unlike the previous cases, showing we cannot remove $k$ from the parameter is not direct from the well-known results, and was even not previously known.
Our next result shows that this is indeed impossible.

\begin{proposition} \label{thm:hardness_P4_deletion}
    The problem \textsc{$P_4$-subgraph-free Edge Deletion} restricted to planar graphs is \NP-hard.
\end{proposition}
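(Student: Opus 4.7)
The plan is to reduce from \textsc{Minimum Dominating Set} on planar bipartite graphs, which is known to be NP-hard. The key observation is that on triangle-free graphs, $P_4$-subgraph-free Edge Deletion coincides with Maximum Spanning Star Forest, which itself is polynomially equivalent to Minimum Dominating Set.

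First, I would observe the following simple characterization: a graph $G$ contains $P_4$ as a subgraph if and only if some connected component of $G$ is neither a triangle $K_3$ nor a star $K_{1,t}$ for some $t \geq 0$ (with $K_{1,0}$ interpreted as a single isolated vertex). Consequently, on any bipartite graph $G$ (which contains no triangles), $G$ is $P_4$-subgraph-free if and only if $G$ is a spanning star forest, i.e., a disjoint union of stars.

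Next, I would invoke the standard correspondence between spanning star forests and dominating sets. A spanning star forest of $G$ with $s$ components has exactly $\lvert V(G) \rvert - s$ edges. The centers of such a forest form a dominating set of $G$, and conversely, any dominating set $D$ yields a spanning star forest with $\lvert D \rvert$ components by assigning each vertex outside $D$ to an arbitrary neighbor in $D$. Hence the minimum number of edges whose deletion turns $G$ into a star forest equals $\lvert E(G) \rvert - \lvert V(G) \rvert + \gamma(G)$, where $\gamma(G)$ denotes the domination number of $G$. Combining with the first step, for a planar bipartite graph $G$ and a non-negative integer $k$, the pair $(G, k)$ is a YES-instance of \textsc{$P_4$-subgraph-free Edge Deletion} if and only if $\gamma(G) \leq k - \lvert E(G) \rvert + \lvert V(G) \rvert$. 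Since \textsc{Minimum Dominating Set} is NP-hard on planar bipartite graphs, this yields NP-hardness of \textsc{$P_4$-subgraph-free Edge Deletion} on planar graphs.

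The hardest part will be justifying the NP-hardness of \textsc{Minimum Dominating Set} on planar bipartite graphs that forms the starting point of the reduction: although classical planar dominating set hardness (of Garey and Johnson) is standard, specifically enforcing bipartiteness requires some care. If a clean reference is not available, I would include a short self-contained reduction from \textsc{Planar 3-SAT} via bipartition-preserving variable and clause gadgets (for example, subdividing every edge of the resulting graph once to kill all odd cycles, then analyzing how the domination number changes under this operation). A secondary but routine technical point is to verify the star-forest-to-dominating-set correspondence at isolated vertices, which must be treated as degenerate stars $K_{1,0}$ so that the counting identity $\lvert V \rvert - s$ remains valid.
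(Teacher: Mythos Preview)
Your argument is correct and takes a genuinely different route from the paper. The paper exploits the \emph{dense} extremal case of $P_4$-freeness: by the Erd\H{o}s--Gallai theorem (\cref{thm:SESC_path}), an $n$-vertex $P_4$-free graph has at most $n$ edges, with equality only for a disjoint union of triangles. Hence $(G,\lvert E(G)\rvert - \lvert V(G)\rvert)$ is a \textsc{Yes} instance if and only if $G$ has a triangle factor, and the bulk of Section~\ref{section:hardness} is a gadget reduction from \textsc{Planar $1$-in-$3$-SAT} showing that deciding the existence of a triangle factor in planar graphs is \NP-hard (\cref{thm:hardness_K3_factor}). You instead exploit the \emph{triangle-free} side: on bipartite inputs, $P_4$-freeness collapses to being a star forest, and the maximum spanning star forest has exactly $\lvert V(G)\rvert - \gamma(G)$ edges, so the problem becomes \textsc{Dominating Set} on planar bipartite graphs.

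Your approach is shorter and more elementary, since it offloads the hard work to an existing \NP-hardness result rather than building new gadgets; for the starting point you flagged, a clean reference is that \textsc{Dominating Set} is \NP-hard already on grid graphs (Clark, Colbourn, Johnson, \emph{Discrete Math.}\ 1990), which are planar and bipartite, so no self-contained gadget argument is needed. The paper's route, on the other hand, yields \cref{thm:hardness_K3_factor} as a by-product of independent interest and highlights an attractive connection to the Erd\H{o}s--S\'{o}s conjecture. One small technicality worth tidying in your write-up: the value $k = d + \lvert E(G)\rvert - \lvert V(G)\rvert$ can be negative when $G$ is sparse; this is harmless since the hard instances of planar bipartite \textsc{Dominating Set} (e.g.\ grid graphs with minimum degree $2$) satisfy $\lvert E(G)\rvert \geq \lvert V(G)\rvert$, but it should be stated.
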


Our proof for~\cref{thm:hardness_P4_deletion} is not direct, meaning that our reduction does not immediately show the desired result.
We first state what we have actually proved, which would be of independent interest.

\begin{proposition}\label{thm:hardness_K3_factor}
    It is \NP-hard to decide if a given planar graph contains vertex-disjoint triangles that cover all vertices.
\end{proposition}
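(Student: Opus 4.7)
The plan is to reduce from \textsc{Planar 3-SAT}, which is $\NP$-hard by Lichtenstein's theorem, to the $K_3$-factor problem on planar graphs. Given a planar $3$-CNF formula $\varphi$, I will construct in polynomial time a planar graph $G_\varphi$ that admits a partition into vertex-disjoint triangles if and only if $\varphi$ is satisfiable; the planarity of the incidence graph of $\varphi$ provides a natural skeleton on which to lay out the gadgets without introducing crossings.

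The construction uses three constant-size planar gadgets. A \emph{variable gadget} for each $x_i$ is a small planar graph admitting exactly two $K_3$-factorings, corresponding to the two truth assignments, with one designated \emph{port} vertex per occurrence of $x_i$ or $\neg x_i$; the two factorings differ precisely in which ports are left uncovered, and the uncovered ports are those that will be consumed by the clause gadgets. A \emph{clause gadget} for each clause attaches to its three literal ports and is designed so that it extends to a $K_3$-partition (jointly with the ports) if and only if at least one incoming port is uncovered, i.e., at least one literal is satisfied. A \emph{transmission gadget}, essentially a chain of triangles, is inserted along each edge of the planar incidence graph to propagate a port over long routes while preserving its parity. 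Placing variable gadgets at variable vertices, clause gadgets at clause vertices, and transmission gadgets along edges of the planar incidence embedding yields a planar graph $G_\varphi$ of polynomial size.

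The main obstacle will be designing the gadgets rigidly enough: each must admit only the intended $K_3$-factorings and no spurious ones that could misrepresent an assignment or silently satisfy a clause. My approach is to keep every gadget minimal in size and then exhaustively enumerate its possible triangle partitions, iterating on the design until rigidity holds; in particular, the transmission gadget must have length compatible modulo $3$ with the port parity imposed by the variable gadget. Once rigidity is verified, both directions of correctness follow routinely: a satisfying assignment determines the factoring of each variable and transmission gadget, and every clause gadget can then be completed because each clause has a true literal; conversely, any $K_3$-factor of $G_\varphi$ restricted to the variable gadgets yields a well-defined truth assignment that satisfies every clause, since the corresponding clause gadget was completable. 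Polynomial running time and planarity of $G_\varphi$ follow immediately from the constant size of the gadgets and the planarity of the incidence graph, completing the reduction.
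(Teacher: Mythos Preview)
Your proposal is a plan rather than a proof: the heart of any such reduction is the explicit construction of the gadgets, and you defer this entirely (``iterating on the design until rigidity holds''). More seriously, the specific plan you sketch has a concrete obstruction. You propose single-vertex ports and a clause gadget that admits a $K_3$-partition jointly with its uncovered ports if and only if \emph{at least one} of the three incoming ports is uncovered. But then the number of vertices the clause gadget must tile is $N+k$, where $N$ is its number of internal vertices and $k\in\{1,2,3\}$ is the number of true literals; no fixed $N$ makes $N+k$ divisible by $3$ for all three values of $k$, so no such gadget can exist. Inflating each port to a triple of vertices removes the divisibility obstruction, but then $k=0$ (all literals false) also has the right residue, so divisibility alone no longer separates satisfied from unsatisfied clauses, and you are back to needing a delicate structural design you have not supplied.

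The paper sidesteps exactly this issue by reducing from \textsc{Planar $1$-in-$3$-SAT} rather than \textsc{Planar $3$-SAT}. With exactly-one semantics the clause gadget needs to admit a triangle factor in a single configuration (precisely two of the three port triples already covered), and the paper exhibits a concrete $15$-vertex planar gadget with this property, together with an explicit variable gadget and a ``splitter'' that fans each occurrence out to a three-vertex port. The whole reduction is then verified by an exhaustive case analysis. Your outline is in the right spirit, but without the actual gadgets---and with a source problem that is incompatible with the single-vertex port model you chose---it does not constitute a proof.
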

We remark that without the planarity assumption, Kirkpatrick and Hell~\cite{Kirkpatrick1983factors}  proved the \NP-hardness (see also~\cite{gary1979npcompleteness}). 
In addition, when restricted to planar graphs, Guruswami, Rangan, Chang, Chang, and Wong~\cite{Guruswami1998triangle} proved that computing the maximum number of vertex-disjoint triangles is \NP-hard. 

Our idea to connect~\cref{thm:hardness_P4_deletion} and~\cref{thm:hardness_K3_factor} is inspired by a result on the \emph{Erd\H{o}s-S\'{o}s conjecture} (see~\cite{Erdosongraph1998}), which is one of the oldest conjectures in extremal combinatorics.
This connection was first observed in~\cite{GLS24}.
The conjecture states that if $T$ is a tree on $t$ vertices, then every $n$-vertex $T$-subgraph-free graph contains at most $\frac{t-2}{2}n$ edges. 
Erd\H{o}s and Gallai~\cite{EG59} proved that $P_4$ satisfies the Erd\H{o}s-S\'{o}s conjecture in a stronger sense.
\begin{theorem}[\cite{EG59}]\label{thm:SESC_path}
    For every $t \geq 4$, every $n$-vertex $P_t$-free graph has at most $\frac{t-2}{2}n$ edges. Furthermore, disjoint union of $K_{t-1}$ is the unique $P_t$-free graph with $\frac{t-2}{2}n$ edges.
\end{theorem}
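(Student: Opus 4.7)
The plan is to prove both the edge bound and the extremal characterization simultaneously by induction on $n$. The base case $n \le t-1$ is immediate since $\binom{n}{2} \le \frac{(t-2)n}{2}$, with equality iff $n = t-1$ and $G \cong K_{t-1}$.

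For the inductive step with $n \ge t$, I would first reduce to the connected case. If $G$ has components $G_1, \dots, G_c$, applying the inductive hypothesis to each $G_i$ and summing yields the edge bound for $G$. Since being a disjoint union of $K_{t-1}$'s is preserved under taking components, equality descends to each $G_i$, and hence forces $G$ itself to be a disjoint union of copies of $K_{t-1}$.

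It then remains to show strict inequality $e(G) < \frac{(t-2)n}{2}$ whenever $G$ is connected with $n \ge t$. Let $v$ be a vertex of minimum degree $\delta$. If $\delta \ge \lceil (t-1)/2 \rceil$, I would invoke the classical Dirac-type lemma that a connected graph with minimum degree $\delta$ contains a path on at least $\min(2\delta+1,\, n)$ vertices; this produces a $P_t$ in $G$ since $n \ge t$, contradicting $P_t$-freeness. Otherwise $\delta \le \lfloor (t-2)/2 \rfloor$, and the inductive hypothesis applied to $G - v$ gives
\[
    e(G) \;=\; e(G-v) + \deg(v) \;\le\; \tfrac{(t-2)(n-1)}{2} + \bigl\lfloor \tfrac{t-2}{2} \bigr\rfloor \;\le\; \tfrac{(t-2)n}{2}.
\]
If equality held, $t$ would have to be even, $\deg(v) = (t-2)/2$, and $G - v$ would be a disjoint union of copies of $K_{t-1}$ by the inductive hypothesis. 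Since $G$ is connected, $v$ has a neighbor $w$ in some component $C \cong K_{t-1}$; but then $v$ prepended to a Hamiltonian path of $C$ starting at $w$ is a $P_t$ in $G$, a contradiction.

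The main obstacle is the high-degree sub-case, which hinges on the Dirac-type path-length lemma; once that lemma is taken as a black box, the remainder is essentially bookkeeping. To make the proof fully self-contained one would additionally supply a P\'osa-style longest-path rotation argument to prove the lemma, which is where most of the combinatorial work would actually sit.
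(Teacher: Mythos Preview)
The paper does not prove this theorem; it is quoted from Erd\H{o}s--Gallai~\cite{EG59} and used as a black box (to deduce \cref{thm:hardness_P4_deletion} from \cref{thm:hardness_K3_factor}). So there is no in-paper proof to compare against.

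That said, your sketch is a correct rendition of the standard inductive proof. The case split on the minimum degree is exhaustive since $\lceil (t-1)/2\rceil = \lfloor (t-2)/2\rfloor + 1$, the low-degree branch cleanly gives strict inequality for connected $G$ with $n\ge t$ (the equality analysis producing a $P_t$ through a $K_{t-1}$ component is right), and the component reduction then pins down the extremal graphs. The only real content is the black-boxed lemma that a connected graph with minimum degree $\delta$ has a path on $\min(2\delta+1,n)$ vertices; the P\'osa rotation argument you allude to (take a longest path, show either its endpoints close up into a spanning cycle of the path---which by connectivity forces Hamiltonicity---or else the two endpoint-neighbourhood index sets are disjoint in $\{0,\dots,\ell-1\}$, giving $\ell\ge 2\delta$) does establish it.
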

By~\cref{thm:SESC_path}, a planar graph $G$ contains an $n$-edge $P_4$-free subgraph if and only if $G$ contains vertex-disjoint triangles covering all vertices.
Thus, given a planar graph $G$, $G$ contains vertex-disjoint triangles covering all vertices if and only if $G$ can be made $P_4$-subgraph-free by deleting at most $k \coloneq \lvert E(G) \rvert - \lvert V(G) \rvert$ edges.
Therefore,~\cref{thm:hardness_K3_factor} implies~\cref{thm:hardness_P4_deletion}.

\subsection{Related works}

\subparagraph{Induced subgraphs}
One closely related problem is \textsc{Induced $\mathcal{F}$-free Edge Deletion}, which asks to delete at most $k$ edges from a given graph $G$ to make $G$ have no copy of a member of $\mathcal{F}$ as an \emph{induced} subgraph.
This problem has also received significant attention~\cite{Aravind2017Hfreedichotomy, Bliznets2018approximateHfree, Cai1996FPTforhereditary, Cygan2017polykernel, Konstantinidis2021cluster,  Yannakaksis1981cycles}.
We remark that our proof strategy can be extended to \textsc{Induced $\mathcal{F}$-free Edge Deletion} and therefore an analogous theorem of \cref{thm:prod_to_algo} for \textsc{Induced $\mathcal{F}$-free Edge Deletion}, as well as \cref{cor:F-free-planar} and \cref{cor:F-free-disk}, can be proved.

Note that for any family $\mathcal{F}$, let $\mathcal{F}'$ be the class of graphs that contain a member of $\mathcal{F}$ as a spanning subgraph.
Then \textsc{$\mathcal{F}$-subgraph Free Edge Deletion} is equivalent to \textsc{Induced $\mathcal{F}'$-free Edge Deletion}.
Thus, \textsc{Induced $\mathcal{F}$-free Edge Deletion} captures more wider class of problems.
On the other hand, $\mathcal{F}'$ may contains $\Omega(\lvert \mathcal{F}\rvert 2^{\binom{r}{2}}/r!)\geq \lvert \mathcal{F}\rvert 2^{\Omega(r^2)}$ elements. 
Therefore, by considering \textsc{$\mathcal{F}$-subgraph Free Edge Deletion} instead of its induced counterpart, we have an advantage in the running time of the algorithm. 
This allows us to decide, for instance, whether $G$ can be made so that all of its components have size at most $h$ by deleting at most $k$ edges more efficiently.

\subparagraph{Baker's layering technique.}
Baker~\cite{baker1994approximation} introduced a seminal technique that yields polynomial-time approximation schemes (PTAS) for numerous \NP-complete problems on planar graphs. 
The key idea is to partition a planar graph into layers so that any union of a bounded number of consecutive layers has bounded treewidth. 
Since then, the technique has been utilized to design subexponential-time algorithms as well as PTASes for minor-free graphs and bounded-genus graphs~\cite{demaine2008bidimensionality, demaine2006bidimensional,fomin2011bidimensionality}.
It is not hard to see that if a graph has a product structure, we can naturally define a Baker-style layering. 
However, to the best of our knowledge, no prior work has considered \textsc{$\mathcal{F}$-subgraph-free Edge Deletion} within either Baker's layering framework nor graph product structure.

\subparagraph{Twin-width.}
Twin-width is a graph parameter introduced by Bonnet, Kim, Thomass{\'e}, and Watrigant~\cite{BKTW20} defined in terms of contractions on \emph{trigraphs}.
Specifically, a \emph{trigraph} is a triple $G = (V(G), E(G), R(G))$ where $V(G)$ is a finite set and $E(G), R(G)$, are disjoint sets of unordered pairs on $V(G)$. 
We also identify a graph $G$ with a trigraph $(V(G), E(G), \emptyset)$.
Given a trigraph $G$ and two distinct vertices $u, v$, a trigraph $G'$ is obtained from $G$ by \emph{contracting $u$ and $v$} if $V(G') = (V(G) \setminus \{u, v\}) \cup \{w\}$ where $w \notin V(G)$, $G' \setminus w = G \setminus \{u, v\}$, and the neighbors of $w$ in $G'$ are defined as follows.
For each $x \in V(G') \setminus \{w\}$,
\begin{itemize} 
    \item $wx \in E(G)$ if and only if $x$ is adjacent to both $u, v$ in $G$,
    \item $wx \in R(G)$ if and only if $x$ is adjacent to exactly one of $u, v$ in $G$, and
    \item $wx \notin E(G) \cup R(G)$ if and only if $x$ is not adjacent to both $u, v$ in $G$.
\end{itemize}
A \emph{contraction sequence} of a trigraph $G$ is a sequence $G_n, \ldots, G_1$ of trigraphs such that $G_n = G$, $G_1 = K_1$, and $G_{i}$ is obtained from $G_{i+1}$ by contracting a pair of distinct vertices in $G_{i+1}$ for each $i$.
For an integer $d \geq 0$, a contraction sequence $G_n, \ldots, G_1$ is a \emph{$d$-sequence} if the graph $(V(G_i), R(G_i))$ has maximum degree at most $d$ for each $i \in [n]$.
The \emph{twin-width} of~$G$, denoted by $\textsf{tww}(G)$, is the minimum integer $d$ such that a $d$-sequence of $G$ exists.

One of the fundamental results on twin-width is an analog of Courcelle's theorem~\cite{Courcelle90} for another type of logic, called the first-order (FO) logic.
The meta-theorem~\cite{BKTW20} states the existence of an algorithm that, given an FO-sentence $\varphi$ on graphs, an $n$-vertex graph $G$ of twin-width at most $d$, and a $d$-sequence of $G$, decides if $G$ satisfies $\varphi$ in time $f(\lvert \varphi \rvert, d) \cdot n$ for some computable function.
This result gives numerous fixed-parameter linear-time algorithms for various graph problems when restricted to several graph classes, but the function $f$ is non-elementary.

\subparagraph{Product structure on disk graphs.}
Disk graphs exhibit a rich structure, leading to both positive and negative conclusions about the existence of a product structure.
A positive answer comes from \emph{unit disk graphs}, which are the intersection graphs of disks of radius~$1$ in the Euclidean plane.
While the class of all unit disk graphs does not have a product structure due to the existence of large cliques, these are the only obstacles that prohibit the product structure.
For each integer $t \geq 1$, the class of \emph{$K_t$-free} unit disk graphs has a product structure~\cite{DHJLW21}. 

However, if we remove the restriction on the radii of disks, then the structure becomes significantly more complicated.
Observe that the neighborhood of each vertex in a graph with a product structure induces a subgraph of small tree-width.
However, for each integer $n \geq 1$, a graph obtained from an $n$~by~$n$ grid by adding a universal vertex $u$ is a $K_4$-free disk graph, but the neighborhood of $u$ induces a subgraph of large tree-width~\cite{RS86}.
This shows that the class of disk graphs does not have a product structure, even if we assume the $K_4$-freeness.
\cref{thm:disk-product-structure} gives an improved characterization of when disk graphs have a product structure.

\subsection*{Organization}
In~\Cref{sec:Prelim}, we collect some notations and definitions. In~\Cref{sec:f_free_edge_deletion}, we present an algorithmic framework that solves \textsc{$\mathcal{F}$-subgraph-free Edge Deletion} under the given product structure (\Cref{thm:prod_to_algo}). 
In~\Cref{sec:disk_graph}, we show that disk graphs of bounded local radius have a product structure~(\Cref{thm:prod_localradius_disk}), and such a product structure can be found efficiently on those graphs~(\Cref{thm:disk-product-structure}). 
Then, in~\Cref{section:hardness}, we prove that all parameters in our algorithm cannot be dropped to obtain an FPT algorithm (\Cref{thm:hardness_K3_factor}).
Finally, in~\Cref{sec:concluding},
we mention several concluding remarks and open problems.

\subsection*{Model of Computation}
In Section~\ref{sec:disk_graph}, our algorithm uses a lowest common ancestor (LCA) data structure. To implement this efficiently, we use the standard real RAM model~\cite{preparata2012computational}.

\section{Preliminaries} \label{sec:Prelim}
For a positive integer $n$, we denote by $[n]$ the set of all positive integers at most~$n$.

\subparagraph{Basic graph notions.}
We refer to~\cite{Diestel2025Book} for undefined notions.
All graphs in this paper are finite and simple, that is, graphs have neither loops nor multiple edges.
For a graph $G$, we denote by $V(G)$ and $E(G)$ the vertex set and the edge set of $G$, respectively.
A graph $H$ is a \emph{subgraph} of a graph $G$ if $V(H)\subseteq V(G)$ and $E(H)\subseteq E(G)$.
A graph $G$ \emph{contains} a graph $H$ if $G$ has a subgraph isomorphic to $H$; such a subgraph of $G$ is called a \emph{copy} of~$H$.
Given a graph $H$, we say a graph $G$ is \emph{$H$-subgraph-free} if $G$ has no copy of $H$.
Given a family $\mathcal{F}$ of graphs, we also say that a graph $G$ is \emph{$\mathcal{F}$-subgraph-free} if $G$ is $H$-subgraph-free for every~$H \in \mathcal{F}$.
For a set $A$ of vertices in a graph $G$, we denote by $G \setminus A$ the graph obtained from $G$ by removing all vertices in $A$.
We denote by $G[A]$ the induced subgraph of $G$ induced by $A$, that is, $G[A] = G \setminus (V(G) \setminus A)$.
For $v \in V(G)$, we also use the notation $G \setminus v$ to indicate $G \setminus \{v\}$.
Similarly, for a set $S$ of edges in a graph $G$, we denote by $G-S$ the graph obtained from $G$ by removing all edges in $S$.

A graph $G$ is \emph{connected} if there is a $u$-$v$ path for every $u, v \in V(G)$.
A \emph{connected component} of a graph $G$ is a maximal connected subgraph of $G$.
For a graph $G$ and two vertices $u,v\in V(G)$, the 
\emph{distance} between $u$ and $v$ in $G$, denoted by $\dist_G(u,v)$, is defined as the length of a shortest path from $u$ to $v$ in $G$ if it exists; otherwise, we put $\dist_G(u,v)=\infty$.
For a connected graph $G$, the \emph{radius} of $G$, denoted by $\rad(G)$, is defined by $\min_{u\in V(G)}\max_{v\in V(G)}\dist_G(u,v)$.

A \emph{triangle tiling} in a graph $G$ is a family of vertex-disjoint triangles in $G$.
For a vertex $v$ of $G$, we say that a triangle tiling \emph{covers} $v$ if some triangle of it contains $v$.
A \emph{perfect triangle tiling} or \emph{triangle factor} in a graph $G$ is a triangle tiling covering all vertices of $G$.

\subparagraph{Tree-decomposition~\cite{Diestel2025Book}.}
A \emph{tree-decomposition} of a graph $G$ is a pair $(T, \{B_t\}_{t\in V(T)})$ consisting of a tree $T$ and a family of sets $\{B_t\}_{t\in V(T)}$ of vertices in $G$ such that 
\begin{enumerate}[label=(T\arabic*)]
    \item\label{TD:vertex-coverage} $V(G)=\bigcup_{t\in V(T)}B_t$, 
    \item\label{TD:edge-coverage} for every edge $uv \in E(G)$, there exists a node $t \in V(T)$ such that $\{u, v\} \subseteq B_t$, and 
    \item for every vertex $v \in V(G)$, the set $\{t\in V(T):v\in B_t\}$ induces a connected subtree of $T$. \label{TD:connectedness}
\end{enumerate}
The \emph{width} of a tree-decomposition $(T, \{B_t\}_{t \in V(T)})$ is defined as $\max_{t\in V(T)}\abs{B_t}-1$.
The \emph{tree-width} $\tw(G)$ of a graph $G$ is the minimum width among all tree decompositions of $G$.

\subparagraph{Parameterized algorithm.}
We refer the reader to~\cite{cygan2015parameterized} for further details.
A \emph{parameterized problem} is a language $L\subseteq \Sigma^{\ast}\times \mathbb{N}$, where $\Sigma$ is a finite set of alphabets and $\mathbb{N}$ is the set of nonnegative integers.
For an instance $(x,k)\in \Sigma^{\ast}\times \mathbb{N}$, we call $k$ the \emph{parameter}.
A parameterized problem $L\subseteq \Sigma^{\ast}\times \mathbb{N}$ is \emph{fixed-parameter tractable (FPT)} if there exists an algorithm~$\mathcal{A}$, a computable function~$g$, and a universal constant $c \geq 1$ such that given an instance $(x,k)\in \Sigma^{\ast}\times \mathbb{N}$, the algorithm~$\mathcal{A}$ decides whether $(x,k)\in L$ in time $g(k) \cdot (\abs{x}+k)^c$.
We call such an algorithm an \emph{FPT algorithm}.

\subparagraph{Disk graphs.}
The boundaries of the disks in $\mathcal{D}$ partition the Euclidean plane into regions, which are called the \emph{faces} (of $\mathcal{D}$).
The \emph{arrangement graph} of $\mathcal{D}$, denoted by $A_\mathcal{D}$, is defined as follows.
The vertices of $A_\mathcal{D}$ are the faces of $\mathcal{D}$, and two vertices in $A_\mathcal{D}$ are adjacent if and only if they share a common boundary arc which is not a single point.
By abuse of notation, for $D \in \mathcal{D}$, we denote by $A_\mathcal{D}[D]$ the subgraph of $A_\mathcal{D}$ induced by the set of faces in $\mathcal{D}$ contained in $D$ (see~\cref{fig:arrangement}).
Furthermore, we use $v\in V(A_\mathcal D)$ as a vertex of the arrangement graph or a face of $\mathcal D$ interchangeably if it is clear from the context.

\begin{figure}
    \centering
    \includegraphics{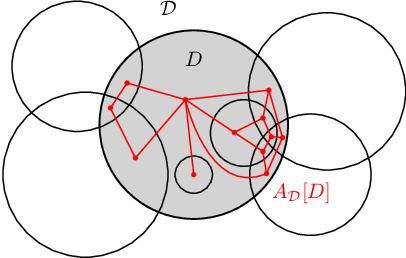}
    \caption{An illustration of an arrangement graph}
    \label{fig:arrangement}
\end{figure}

We now formally define \emph{local radius}.
The \emph{local radius} of $\mathcal{D}$ is the maximum radius of $A_\mathcal{D}[D]$, where the maximum is taken over all $D \in \mathcal{D}$.
Then the \emph{local radius} of a disk graph $G$ is the minimum local radius of the geometric representation of $G$.
A closely related concept of local radius is the notion of \emph{ply}.
The \emph{ply} of a family $\mathcal{D}$ of disks is the largest integer $p$ such that every point in the Euclidean plane lies in at most $p$ disks in $\mathcal{D}$.
Then \emph{ply} of a disk graph $G$ is the minimum ply of the geometric representations of $G$.

\section{Solving \texorpdfstring{$\mathcal{F}$}\sc{-subgraph-free edge deletion} using the product structure}\label{sec:f_free_edge_deletion}
In this section, we give a proof of \Cref{thm:prod_to_algo}. 
We first prove the following simple lemma. It was already observed in~\cite{bodlaender91producttw, hickingbotham25product}, but we give a proof for the completeness.
\begin{lemma}\label{lem:tw_of_product}
    Let $G$ and $H$ be graphs. Then 
    \[\tw(G \boxtimes H) \leq \min\left(\lvert V(G) \rvert \cdot (\tw(H) + 1),\lvert V(H) \rvert \cdot (\tw(G) + 1)\right) - 1.\]
\end{lemma}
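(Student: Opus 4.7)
The plan is to prove this by an explicit construction of a tree-decomposition of $G \boxtimes H$, starting from an optimal tree-decomposition of $H$ (or $G$, by symmetry). Since the strong product is symmetric, once we establish one of the two bounds it suffices to swap the roles of $G$ and $H$ to obtain the other, and then take the minimum.

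For the first bound, I would fix a tree-decomposition $(T, \{B_t\}_{t \in V(T)})$ of $H$ of width exactly $\tw(H)$, and define the blown-up bags $B'_t \coloneq V(G) \times B_t$ for each node $t \in V(T)$. The claim is that $(T, \{B'_t\}_{t \in V(T)})$ is a tree-decomposition of $G \boxtimes H$. Condition~\ref{TD:vertex-coverage} is immediate: any $(u,v) \in V(G \boxtimes H)$ lies in $B'_t$ for any $t$ with $v \in B_t$. For~\ref{TD:connectedness}, the set $\{t : (u,v) \in B'_t\}$ equals $\{t : v \in B_t\}$, which is connected in $T$ by assumption. The only condition that requires a brief case-check is~\ref{TD:edge-coverage}: for an edge $(u,v)(u',v') \in E(G \boxtimes H)$, either $v = v'$ (in which case any bag containing $v$ contains both endpoints of the blown-up edge) or $vv' \in E(H)$ (in which case the bag $B_t$ guaranteed by~\ref{TD:edge-coverage} for $H$ contains both $v$ and $v'$, and hence $B'_t$ contains both endpoints). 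Finally, the width of the new decomposition is $\max_t |B'_t| - 1 = |V(G)| \cdot (\tw(H)+1) - 1$, giving one half of the desired bound.

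The second bound follows from the same argument with the roles of $G$ and $H$ swapped, since $G \boxtimes H$ is isomorphic to $H \boxtimes G$ under the coordinate-swap map. Taking the minimum of the two widths yields the stated inequality.

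There is no genuine obstacle here; the only thing to be a little careful about is the edge-coverage condition, since in $G \boxtimes H$ there are "diagonal" edges $(u,v)(u',v')$ with both $uu' \in E(G)$ and $vv' \in E(H)$, which at first glance one might worry are not handled by a decomposition coming from $H$ alone. The point is that they are still covered because $vv' \in E(H)$ forces some $B_t$ to contain both $v$ and $v'$, and the blow-up $B'_t = V(G) \times B_t$ then automatically contains all four possibilities for the $G$-coordinates. This is why blowing up one side's tree-decomposition by the full vertex set of the other side works uniformly across all three edge types of the strong product.
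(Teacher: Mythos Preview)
Your proof is correct and follows essentially the same approach as the paper: take an optimal tree-decomposition of $H$, replace each bag $B_t$ by $V(G)\times B_t$, verify the three tree-decomposition axioms, and conclude by symmetry. The only cosmetic difference is that the paper first reduces to the case $G=K_s$ (using that tree-width is monotone under subgraphs), whereas you work directly with an arbitrary $G$; this extra reduction is not needed, and your version is slightly cleaner.
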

\begin{proof}
We first show that $\tw(G \boxtimes H) \leq \lvert V(G) \rvert \cdot (\tw(H) + 1) - 1$.
As the tree-width is decreasing when we take a subgraph, we may assume that $G=K_s$.
Let $V(G)=\set{a_1, a_2,\ldots, a_s}$, and let $(T,\set{B_t}_{t\in V(T)})$ be a tree-decomposition of $H$ of width $\tw(H)$. 
For each $t \in V(T)$, define
\[B_t' = \{(a_i,w) \in V(G \boxtimes H) \mid \text{$w \in B_t$ and $i \in [s]$}\}.\]
We show that 
$(T, \set{B_t'}_{t \in V(T)})$ is a tree-decomposition of $G\boxtimes H$.
Condition~\ref{TD:vertex-coverage} directly follows from the definition of $B_t'$.
Next, we show~\ref{TD:edge-coverage}, suppose two vertices $(a_i, u), (a_j, u')$ are adjacent in $G\boxtimes H$.
Since either $u=u'$ or $uu' \in E(H)$, there is a node $s \in V(T)$ such that $B_s$ contains both $u$ and $u'$.
For such a node $s$, two vertices $(u, a_i)$ and $(u', a_j)$ are contained in $B_s'$. This verifies~\ref{TD:edge-coverage}.
Next, we show~\ref{TD:connectedness}.
Let $(a_i, v)$ be a vertex of $G\boxtimes H$.
Since $(T, \{B_t\})$ is a tree decomposition of $H$, the set $\{t \in V(T) \mid v \in B_t\}$ induces a subtree of $T$.
Subsequently, the set $\{t \in V(T) \mid (a_i, v) \in B_t'\}$ induces a subtree of $T$.
Hence, $(T,\{B'_t\})$ is a tree-decomposition of $G\boxtimes H$.
Finally, since
\[\max_{t\in V(T)}\abs{B_t'}-1=s \cdot(\tw(H)+1)-1,\]
the tree-width of $K_s \boxtimes H$ is at most $s \cdot(\tw(H)+1)-1$.
By changing the role of $G$ and $H$, we have $\tw(G \boxtimes H) \leq \lvert V(H) \rvert \cdot (\tw(G) + 1) - 1$.
This proves the lemma.
\end{proof}

\begin{proof}[Proof of Theorem~\ref{thm:prod_to_algo}]
Let $\ell$ be the number of vertices of $P$.
    We label the vertices of $P$ by $[\ell]$ in natural ordering.
    If $\ell \leq 6r$, then by \Cref{lem:tw_of_product}, $\mathrm{tw}(G) \leq 10rw$. Thus, by \Cref{thm:hfed-tw}, \textsc{$\mathcal{F}$-subgraph-free Edge Deletion} can be solved in time $2^{O( \lvert \mathcal{F} \rvert krw)^r}\cdot n$.
    We now assume that $\ell \geq 6r$.
    For each integer $i \geq 1$, let $V_i = (V(H) \times \{i\}) \cap V(G)$ if $i \leq \ell$; otherwise, let $V_i = \emptyset$.
    We call $V_i$ an $i$-th layer with respect to $H\boxtimes P$. If it is clear from the context, we simply say $V_i$ as an $i$-th layer.
    In addition, for each integer $j \geq 1$, let $I_{j} = [3(j-1)r+1, 3jr]$ and let $V_{I_j} = \bigcup_{i \in I_j} V_i$.

    We first claim that, if there are many distinct indices $j$ such that $G[V_{I_j}]$ contains a copy of a connected component $C$ of some $F \in \mathcal{F}$, then any solution must hit all copies of~$F \setminus C$.
    \begin{claim}\label{claim:removing_conn_comp}
        Let $C$ be a connected component of $F \in \mathcal{F}$ and let $F' = F \setminus V(C)$.
        If there exist at least $k+r$ distinct odd integers $j$ such that $G[V_{I_j}]$ contains a copy of $F'$ as a subgraph, then $(G, k)$ is \textsc{Yes} instance of \textsc{$\mathcal{F}$-subgraph-free Edge Deletion} if and only if $(G, k)$ is \textsc{Yes} instance of \textsc{$(\mathcal{F} \setminus \{F\}) \cup \{F'\}$-subgraph-free Edge Deletion}.
        In particular, if $F$ is connected, then the existence of $k+r$ distinct odd integers $j$ such that $G[V_{I_j}]$ contains a copy of $F$ implies $(G, k)$ is \textsc{No} instance of \textsc{$\mathcal{F}$-subgraph-free Edge Deletion}.
    \end{claim}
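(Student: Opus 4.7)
The plan is to prove the ``iff'' in two directions, using two structural observations about the layered partition. Since $G$ is a subgraph of $H\boxtimes P$, every edge of $G$ connects vertices whose layer indices differ by at most one. In particular, no edge of $G$ can have endpoints in two distinct odd blocks $I_j$ and $I_{j'}$, because two such blocks are separated by the full even block of $3r \ge 2$ layers lying between them; consequently, each edge of $G$ has an endpoint in \emph{at most one} odd block.

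For the ``if'' direction, $F'=F\setminus V(C)$ is a subgraph of $F$, so any copy of $F$ contains a copy of $F'$; hence $F'$-subgraph-freeness implies $F$-subgraph-freeness, and any solution for $(\mathcal{F}\setminus\{F\})\cup\{F'\}$ is automatically a solution for $\mathcal{F}$. For the ``only if'' direction, let $S\subseteq E(G)$ with $|S|\le k$ witness the Yes-instance for $\mathcal{F}$, and assume toward contradiction that $G-S$ still contains a copy $F'^{*}$ of $F'$. The plan is then to locate an odd block $I_{j^{*}}$ that is \emph{clean} in the sense that (i) no edge of $S$ is incident to $V_{I_{j^{*}}}$, and (ii) no vertex of $F'^{*}$ lies in $V_{I_{j^{*}}}$. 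By the edge-charging observation, each edge of $S$ spoils at most one odd block, so $S$ spoils at most $|S|\le k$ of them; and $F'^{*}$ has at most $|V(F')|\le r-1$ vertices, so it meets at most $r-1$ odd blocks. Since by hypothesis there are $k+r$ odd blocks each containing a copy of $C$, a pigeonhole argument yields at least one clean $I_{j^{*}}$.

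In this clean block the prescribed copy of $C$ in $G[V_{I_{j^{*}}}]$ uses only edges of $G$ internal to $V_{I_{j^{*}}}$, all of which survive in $G-S$ because $S$ avoids $V_{I_{j^{*}}}$; write $C^{*}$ for this surviving copy. Since $C^{*}$ and $F'^{*}$ are vertex-disjoint by (ii), the subgraph of $G-S$ on vertex set $V(F'^{*})\cup V(C^{*})$ with edge set $E(F'^{*})\cup E(C^{*})$ is isomorphic to the disjoint union $F'\sqcup C = F$, contradicting the $F$-subgraph-freeness of $G-S$. The ``In particular'' clause is the degenerate case $C=F$, where $F'=\emptyset$: there is no $F'^{*}$ to avoid, and the same pigeonhole argument already produces an odd block untouched by $S$ whose copy of $F$ survives in $G-S$. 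The most delicate step, and the one most easily mishandled, is the edge-charging bound showing each edge of $S$ can spoil at most one odd block; this rests precisely on the $3r$-layer gap between consecutive odd blocks built into the definition of $I_j$.
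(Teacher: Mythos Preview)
Your proof is correct and takes essentially the same pigeonhole approach as the paper's: both locate, among the $k+r$ odd blocks carrying a copy of $C$, one whose copy survives in $G-S$ and is vertex-disjoint from the putative copy of $F'$. The only cosmetic difference is in the bookkeeping---you charge edges of $S$ and vertices of $F'^{*}$ directly to odd blocks, whereas the paper uses edge-disjointness of the copies $C_i$ and counts connected components of $F'^{*}$---but both routes arrive at the same $k+(r-1)$ bound.
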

    \begin{clproof}
        Let $\widetilde{\mathcal{F}} = (\mathcal{F} \setminus \{F\}) \cup F'$.
        The ``if'' part is clear since every $\widetilde{\mathcal{F}}$-subgraph-free graph is $\mathcal{F}$-subgraph-free.
        To prove the ``only if'' part, assume that there exists a set of edges $S \subseteq E(G)$ of size at most $k$ such that $G-S$ is $\mathcal{F}$-subgraph-free.
        Let $j_1, \ldots, j_{k+r}$ be distinct odd indices such that each $G[V_{I_{j_1}}], \ldots, G[V_{I_{j_{k+r}}}]$ contains a copy of $C$.
        We denote those copies of $C$ by $C_1,C_2,\ldots, C_{k+r}$. 
        Then, as $C_1, \ldots, C_{k+r}$ are pairwise vertex-disjoint, at least $r$ of them do not intersect with $S$.
        Thus, at least $r$ copies among $C_1, \ldots, C_{k+r}$ are contained in $G - S$.
        For the sake of contradiction, suppose that $G-S$ contains a copy $C'$ of $F'$.
        Then each connected component of~$C'$ can intersect at most one of $C_1, \ldots, C_{k+r}$, as the pairwise distance between them is at least~$3r$.
        As $F'$ has at most $r-1$ components, there exists an index $i$ such that $C_i$ is contained in $G-S$ and does not intersect with $C'$.
        However, this shows that $G-S$ contains a copy of $F$, contradicting the assumption that $G-S$ is $\mathcal{F}$-subgraph-free.
        Therefore, we conclude $G-S$ is $\widetilde{\mathcal{F}}$-subgraph-free and this completes the proof of the claim.

        For the second part, we have $k+r\geq k+1$ edge-disjoint copies of $F$ so deleting $k$ edges cannot make $G$ be $F$-subgraph-free.
    \end{clproof}
    \subparagraph{Algorithm.}
    We now describe our algorithm.
    We may assume $\abs{E(G)}> k$ as otherwise we can simply answer \textsc{Yes}.  
    First, we iteratively do the following.
    For each $F \in \mathcal{F}$ and its connected component $C$, we check whether there exist $k+r$ odd indices $j$ such that $G[V_{I_j}]$ contains a copy of $C$.
    If this is the case, we modify $\mathcal F$ by 
    replacing $F$ with $F\setminus V(C)$ if $F\neq C$, and return \textsc{No} if $F=C$.
    We stop the iteration when no such connected component $F'$ exists and perform the next step.

    As a second step, 
    for each odd index $j$, we check if $G[V_{I_j}]$ contains a connected component of some member of $\mathcal F$. 
    If this is the case, we delete the ``middle'' layers $V_{3(j-1)r+r+1}, \ldots, V_{3(j-1)r+2r}$ of $V_{I_j}$ from $G$.
    Letting $G'$ be the graph obtained by removing these layers, we check whether $G'$ can be made $\mathcal{F}$-subgraph-free by deleting at most $k$ edges and return this result.
    This completes our algorithm.

    \subparagraph{Time complexity.}
    Notice that our algorithm operates directly on $G$ (or its induced subgraph), rather than on its supergraph $H \boxtimes P$, whose size can be $\Omega(n^2)$ in the worst case. 
    This gives the benefits in terms of the time complexity. 
    More precisely, since the embedding of $G$ into $H\boxtimes P$ is given, we can extract all information of $G$ and its induced subgraphs $G[V_{I_j}]$'s in time $O(n)$, instead of paying the size of $H\boxtimes P$. Consequently, each step of our algorithm has a linear dependency on $n$.
    
    In the rest of this part, we analyze the time complexity step by step.
    We first analyze the time complexity for the first step.
    Due to~\Cref{lem:tw_of_product}, the tree-width of each $G[V_{I_j}]$ is at most $4rw$.
    Thus, we can perform each iteration in time $2^{O(\lvert\mathcal{F}\rvert rw)^r}n$ by applying the algorithm in~\Cref{thm:hfed-tw} with $k=0$. 
    Initially, $\mathcal F$ consists of at most $\abs{\mathcal F}r$ connected components. Also, each of the first steps decreases the total number of connected components by one. Consequently, the total number of iterations is $O(\abs{\mathcal F}r)$ and therefore the first step takes in time $2^{O(\lvert\mathcal{F}\rvert rw)^r}n$.

    Next, we analyze the complexity for the second step. Since $\tw(G[V_{I_j}])=O(rw)$ by~\cref{lem:tw_of_product}, we can compute $G'$ in time $\abs{P}\cdot 2^{O(\lvert\mathcal{F}\rvert r^2 w)^r}n = 2^{O(\lvert\mathcal{F}\rvert r^2 w)^r}n$.
    As each $F \in \mathcal{F}$ contains at most $r$ connected components, there are at 
    most $\lvert \mathcal{F} \rvert (k+r)r$ odd indices $j$ such that $G[V_{I_j}]$ contains a copy of a connected component of some $F \in \mathcal{F}$.
    In other words, there are no consecutive $\abs{\mathcal F}(k+r)r+1$ odd indices $j$ such that $G[V_{I_j}]$ contains a connected component of a member of $\mathcal F$.
    Also, each~$I_j$ consists of $3r$ consecutive layers of $H\boxtimes P$.
    Thus, 
    each component of $G'$ is a subgraph of $H \boxtimes P_{6 \lvert \mathcal{F} \rvert (k+r)r^2}$.
    Then, by~\Cref{lem:tw_of_product},
    $\tw(G') \leq O(\lvert \mathcal{F} \rvert kr^3 w)$.
    Consequently, due to~\cref{thm:hfed-tw}, \textsc{$\mathcal{F}$-subgraph-free Edge Deletion} can be solved in time $2^{O(\lvert \mathcal{F}\rvert^2r^3kw)^r}\cdot n$ for $G'$.
    In summary, the total time complexity of the algorithm is $2^{O(\lvert \mathcal{F}\rvert ^2r^3kw)^r}\cdot n$.

    \subparagraph{Correctness.}
    Finally, we prove the correctness of the algorithm.
    By~\cref{claim:removing_conn_comp}, the reduction from the first step does not change the solution.
    Thus, it suffices to show that $(G, k)$ is a \textsc{Yes} instance if and only if $(G', k)$ is a \textsc{Yes} instance.
    As $G'$ is a subgraph of $G$, if $G$ can be made $\mathcal{F}$-subgraph-free by deleting at most $k$ edges, then the same is true for $G'$.
    To show the converse, suppose that $(G', k)$ is a \textsc{Yes} instance. 
    Let $S \subseteq E(G')$ be a set of at most $k$ edges such that $G'-S$ is $\mathcal{F}$-subgraph-free.
    We claim that $G-S$ is also $\mathcal{F}$-subgraph-free.
    Assume to the contrary that there exists $F \in \mathcal{F}$ such that $G-S$ contains a copy $C$ of $F$.
    As this copy is not contained in $G'$, $V(C)$ must intersect with a vertex in $V(G) \setminus V(G')$.
    Fix a vertex $v \in V(C) \cap (V(G) \setminus V(G'))$ and denote by $C'$ the component of $C$ that contains $v$.
    Let $i \in [\ell]$ be the index that $v \in V_i$.
    We observe that $V(C') \subseteq \bigcup_{j=i-r}^{i+r} V_j$.
    By the construction, $3(j-1)r+r+1 \leq i \leq 3(j-1)r+2r$ for some odd $j$ and by the above observation, it implies that $C' \subseteq G[V_{I_j}]$. It contradicts that $v \notin V(G')$.
    Therefore, $G-S$ is $\mathcal{F}$-subgraph-free.
    This completes the proof of the correctness.
\end{proof}

\subparagraph{Remark.}
When all the elements of $\mathcal{F}$ are connected, then we can further improve the running time.
To see this, if the number of indices $j$ such that $G[V_{I_j}]$ contains a copy of a graph in $\mathcal F$ is at least $k+1$, then the answer should be \textsc{No} since we have to delete at least one edge from each of those $G[V_{I_j}]$'s.
Using this argument, we can bound the tree-width of $G'$ further, which yields the running time of $2^{O(\lvert \mathcal{F}\rvert krw)^r} \cdot n$.

\section{Product structure for disk graphs of bounded local radius}\label{sec:disk_graph}
In this section, we prove~\cref{thm:disk-product-structure}.
Throughout this section, we denote $G$ as a given disk graph of local radius at most $\rho$.
Furthermore, we denote $\mathcal D$ by the geometric representation of~$G$ having local radius $\rho$.
Note that knowing geometric representation is a natural assumption in an algorithmic sense since \textsc{Recognition} is \NP-hard for disk graphs~\cite{breu1998unit}.
We show that $G$ is a subgraph of $H\boxtimes P$ for a path $P$ and a graph $H$ of small treewidth (\cref{sec:disk-exist},~\cref{thm:prod_localradius_disk}).
Then we present a linear-time algorithm that computes the product structure (\cref{sec:disk-linear-time}).

Recall that the arrangement graph $A_\mathcal D$ is a graph whose vertices are the faces of $\mathcal{D}$, and two faces are adjacent in $A_\mathcal{D}$ if and only if they share a boundary arc (except single point).  
In addition, $A_\mathcal{D}[D]$ is a subgraph of $A_\mathcal{D}$ induced by the faces contained in $D\in \mathcal D$.

\subsection{Existence of the product structure} \label{sec:disk-exist}

\begin{observation} \label{obs:disk_arrange_radius}
    Let $\mathcal{D}, \mathcal{D}'$ be families of disks in the Euclidean plane satisfying $\mathcal{D}' \subseteq \mathcal{D}$.
    Then, for $D \in \mathcal{D}'$, the radius of $A_{\mathcal{D}'}[D]$ is no larger than that of $A_\mathcal{D}[D]$.
\end{observation}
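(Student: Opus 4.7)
The plan is to project each face of $\mathcal{D}$ to the face of $\mathcal{D}'$ that contains it, and show that this projection does not increase distances in the arrangement graphs restricted to $D$. Since the boundaries of the disks in $\mathcal{D}'$ form a subset of those of $\mathcal{D}$, each face $f$ of $\mathcal{D}$ is contained in a unique face $\pi(f)$ of $\mathcal{D}'$. Because $D \in \mathcal{D}'$, the boundary circle of $D$ appears in both arrangements, so $f \subseteq D$ forces $\pi(f) \subseteq D$. Conversely, every face of $\mathcal{D}'$ contained in $D$ is a union of faces of $\mathcal{D}$ contained in $D$, so $\pi$ restricts to a surjection $V(A_\mathcal{D}[D]) \twoheadrightarrow V(A_{\mathcal{D}'}[D])$.

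The key step is to check that $\pi$ is $1$-Lipschitz at the level of adjacency: if $f_1, f_2$ are adjacent in $A_\mathcal{D}$, then either $\pi(f_1) = \pi(f_2)$, or $\pi(f_1)$ and $\pi(f_2)$ are adjacent in $A_{\mathcal{D}'}$. Here the general-position assumption on $\mathcal{D}$ is essential: two adjacent faces share a boundary arc that lies on the boundary of a single disk $D_0 \in \mathcal{D}$. If $D_0 \in \mathcal{D}'$, then this arc persists in the coarser arrangement, so $\pi(f_1)$ and $\pi(f_2)$ remain distinct and share that arc; if $D_0 \notin \mathcal{D}'$, then the arc disappears and $f_1, f_2$ end up inside the same face of $\mathcal{D}'$. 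In either case, projecting a path in $A_\mathcal{D}[D]$ yields a walk in $A_{\mathcal{D}'}[D]$ of no greater length, which gives
\[
\dist_{A_{\mathcal{D}'}[D]}(\pi(u), \pi(v)) \;\leq\; \dist_{A_{\mathcal{D}}[D]}(u, v) \qquad \text{for all } u, v \in V(A_\mathcal{D}[D]).
\]

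To finish, let $u^{\ast} \in V(A_\mathcal{D}[D])$ be a center realizing $\rad(A_\mathcal{D}[D])$. For every $g \in V(A_{\mathcal{D}'}[D])$, surjectivity of $\pi$ supplies some $h \in V(A_\mathcal{D}[D])$ with $\pi(h) = g$, and the Lipschitz bound above gives
\[
\dist_{A_{\mathcal{D}'}[D]}(\pi(u^{\ast}), g) \;\leq\; \dist_{A_\mathcal{D}[D]}(u^{\ast}, h) \;\leq\; \rad(A_\mathcal{D}[D]).
\]
Hence $\pi(u^{\ast})$ witnesses $\rad(A_{\mathcal{D}'}[D]) \leq \rad(A_\mathcal{D}[D])$.

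The only subtle point is the adjacency analysis, which relies on the general-position hypothesis to ensure that a shared boundary arc is attributable to a single disk rather than a coincidental alignment of several. With that in hand, everything else is routine projection bookkeeping.
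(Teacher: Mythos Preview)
Your proof is correct. The paper states this result as an observation without any proof, so there is no argument to compare against; your projection map $\pi$ from faces of $\mathcal{D}$ to faces of $\mathcal{D}'$, together with the verification that $\pi$ is surjective onto $V(A_{\mathcal{D}'}[D])$ and $1$-Lipschitz on adjacency, is a clean and complete justification of what the paper leaves implicit. Your identification of the general-position hypothesis as the point ensuring that a shared boundary arc belongs to a single disk is exactly the right place to invoke it.
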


\begin{lemma}[\cite{lokshtanov2023framework}] \label{lem:disk_small_ply}
    Let $G$ be a disk graph of local radius at most $\rho$ for an integer $\rho \geq 0$.
    Then the ply of $G$ is at most $2\rho+1$.
\end{lemma}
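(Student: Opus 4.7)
The plan is to fix a point $x$ witnessing the ply of $G$, so that $x$ lies in exactly $k$ disks $D_1,\dots,D_k\in\mathcal D$ with $k$ equal to the ply, and to bound $k$ by exhibiting, inside some carefully chosen $D_i$, a face of $\mathcal D$ whose depth counted against the subfamily $\{D_1,\dots,D_k\}$ equals $1$. Writing $f_x$ for the face of $\mathcal D$ containing $x$ and $f'$ for such a shallow face, both will turn out to be vertices of $A_\mathcal D[D_i]$; combining the fact that this depth function changes by at most $1$ across each edge of the arrangement graph with the diameter bound coming from $\rad(A_\mathcal D[D_i])\le\rho$ then gives $k-1\le 2\rho$ directly.

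To construct $f'$, I would set $U:=D_1\cup\dots\cup D_k$ and choose $i\in[k]$ so that $\partial D_i$ contributes a nontrivial arc to $\partial U$; such an $i$ must exist because $U$ is bounded, so $\partial U$ is a nonempty union of arcs drawn from the various $\partial D_j$. The general-position assumption---no three disk boundaries share a common point---then allows me to pick a point $y\in\partial D_i\cap\partial U$ lying on no other $\partial D_j$. Since $y\in\partial U$, no other $D_j$ contains $y$ in its interior, and since $y$ is off each $\partial D_j$ for $j\ne i$, every such $D_j$ has $y$ in its strict exterior. Shrinking a neighbourhood of $y$ so that it avoids every $\partial D_j$ with $j\ne i$, the ``inside $D_i$'' side of this neighbourhood lies in a single face $f'$ of $\mathcal D$ with $f'\subseteq D_i$ and $f'\cap D_j=\emptyset$ for every $j\in[k]\setminus\{i\}$, as required.

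The remaining argument is a depth count along any path of $A_\mathcal D[D_i]$ from $f_x$ to $f'$. Defining $d(f):=\lvert\{j\in[k]:f\subseteq D_j\}\rvert$, crossing any edge of $A_\mathcal D$ traverses exactly one disk boundary $\partial D_\ell$ and hence changes $d$ by $0$ (if $\ell\notin[k]$) or by $\pm 1$ (if $\ell\in[k]$). Therefore
\[
    k-1=d(f_x)-d(f')\le \dist_{A_\mathcal D[D_i]}(f_x,f')\le 2\rad(A_\mathcal D[D_i])\le 2\rho.
\]
I expect the construction of $f'$ to be the delicate step, since it couples the boundedness of $U$ (to guarantee some arc of $\partial D_i$ lies on $\partial U$) with the general-position assumption (to dodge the finitely many ``bad'' points where other boundaries meet $\partial D_i$); the subsequent depth count and the radius-to-diameter reduction are then essentially routine.
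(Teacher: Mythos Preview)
Your proposal is correct and follows essentially the same approach as the paper: pick an ``exposed'' disk $D_i$ whose boundary contributes to $\partial(\bigcup_j D_j)$, find a face inside $D_i$ lying in no other $D_j$, and use the depth-count Lipschitz argument together with $\operatorname{diam}\le 2\operatorname{rad}$. The only cosmetic difference is that the paper first passes to the sub-arrangement $\mathcal D'=\{D_1,\dots,D_k\}$ and then invokes \cref{obs:disk_arrange_radius} to transfer the radius bound back to $A_\mathcal D$, whereas you work directly in $A_\mathcal D$ and absorb that step into the observation that crossing a boundary outside $\{D_1,\dots,D_k\}$ leaves the depth unchanged.
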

\begin{proof}
    We take a geometric representation $\mathcal{D}$ of $G$, whose local radius is at most $\rho$. We show that the ply of $\mathcal D$ is at most $2\rho+1$.
    Assume to the contrary that there is a face of $\mathcal{D}$ which is contained in $2\rho+2$ disks $D_1,D_2,\ldots, D_{2\rho+2}$.
    Let $\mathcal{D}' = \{D_1, \ldots, D_{2\rho+2}\}$.
    By the choice of $\mathcal{D}'$, there is a face~$\sigma$ of $\mathcal{D}'$ that is contained in every disk in $\mathcal{D}'$.
    Take an exposed disk $D_i \in \mathcal{D}'$ with respect to $\mathcal{D}'$.
    That is, the outer boundary of $\bigcup_{D\in \mathcal D'} D$ contains an arc of $D_i$.
    Then there is a face $\sigma'$ of $\mathcal{D}'$ that lies within $D_i$ such that $\sigma'$ is not contained in any disk in $\mathcal{D}'\setminus D_i$. 

    We claim that the distance between $\sigma$ and $\sigma'$ in $A_{\mathcal{D}'}[D_i]$ is at least $2\rho+1$.
    To see this, consider two consecutive faces $\sigma_1,\sigma_2$ on a shortest $\sigma$-$\sigma'$ path in $A_{\mathcal{D}'}[D_i]$.
    By the definition of $A_{\mathcal D'}$, there is a unique disk in $\mathcal{D}'$ that contains one of $\sigma_1, \sigma_2$ but not the other.
    Since $\sigma'$ is contained in exactly one disk in $\mathcal D'$ and $\sigma$ is contained in $\abs{\mathcal D'}=2\rho+2$ disks, at least $2\rho+1$ faces are required to reach $\sigma$ from $\sigma'$. 
    In particular, this implies that the shortest $\sigma$-$\sigma'$ path has length at least $2\rho+1$.
    However, this implies that $\rad(A_{\mathcal{D}'}[D]) \geq \frac{2\rho+1}{2} = r + \frac{1}{2}$, so the local radius of $\mathcal{D}'$ is strictly greater than $\rho$.
    Thus, the local radius of $\mathcal{D}$ is greater than $\rho$ by~\cref{obs:disk_arrange_radius}, which leads to a contradiction.
    This completes the proof of the lemma.
\end{proof}

Our idea is to construct a product structure of $G$ by using that of $A_\mathcal D$. Note that the usage of the arrangement graph has been applied for several algorithms on disk graphs~\cite{an2023faster,berthe2024subexponential,lokshtanov2023framework}. 
To this end, we use the notion of \emph{blow-up} defined as follows.
For $v \in V(G)$, the \emph{$(v, t)$-blow-up} is an operation that replaces $v$ by a clique of size $t$ and adds edges between each vertex in the clique and all neighbors of $v$.
By using this operation, we define the graph $A_\mathcal{D}'$ from $A_\mathcal D$ by applying a $(v, p_v)$-blow-up for each $v \in V(A_\mathcal{D})$, where $p_v$ denotes the number of disks in $\mathcal{D}$ containing $v$.
See~\cref{fig:A'D} for an illustration.
It is not hard to observe that blow-up operation can be captured by strong products.
\begin{observation} \label{obs:blow-up-product}
    Let $G$ be a subgraph of $H$ and let $t \geq 1$ be an integer.
    Let $G'$ be the graph obtained from $G$ by applying a $(v,t)$-blow-up for each $v\in V(G)$.
    Then $H\boxtimes K_t$ contains $G'$ as a subgraph.
\end{observation}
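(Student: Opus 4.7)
The plan is to construct an explicit injective homomorphism $\phi\colon V(G') \to V(H \boxtimes K_t)$ that realizes $G'$ as a subgraph of $H \boxtimes K_t$. Label $V(K_t) = [t]$, and for each $v \in V(G)$, write $v^{(1)}, \ldots, v^{(t)}$ for the $t$ vertices of the clique that replaces $v$ in the $(v,t)$-blow-up. I would define $\phi(v^{(i)}) = (v, i) \in V(H) \times V(K_t)$. Injectivity is immediate: two vertices in different blow-up cliques have distinct first coordinates (since $G \subseteq H$ means $V(G) \subseteq V(H)$ and the $v$-coordinates agree with the original vertices of $G$), while two vertices inside the same clique have distinct second coordinates.

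To verify that $\phi$ preserves adjacency, I would split $E(G')$ into two kinds of edges according to the definition of blow-up. First, within a single blow-up clique, the edges are $v^{(i)}v^{(j)}$ with $i \neq j$; under $\phi$ these become $(v,i)(v,j)$, which is an edge of $H \boxtimes K_t$ by the first clause in the definition of the strong product (equal first coordinate, adjacent second coordinates in $K_t$). Second, between distinct blow-up cliques, $G'$ has an edge $u^{(i)}v^{(j)}$ whenever $uv \in E(G)$, for all $i,j \in [t]$. Since $G$ is a subgraph of $H$, the edge $uv$ lies in $E(H)$, so $(u,i)(v,j)$ is an edge of $H \boxtimes K_t$ either by the second clause (if $i = j$) or by the third clause (if $i \neq j$, in which case $ij \in E(K_t)$ as well).

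There is no real obstacle here beyond bookkeeping: the only subtlety is to be consistent about identifying the blow-up vertices with their images in $V(H) \times [t]$, and to make sure that all three clauses in the definition of the strong product are applied in the right cases. Once the case analysis is written out, $\phi$ gives the required embedding of $G'$ into $H \boxtimes K_t$.
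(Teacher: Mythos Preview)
Your proposal is correct and follows essentially the same approach as the paper: define $\phi(v^{(i)}) = (v, i)$ and check that every edge of $G'$ maps to an edge of $H \boxtimes K_t$. Your write-up is in fact slightly more detailed than the paper's, which compresses the case analysis into a single line rather than separating intra-clique and inter-clique edges.
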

\begin{proof}
    For each $v \in V(G)$, let $v^{1}, \ldots, v^{t}$ be the vertices in $G'$ that are created by applying the $(v, t)$-blow-up for $v$ in $G$ and let $C_v = \{v^{1}, \ldots, v^{t}\}$.
    In particular, $C_v$ is a clique of size $t$ in $G'$ for each $v$ and $V(G') = \bigcup_{v \in V(G)} C_v$.

    To show that $H \boxtimes K_t$ contains $G'$ as a subgraph, let $V(K_t) = \{a_1, \ldots, a_t\}$.
    Define a function $\phi : V(G') \to V(H \boxtimes K_t)$ by $\phi(v^{i}) = (v, a_i)$.
    We claim that $\phi$ gives an isomorphism from $G'$ to a subgraph of $H \boxtimes K_t$.
    Take an edge $u^{i} v^{j} \in E(G')$ for $u, v \in V(G)$ and $i, j \in [t]$.
    Then either $uv \in E(G)$, or $u=v$ and $i \neq j$ holds.
    Since $a_i a_j \in E(K_t)$ whenever $i \neq j$, this implies that $(u, a_i)$ and $(v, a_j)$ are adjacent in $H \boxtimes K_t$.
    Therefore, we conclude that $H \boxtimes K_t$ contains $G'$.
\end{proof}

\begin{figure}
    \centering
    \includegraphics[scale=0.4]{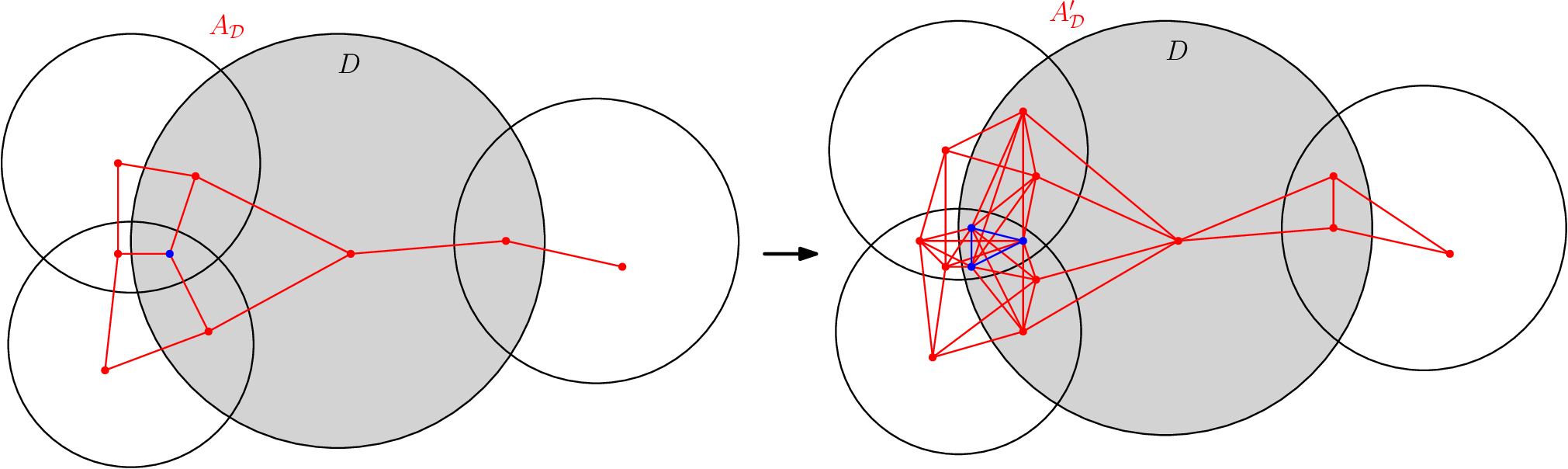}
    \caption{An illustration of $A'_\mathcal{D}$.}
    \label{fig:A'D}
\end{figure}

Next, we give an overview of our proof of~\cref{thm:prod_localradius_disk}.
Our proof consists of two parts.
The first part is to compute a product structure of $A_\mathcal{D}'$ using that of $A_\mathcal D$. 
Since this part is relatively simple, we defer its detail to the end of this section.
The second part is to construct a product structure of $G$ using that of $A_\mathcal{D}'$.
To do this, we first introduce the notion of \emph{depth-$d$ minor model}, and verify that $G$ can be viewed as a depth-$\rho$ minor of $A'_\mathcal D$.
\begin{definition}[Minor model~\cite{sparsity2012}]
    A \emph{minor model} of a graph $H$ in a graph $G$ is a function $\phi : V(H) \to 2^{V(G)}$ that satisfies the following properties:
\begin{enumerate}[label=(\roman*)] \setlength\itemsep{-0.1em}
    \item the collection $\{\varphi(v) : v \in V(H)\}$ is pairwise disjoint, \label{branch-set-disjoint}
    \item for each $v \in V(H)$, the set $\phi(v)$ induces a connected subgraph of $G$, and \label{branch-set-connectivity}
    \item for each $uv \in E(H)$, there is $xy \in E(G)$ such that $x \in \phi(u)$ and $y \in \phi(v)$. \label{adjacency-preserving}
\end{enumerate}
For $v \in V(H)$, the set $\phi(v)$ is called the \emph{branch set} of $v$.
A graph $H$ is a \emph{minor} of a graph $G$ if there is a minor model of $H$ in $G$.
For an integer $d \geq 0$, a minor model $\phi$ of $H$ in $G$ is called a \emph{depth-$d$ minor model} if it additionally satisfies that 
\begin{enumerate}[label=(iv)]
    \item $\rad(G[\phi(v)]) \leq d$ for each $v \in V(H)$. \label{small-radius}
\end{enumerate}
We say a graph $H$ is a \emph{depth-$d$ minor} of $G$ if there is a depth-$d$ minor model of $H$ in $G$.
\end{definition}

\begin{theorem}[{\cite[Theorem~11]{hickingbotham2024shallow}}]\label{Thm:Shallow}
Let $d,w\ge 0$ and $\ell\ge 1$ be integers.
If a graph $G$ is a depth-$d$ minor of $H\boxtimes P\boxtimes K_{\ell}$ for some graph $H$ of tree-width at most $w$ and some path $P$, then $G$ is a subgraph of $H'\boxtimes P\boxtimes K_{\ell(2d+1)^2}$ for some graph $H'$ of tree-width at most ${2d+w+1 \choose w}-1$ and some path~$P$.
\end{theorem}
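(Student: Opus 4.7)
The plan is to convert the depth-$d$ minor model into a subgraph embedding by contracting each branch set to a single point (its ``center''), and then compensating for the contraction by enriching $H$ to a graph $H'$ of controlled tree-width and inflating the clique factor from $K_\ell$ to $K_{\ell(2d+1)^2}$.

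Starting from the depth-$d$ minor model $\phi$ of $G$ in $X := H \boxtimes P \boxtimes K_\ell$, for each $v \in V(G)$ I would fix a center $c_v = (h_v, p_v, k_v) \in \phi(v)$ witnessing $\rad(X[\phi(v)]) \leq d$. If $uv \in E(G)$, then property~(iii) of the minor model gives $x \in \phi(u)$ and $y \in \phi(v)$ with $xy \in E(X)$, and combining this with the radius bound yields $\dist_X(c_u, c_v) \leq 2d+1$. Since distances in a strong product project to distances in each factor, this in turn gives $\dist_H(h_u, h_v) \leq 2d+1$ and $|p_u - p_v| \leq 2d+1$, while the $K_\ell$-coordinates $k_u, k_v$ are unconstrained.

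Next, I would construct $H'$ as a ``power-like'' enrichment of $H$ that records walks of length at most $2d+1$. Starting from a width-$w$ tree decomposition $(T, \{B_t\})$ of $H$, one builds a tree decomposition of $H'$ indexed by the same tree $T$ whose bags enumerate multisets of size at most $2d+1$ drawn with repetition from the size-$(w+1)$ bag $B_t$. The standard stars-and-bars count yields $\binom{2d+w+1}{w}$ such multisets, matching the claimed tree-width bound of $\binom{2d+w+1}{w} - 1$. Each $v \in V(G)$ is then placed at a triple $(h'_v, p_v, k'_v)$, where $h'_v$ encodes $h_v$ together with the offset needed to realize length-$(2d+1)$ adjacencies in $H$, and $k'_v$ absorbs the up-to-$(2d+1)$ offsets in the $P$-direction multiplied by the up-to-$(2d+1)$ slots required to distinguish $G$-vertices whose centers share the same $(h', p)$-coordinate, yielding the blow-up factor $(2d+1)^2$.

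The main obstacle I anticipate is constructing $H'$ so that its tree-width exactly matches the binomial $\binom{2d+w+1}{w} - 1$, rather than a looser bound that a naive thickening of bags along $T$ would yield; this requires carefully identifying which ``length-$(2d+1)$ walk configurations'' through each bag are actually realizable in $H$ and setting up the decomposition so that bag intersections along edges of $T$ behave correctly. A secondary but more mechanical step is verifying that the $K_{\ell(2d+1)^2}$-coordinates can be assigned consistently, which should follow from a greedy argument since only $(2d+1)^2 \cdot \ell$ vertices of $G$ can land at any given $(h', p)$-coordinate of $V(H') \times V(P)$.
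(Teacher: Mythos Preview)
The paper does not itself prove this theorem; it is quoted from \cite{hickingbotham2024shallow}. The only place the paper touches the argument is inside the proof of \cref{claim:algo_disk_depth_r}, where it sketches the Hickingbotham--Wood construction: take a \emph{normalized} tree decomposition $(T,\beta)$ of $H$ with $V(T)=V(H)$ and each subtree $T[x]$ rooted at $x$, project each branch set $\phi(v)$ to $H$, let $\tp(v)$ be the highest node of $T$ whose bag meets that projection, and take $H'$ to be the graph on the fibres $\phi'(u)=\bigcup_{\tp(v)=u}\phi(v)$ with the induced adjacency. This is a quotient-by-$\tp$ construction, not a power/multiset construction, so your route is genuinely different from what the paper (and the cited source) does.

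Your distance step is fine: centres of adjacent branch sets are at $H$-distance and $P$-distance at most $2d+1$, and folding the $P$-overshoot into a $K_{2d+1}$ factor is standard. The gap is the construction of $H'$. You declare $H'$ to be a ``power-like enrichment'' whose tree decomposition over the \emph{same} tree $T$ has bags consisting of multisets drawn from $B_t$, but you never say what $V(H')$ is, what $h'_v$ actually equals, or what the edges of $H'$ are; without that, there is nothing to verify \ref{TD:edge-coverage}--\ref{TD:connectedness} against. In particular, a multiset supported on $B_t$ need not sit in $B_s$ for a neighbouring bag, so the connectedness axiom is not automatic. Note also that the count you invoke is off: multisets of size \emph{at most} $2d+1$ from a $(w{+}1)$-set number $\binom{2d+w+2}{w+1}$, whereas $\binom{2d+w+1}{w}$ counts multisets of size \emph{exactly} $2d+1$. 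The numerology is suggestive, but it is not a substitute for actually building $H'$; the actual proof obtains that binomial from the combinatorics of the $\tp$ map on a normalized decomposition, not from enumerating walk signatures, and I do not see how to complete your sketch without essentially reproducing that argument.
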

We show that there is a depth-$\rho$ minor model of $G$ in $A_\mathcal{D}'$ where $\rho$ is the local radius of $G$.

\begin{lemma} \label{lem:blowup_minor}
    Let $G$ be a disk graph and let $\mathcal{D}$ be a geometric representation of $G$ of local radius $\rho$ for an integer $\rho \geq 0$.
    Then $A_\mathcal{D}'$ contains $G$ as a depth-$\rho$ minor.
\end{lemma}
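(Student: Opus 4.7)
The plan is to exhibit an explicit depth-$\rho$ minor model of $G$ in $A_\mathcal{D}'$ by letting the branch set of each disk $D$ consist of the copies of the faces contained in $D$, where each copy is labeled by $D$ itself.

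More precisely, recall that $A_\mathcal{D}'$ is obtained from $A_\mathcal{D}$ by a $(v,p_v)$-blow-up for each face $v$, where $p_v$ is the number of disks containing $v$. I would index the $p_v$ copies of $v$ by the disks in $\mathcal{D}$ containing $v$, so that each vertex of $A_\mathcal{D}'$ can be written as a pair $(v,D)$ with $v$ a face, $D \in \mathcal{D}$, and $v \subseteq D$. Two vertices $(v,D)$ and $(v',D')$ are adjacent in $A_\mathcal{D}'$ exactly when either $v = v'$ and $D \neq D'$, or $v \neq v'$ and $vv' \in E(A_\mathcal{D})$. Then, for each $D \in \mathcal{D}$, define
\[
\phi(D) \coloneq \{(v,D) : v \text{ is a face of } \mathcal{D} \text{ contained in } D\}.
\]

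Now I would verify the four axioms of a depth-$\rho$ minor model in order. Disjointness of the branch sets is immediate from the second coordinate in the labels. For adjacency preservation, if $DD' \in E(G)$ then $D \cap D' \neq \emptyset$, so (after the general-position assumption) there is a face $v$ with $v \subseteq D \cap D'$; the vertices $(v,D) \in \phi(D)$ and $(v,D') \in \phi(D')$ are two copies of the same face $v$ and hence adjacent in $A_\mathcal{D}'$. For connectedness and the depth bound, note that the subgraph of $A_\mathcal{D}'$ induced by $\phi(D)$ is isomorphic to $A_\mathcal{D}[D]$ via $(v,D) \mapsto v$, since among vertices of $\phi(D)$ only edges of the second type in the blow-up can occur. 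Thus the required conditions reduce to $A_\mathcal{D}[D]$ being connected of radius at most $\rho$, and the radius bound is the very definition of local radius.

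The one remaining item is the connectivity of $A_\mathcal{D}[D]$, which I expect to be the only nontrivial point. It follows from the fact that $D$ is a topologically simply connected region that is partitioned by the arcs of the other disks into the faces contained in $D$: any two such faces $\sigma,\sigma'$ can be joined by a simple curve inside $D$ that is in general position with respect to the disk boundaries, and each transverse boundary crossing corresponds to moving to an adjacent face in $A_\mathcal{D}[D]$, producing a walk from $\sigma$ to $\sigma'$. Since $\rad(A_\mathcal{D}[D]) \le \rho$ in particular forces $A_\mathcal{D}[D]$ to be connected, one can also just invoke the local radius hypothesis directly. Combining the four verifications yields the depth-$\rho$ minor model and completes the proof.
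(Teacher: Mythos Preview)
Your proposal is correct and follows essentially the same approach as the paper: you build the minor model by assigning to each disk $D$ the copies of the faces contained in $D$, labeled by $D$, and then verify the four minor-model axioms via the isomorphism $A_\mathcal{D}'[\phi(D)] \cong A_\mathcal{D}[D]$. Your explicit labeling of blow-up copies by disks and your remark that the connectivity of $A_\mathcal{D}[D]$ follows already from the finite-radius hypothesis are slightly more detailed than the paper's version, but the argument is otherwise identical.
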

\begin{proof}
For each $v \in V(G)$, let $D_v$ be the disk of $\mathcal{D}$ corresponding to $v$ and let $\Sigma_v$ be the set of faces of $\mathcal{D}$ that are contained in $D_v$.
    Recall that for each face $\sigma \in \Sigma_v$, a clique of size $p_\sigma$ is associated to $\sigma$ in $A_\mathcal{D}'$.
    Now, define the branch set $\phi(v)$ by taking a vertex from each clique that is associated to a face in $\Sigma_v$.
    Observe that since we substituted a clique of size $p_\sigma$ for each $\sigma$, we can take each set $\phi(v)$ to be vertex-disjoint so that $\phi$ satisfies~\ref{branch-set-disjoint}.
    Let $\psi$ be an embedding of $G$ into $H_1\boxtimes H_2\boxtimes\ldots \boxtimes H_k$. Then for each $i$, we denote the \emph{projection} of $\psi$ to $H_i$ by $\psi_{H_i}$. That is, $\psi_{H_i}(v)=v_i$ where $\psi(v)=(v_1,v_2,\ldots,v_k)$.

    Now, to prove that the function $\phi$ indeed gives a depth-$\rho$ minor model of $G$ in $A_\mathcal{D}'$, we show that $\phi$ satisfies~(ii),~(iii), and~(iv) with $d=\rho$.
    Since each vertex in $\phi(v)$ corresponds to a face of $\mathcal{D}$ contained in $D_v$, the following property holds:
    \begin{enumerate}[label=($\star$)]
        \item The subgraph of $A_\mathcal{D}'$ induced on $\phi(v)$ is isomorphic to $A_\mathcal{D}[D]$. \label{same-induced-subgraph}
    \end{enumerate}
    In particular, this shows that the function $\phi$ satisfies~(ii).
    To show that $\phi$ satisfies~(iii), take an edge $uv \in E(G)$.
    Then there is a face $\tau$ of $\mathcal{D}$ contained in both $D_u$ and $D_v$.
    By the choice of $\phi$, there are vertices $u', v'$ in the clique in $A_\mathcal{D}'$ corresponding to $\tau$ such that $u' \in \phi(u)$ and $v \in \phi(v)$.
    Since the blow-up operation preserves the adjacency between the vertices, $u'v' \in E(A_\mathcal{D}')$ so $\phi$ satisfies~(iii).
    Finally, since the local radius of $\mathcal{D}$ is $\rho$, the radius of $A_{\mathcal{D}}'[\phi(v)]$ is at most $\rho$ by~($\star$).
    Therefore, we conclude that $\phi$ is a depth-$\rho$ minor model of $G$ in $A_\mathcal{D}'$.
\end{proof}
By combining all previous results, we finally derive the product structure for disk graphs having a small local radius. 
\begin{theorem} \label{thm:prod_localradius_disk}
    Let $\rho \geq 0$ be an integer.
    Then every disk graph $G$ of local radius at most $\rho$ is a subgraph of $H \boxtimes P$, where $\tw(H) = O(\rho^9)$ and $P$ is a path.
\end{theorem}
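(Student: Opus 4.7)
The plan is to derive a product structure for $G$ by first establishing one for the auxiliary graph $A_{\mathcal D}'$ introduced before \cref{obs:blow-up-product}, and then invoking \cref{Thm:Shallow} to resolve the shallow-minor relation. The starting observation is that $A_{\mathcal D}$ is a planar graph: each face of the arrangement of $\mathcal D$ can be represented by an interior point, and the adjacencies between faces sharing a common boundary arc can be drawn as pairwise non-crossing curves through those arcs. Hence, the planar product structure theorem of Dujmović--Joret--Micek--Morin--Ueckerdt--Wood yields a graph $H_0$ with $\tw(H_0) \leq 6$ and a path $P$ such that $A_{\mathcal D}$ is a subgraph of $H_0 \boxtimes P$.

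Next, by \cref{lem:disk_small_ply}, the ply of $\mathcal D$ is at most $2\rho+1$, so the blow-up operation that produces $A_{\mathcal D}'$ from $A_{\mathcal D}$ replaces each vertex by a clique of size at most $2\rho+1$. Applying \cref{obs:blow-up-product} to the inclusion $A_{\mathcal D} \subseteq H_0 \boxtimes P$, I obtain an embedding of $A_{\mathcal D}'$ into $H_0 \boxtimes P \boxtimes K_{2\rho+1}$. Combining this with \cref{lem:blowup_minor}, which provides a depth-$\rho$ minor model of $G$ in $A_{\mathcal D}'$, it follows that $G$ is a depth-$\rho$ minor of $H_0 \boxtimes P \boxtimes K_{2\rho+1}$.

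I then apply \cref{Thm:Shallow} with $d = \rho$, $w = 6$, and $\ell = 2\rho+1$. This produces a graph $H'$ with $\tw(H') \leq \binom{2\rho+7}{6} - 1 = O(\rho^6)$ and a (possibly different) path, which I also denote by $P$, such that $G$ is a subgraph of $H' \boxtimes P \boxtimes K_{(2\rho+1)^3}$. Using the associativity of the strong product, I set $H \coloneq H' \boxtimes K_{(2\rho+1)^3}$, so that $G \subseteq H \boxtimes P$, and bound the tree-width of $H$ via \cref{lem:tw_of_product}:
\[
\tw(H) \leq (2\rho+1)^3 \cdot \bigl(\tw(H') + 1\bigr) - 1 = O(\rho^9),
\]
which yields the claimed estimate.

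The main conceptual ingredient is recognizing that the arrangement graph $A_{\mathcal D}$ is planar and that the disk graph $G$ can be accessed as a shallow minor of a suitable blow-up of it; both are essentially already packaged in \cref{lem:blowup_minor}. Beyond that, the only subtlety is the parameter bookkeeping as the ply (from \cref{lem:disk_small_ply}), the depth (from the local radius), and the tree-width (from the planar product structure theorem) cascade through \cref{Thm:Shallow} and \cref{lem:tw_of_product}, producing the final $O(\rho^9)$ bound.
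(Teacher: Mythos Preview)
Your proof is correct and follows essentially the same route as the paper: planarity of $A_{\mathcal D}$ gives a product structure, the ply bound from \cref{lem:disk_small_ply} controls the blow-up factor, \cref{lem:blowup_minor} realizes $G$ as a depth-$\rho$ minor, and then \cref{Thm:Shallow} together with \cref{lem:tw_of_product} yields the $O(\rho^9)$ bound. The only cosmetic difference is that the paper passes through the uniform blow-up $\widetilde{A_{\mathcal D}}$ to match the hypothesis of \cref{obs:blow-up-product} exactly, whereas you implicitly use that $A_{\mathcal D}'$ is a subgraph of it.
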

\begin{proof}
Due to~\cref{thm:planar-product}, the arrangement graph $A_\mathcal{D}$ of $\mathcal{D}$ is a subgraph of $J \boxtimes P$ for some graph~$J$ of tree-width at most $6$ and a path $P$.
Moreover, by~\cref{lem:disk_small_ply}, each vertex in $A_\mathcal{D}$ is contained in at most $2\rho+1$ distinct disks in $\mathcal{D}$.
Let $\widetilde{A_{\mathcal{D}}}$ the graph obtained from $A_\mathcal{D}$ by applying a $(v, 2\rho+1)$-blow-up for each $v \in V(A_\mathcal{D})$.
Then $A_\mathcal{D}'$ is a subgraph of $\widetilde{A_\mathcal D}$ and therefore, $A_\mathcal{D}' \subseteq J \boxtimes P \boxtimes K_{2\rho+1}$ by~\cref{obs:blow-up-product}.
Thus, by~\cref{lem:blowup_minor} and~\cref{Thm:Shallow}, $G$ is a subgraph of $J' \boxtimes P \boxtimes K_{(2\rho+1)^3}$ for some graph $J'$ of tree-width at most ${2\rho+7\choose 6}-1$.
Let $H \coloneqq J' \boxtimes K_{(2\rho+1)^3}$. 
Then $G$ is a subgraph of $H\boxtimes P$, and $\tw(H) \leq \left({2\rho+7\choose 6}-1\right)\cdot(2\rho+1)^3=O(\rho^9)$ by~\cref{lem:tw_of_product}.
\end{proof}

\subsection{Computing the product structure in linear time}
\label{sec:disk-linear-time}
In this section, we present a linear-time algorithm that computes the product structure specified in~\cref{thm:prod_localradius_disk} as well as an appropriate embedding of the input graph into the product structure.
Here, \emph{an embedding of $G$ into $H$} is defined as an injective homomorphism\footnote{A mapping that preserves the adjacency} from $G$ to $H$. The following is the main result of this section.

\begin{lemma}\label{lem:algo_product_disk}
    Let $G$ be a disk graph such that a geometric representation $\mathcal D$ of $G$ with local radius~$\rho$ is given. 
    Then one can compute two graphs $H$ and $P$ such that $\tw(H) = O(\rho^9)$, $P$ is a path, and $G$ is a subgraph of $H \boxtimes P$, in time $2^{O(\rho)}\cdot n$.
\end{lemma}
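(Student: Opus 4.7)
The plan is to execute the existential proof of Theorem~\ref{thm:prod_localradius_disk} algorithmically, step by step. First I would compute the arrangement of~$\mathcal{D}$ together with the planar arrangement graph~$A_\mathcal{D}$. Because the ply of~$\mathcal{D}$ is at most $2\rho+1$ by Lemma~\ref{lem:disk_small_ply}, standard results on arrangements of bounded ply yield $|V(A_\mathcal{D})|, |E(A_\mathcal{D})| = O(\rho n)$, and the arrangement can be built in time $\mathrm{poly}(\rho)\cdot n$ by a sweepline in the real RAM model. Using an LCA data structure on a DFS tree of $A_\mathcal{D}$ combined with the parity of boundary-arc crossings along root-to-face paths, I would within the same time bound tabulate, for every disk $D$, the induced subgraph $A_\mathcal{D}[D]$ and, for every face $\sigma$, the list of disks containing $\sigma$.

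Next, I apply the linear-time algorithm of Theorem~\ref{thm:planar-product} to $A_\mathcal{D}$, obtaining a graph $J$ with $\tw(J) \leq 6$, a path $P$, and an embedding $\iota_0 : V(A_\mathcal{D}) \hookrightarrow V(J) \times V(P)$ realising $A_\mathcal{D} \subseteq J \boxtimes P$. I then build $A'_\mathcal{D}$ explicitly by performing a $(\sigma, p_\sigma)$-blow-up for every face~$\sigma$ (with $p_\sigma \leq 2\rho+1$) and extend $\iota_0$ to an embedding $\iota_1 : V(A'_\mathcal{D}) \hookrightarrow V(J)\times V(P)\times V(K_{2\rho+1})$ as in Observation~\ref{obs:blow-up-product}. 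Using the precomputed data from the first step, I construct the depth-$\rho$ minor model $\phi : V(G) \to 2^{V(A'_\mathcal{D})}$ of Lemma~\ref{lem:blowup_minor} greedily: for each $v \in V(G)$, iterate over the faces inside $D_v$ and claim one previously unused vertex from each associated clique. The ply bound guarantees feasibility, and the total cost is $O(\rho n)$.

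The last step is an algorithmic version of Theorem~\ref{Thm:Shallow}: given $\iota_1$ and $\phi$, produce in time $2^{O(\rho)}\cdot n$ a graph~$J'$ of tree-width $\binom{2\rho+7}{6}-1 = O(\rho^6)$, an embedding of $G$ into $J' \boxtimes P \boxtimes K_{(2\rho+1)^3}$, and a tree decomposition of~$J'$ of that width. The proof of Theorem~\ref{Thm:Shallow} classifies every branch set by its ``signature'' inside a $(2\rho+1)$-wide slab of $J\boxtimes P\boxtimes K_{2\rho+1}$, takes one vertex of $J'$ per signature (per slab), and groups edges accordingly; I plan to follow this construction verbatim while maintaining the tree decomposition of $J'$ incrementally. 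Setting $H := J' \boxtimes K_{(2\rho+1)^3}$, Lemma~\ref{lem:tw_of_product} gives $\tw(H) = O(\rho^9)$, and composing the embeddings yields $G \hookrightarrow H \boxtimes P$ as required.

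I expect the main obstacle to be this last step. Enumerating signatures costs $2^{O(\rho)}$ per branch set, so to achieve the claimed $2^{O(\rho)}\cdot n$ running time the construction of $J'$, its tree decomposition, and the embedding must each be updated in time depending only on $\rho$ per branch set, without any logarithmic overhead. The only other place where overhead must be controlled is the arrangement computation; the LCA data structure on a DFS tree of $A_\mathcal{D}$, combined with the fact that each face is contained in at most $O(\rho)$ disks, is what lets us avoid a $\log n$ factor there.
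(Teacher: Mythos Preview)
Your proposal follows the same three-step outline as the paper: compute $A_\mathcal{D}$ and its planar product structure; blow up to $A'_\mathcal{D}$ and extend the embedding into $J\boxtimes P\boxtimes K_{2\rho+1}$; then make Theorem~\ref{Thm:Shallow} algorithmic. Steps~(i) and~(ii) match the paper closely.

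Where you diverge is in the last step. Your characterization of the Hickingbotham construction as ``classifying branch sets by their signature inside a $(2\rho+1)$-wide slab, one vertex of $J'$ per signature per slab'' does not match what that proof actually does. The paper, following~\cite{hickingbotham2024shallow}, instead (a) computes a \emph{normalized} tree decomposition $(T,\beta)$ of the tree-width-$6$ graph $J$ (a rooted decomposition in which $V(T)=V(J)$ and each subtree $T[x]$ is rooted at $x$), and (b) for every branch set $\phi(v)$ computes $\tp(v)$, the highest node of $T$ whose bag meets the projection of $\phi(v)$ onto $J$; branch sets with the same $\tp$-value are merged to form the vertices of $J'$. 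The LCA data structure is used \emph{here}, on $T$, to find each $\tp(v)$ in $O(|\phi(v)|)$ time --- not in step~(i) for face/disk containment as you propose. This is also why the paper's per-branch-set cost in step~(iii) is polynomial in $\rho$, not the $2^{O(\rho)}$ you anticipate from signature enumeration; the overall $2^{O(\rho)}\cdot n$ bound is just a loose envelope.

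Your signature sketch is too vague to verify as stated, and it is not clear that grouping by slab-signature recovers the tree-width bound $\binom{2\rho+7}{6}-1$ for $J'$. If you intend to ``follow the construction verbatim,'' you should align your plan with the normalized-decomposition-plus-$\tp$ approach rather than the slab picture.
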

\begin{proof}
We sequentially compute the following graphs as well as appropriate strong products corresponding to them, which appear in the proof of~\cref{thm:prod_localradius_disk}, in linear time.
\begin{itemize}  \setlength\itemsep{-0.1em}
    \item (i) the arrangement graph $A_\mathcal D$ (\cref{obs:algo_arrangement_graph});
    \item (ii) $A_\mathcal{D}'$ obtained from $A_\mathcal D$ by blow-up operations (\cref{claim:algo_disk_blow}); and
    \item (iii) A depth-$\rho$ minor model of $G$ to $A_\mathcal{D}'$ (\cref{claim:algo_disk_depth_r}).
\end{itemize} 

\begin{claim} \label{obs:algo_arrangement_graph}
    We can compute $A_\mathcal D$ in time $O(\rho n)$. Furthermore, we can compute a product structure $H\boxtimes P$ containing $G$ with the embedding of $\mathcal A_D$ into $H\boxtimes P$ in time $O(\rho n)$.
    Here, $H$ is a graph of $\tw(H)\leq 6$, and $P$ is a path.
\end{claim} 
\begin{clproof}
        First, we show the first part.
    The ply of $\mathcal D$ is at most $2\rho+1$ by~\cref{lem:disk_small_ply}. Subsequently, the complexity of the arrangement graph $A_\mathcal D$, which is $\abs{V(A_\mathcal D)}+\abs{E(A_\mathcal D)}$, is $O(\rho n)$.
    To see this, let $V_\mathcal D$, $E_\mathcal D$, and $F_\mathcal D$ be the set of vertices, arcs, and faces of $\mathcal D$, respectively. 
    For a disk $D\in \mathcal D$, at most $O(\rho)$ disks of $\mathcal D$ having radii greater than the radius of $D$ can intersect $D$. This means that $\abs{E(G)}=O(\rho n)$ because one can charge each $uv\in E(G)$ into $u$ in the case that the radius of $u$ is smaller than that of $v$.
    Since each pair of disk boundaries intersects at most twice, $\abs{V_\mathcal D}=O(\rho n)$.
    Since each vertex of $V_\mathcal D$ is an intersection point of four arcs of $E_\mathcal D$,
    $2\abs{E_\mathcal D}=4\abs{V_\mathcal D}$. Subsequently, $\abs{E_\mathcal D}=O(\rho n)$.
    Due to the Euler's formula that $\abs{V_\mathcal D}-\abs{E_\mathcal D}+\abs{F_\mathcal D}=2$, $\abs{F_\mathcal D}=O(\rho n)$.
    Note that the vertex set of $A_\mathcal D$ is exactly $F_\mathcal D$, and each edge of $A_\mathcal D$ corresponds to an arc of $E_\mathcal D$.
    Therefore, the complexity of $A_\mathcal D$ is $O(\rho n)$.
    Furthermore, it is not hard to see that obtaining the planar graph $A_\mathcal D$ together with its embedding in the plane can be done in time linear in the complexity of $A_\mathcal D$, which is $O(\rho n)$.

    For the second part, since $A_\mathcal D$ is a planar graph, we can compute product structure and corresponding embedding in time $O(\abs{V(A_\mathcal D)})=O(\rho n)$ due to~\cref{thm:planar-product}. 

\end{clproof}

Note that each edge of $G$ corresponds to an arc of $A_\mathcal D$. 
Then,~\cref{obs:algo_arrangement_graph} implicitly shows that the number of edges in $G$ is $O(\rho n)$. 
This sparsity provides the key insight that allows us to design a linear-time ($O_\rho(n)$) algorithm.
\begin{claim} \label{claim:algo_disk_blow}
    We can compute $A'_\mathcal D$ in $O(\rho^3 n)$ time.
    Furthermore, let $\psi$ be a given embedding of $A_\mathcal D$ into $H\boxtimes P$ specified in~\cref{obs:algo_arrangement_graph}.
    Then we can compute an embedding $\psi'$ of $A_\mathcal{D}'$ into $H\boxtimes P\boxtimes K_{2\rho+1}$ in time $O(\rho^2 n)$.
\end{claim}
\begin{clproof}
    We first show the first part that computes $A_\mathcal D'$.
    First, we replace each $v\in V(A_\mathcal D)$ with a clique $K_v$ of size $p_v$ in $O(\rho^2 n)$ times in total, as $p_v=O(\rho)$ by~\cref{lem:disk_small_ply}.
    Then, for each edge $uv\in E(A_\mathcal D)$, we add all possible edges between $V(K_u)$ and $V(K_v)$. This takes at most $O(\rho^2)$ time per edge and therefore computing $A_\mathcal{D}'$ takes $O(\rho^3 n)$ time in total. This verifies the first part.

    For the second part, from the proof of~\cref{thm:prod_localradius_disk}, $A_\mathcal{D}'$ is a subgraph of $H\boxtimes P\boxtimes K_{2\rho+1}$.
    Note that each vertex $v$ of $A_\mathcal D$ corresponds to the clique $K_v$ of size $p_v\leq 2\rho+1$ in $A_\mathcal D$.
    We define a one-to-one mapping $\psi'$ as follows.
    For a clique $K_v$ constructed from the blow-up operation on $v\in V(A_\mathcal D)$, 
    $\psi'$ maps the vertices of $K_v$ into $(\psi(v), i)$ for $i\in [2\rho+1]$. Since the size of $K_v$ is at most $2\rho+1$, we can appropriately construct a one-to-one mapping $\psi'$.
    Obviously, computing $\psi'$ takes $\abs{V(A_\mathcal{D}')}=O(\rho^2 n)$ time.
        
    We show that $\psi'$ is an embedding of $A_\mathcal{D}'$ into $H\boxtimes P\boxtimes K_{2\rho+1}$. First, all but the last coordinate of $\psi'$ preserves the adjacency since $\psi$ is an embedding of $A_\mathcal D$ into $H\boxtimes P$.
    Furthermore, since the last coordinate corresponds to $K_{2\rho+1}$, the last coordinates of $\psi(u)$ and $\psi(v)$ form an edge of $K_{2\rho+1}$ for every vertex pair $u$ and $v$. 
\end{clproof}
    
\begin{claim} \label{claim:algo_disk_depth_r}
    Let $\psi'$ be a given embedding of $A_\mathcal{D}'$ into $H\boxtimes P\boxtimes K_{2\rho+1}$ specified in~\cref{claim:algo_disk_blow}. Then we can compute a graph $J$ such that $G\subseteq J\boxtimes P$, $P$ is a path, and $\tw(J)=O(\rho^9)$, in time $2^{O(\rho)} \cdot n$.
\end{claim}
\begin{clproof}
    We show that we can construct the product structure $H'\boxtimes K_{(2\rho+1)^3}\boxtimes P$ of~\cref{Thm:Shallow} in time $2^{O(\rho)} \cdot n$.
    Recall that the proof of~\cref{Thm:Shallow} 
    (\cite[Theorem~7, Theorem~11]{hickingbotham2024shallow}) consists of two steps. First, they compute a so-called normalized tree decomposition $(T,\beta)$~\cite{dujmovic2023graph} of $H$ such that
    \begin{itemize} \setlength\itemsep{-0.1em}
        \item the tree-width of $(T,\beta)$ is $O(\tw(H))$,
        \item $T$ is a rooted tree with $V(T)=V(H)$, and
        \item For each $x\in V(H)$, the subtree $T[x]\coloneqq \{y\in V(T)\mid x\in \beta(y)\}$ is rooted at $x$.
    \end{itemize}
    Second, given a normalized tree decomposition and a minor model $\phi$ of $G$ to $A_\mathcal{D}'$, they compute $H'$ as follows.
    Let $\bar H\coloneqq H\boxtimes K_{2\rho+1}$ and let $\bar \psi$ be an embedding of $A_\mathcal{D}'$ into $\bar H\boxtimes P$ such that $\bar \psi \equiv \psi'$.
    Recall that $\bar \psi_H$ is a projection of $\bar \psi$ to $H$.
    For each branch set $\phi(v)$, consider the set 
    $\bar \psi_H(\phi(v)) \coloneqq \{ \bar \psi_H(x) \mid x \in \phi(v) \}$. 
    Let $\tp(v)$ be the highest node among all nodes of $T$ such that $\bar \psi_H(\phi(v))\cap \beta(x) \ne \emptyset$.
    Define the new branch set $\phi'(u)$ as the union of all branch sets $\phi(v)$ such that $\tp(v)=u$.
    Then they set $H'$ whose vertex set is the set of $\phi'(u)$'s, and edge set is $\{\phi'(u)\phi'(v)\mid \text{there is an edge of } A_\mathcal{D}' \text{ between } \phi'(u), \phi'(v)\}$.

    We show that these steps can be done in linear time. 
    For the first step, we compute a (standard) tree decomposition $(T',\beta')$ of $H$ having tree-width $O(\tw(H))$ in $2^{O(\tw(H))}n$ time using tree-width approximation algorithms 
    (e.g., \cite{bodlaender1993linear,korhonen2022twapprox}).
    We remark that the transformation described in~\cite{dujmovic2023graph} can be done in linear time with respect to the total size of all bags. Indeed, each step—such as rooting subtrees, duplicating nodes, and contracting auxiliary nodes—can be carried out by standard root-to-leaf tree traversals. Therefore, the overall procedure runs in time linear in the total bag size of $(T',\beta')$, which is $O(\tw(H)n)$.
    For the second step, the main bottleneck is to compute $\tp(v)$ for all branch sets $\phi(v)$. This can be computed by applying the LCA (lowest common ancestor) query $\abs{\phi(v)}$ times. By applying the well-known optimal data structures for this subproblem (e.g.,~\cite{bender2000lca}), we can compute all $\tp(v)$ in $O(n)+O\left(\sum_v \abs{\phi(v)}\right)=O(\rho n)$ time. After that, we can compute $H'$ in $\abs{E(A_\mathcal{D}')}=O(\rho^3 n)$ time by traversing each edge of $A_\mathcal{D}'$ and insert edges into $H'$ appropriately.    
    By setting $J \coloneqq H'\boxtimes K_{(2\rho+1)^3}$ and the fact that $\tw(H')=O(\rho^6)$, we can compute $J$ with the desired properties in time $2^{O(\rho)} \cdot n$.   
\end{clproof}
\end{proof}
Finally, we conclude the proof of~\cref{thm:disk-product-structure} by plugging~\cref{thm:prod_localradius_disk} and~\cref{lem:algo_product_disk}.
    
\section{Hardness for \texorpdfstring{$\mathcal{F}$}\textsc{-subgraph-free edge deletion}} \label{section:hardness}

In this section, we prove~\cref{thm:hardness_K3_factor}.
We prove this by a reduction from \textsc{Planar $1$-in-$3$-SAT}, which is known to be \NP-complete~\cite[Lemma~2.1]{dyer1986planar}.

\medskip
\noindent
\fbox{\parbox{0.97\textwidth}{
	\textsc{Planar $1$-in-$3$-SAT}\\
	\textbf{Input :} A 3-CNF formula $\varphi$ whose incidence graph $G_\varphi$ is a planar graph.\\
	\textbf{Task :} Determine whether there exists a satisfying assignment for $\varphi$ such that exactly one literal in each clause is assigned by $\textsc{True}$.
    }}
\medskip

More specifically, given a 3-CNF formula $\varphi$ as well as the planar incidence graph $G_\varphi$, we construct a planar graph $G_\varphi'$ so that $G_\varphi'$ has a triangle factor if and only if there is a satisfying assignment for $\varphi$ such that exactly one literal in each clause of $\varphi$ is assigned by \textsc{True}.
We introduce a high-level overview of our reduction. 
We first replace each variable vertex $x_i$ of $G_\varphi$ with a ``sun-like structure'' $X_i'$, which is obtained by starting from an even cycle and attaching a triangle for each edge. 
Note that to cover all the vertices of the even cycle, we must select all the ``odd-th triangles'' or ``even-th triangles'' which will correspond to \textsc{True} or \textsc{False}, respectively.
For each clause $C_i$, we replace it with a gadget $D_i$ with the property that it has three disjoint vertex sets $U_1, U_2, U_3$ on its outer face such that $D_i-\bigcup_{i \in I} U_i$ has a perfect triangle tiling if and only if $\abs{I}=2$.
Finally, we shall connect those gadgets so that if $x_i \in C_j$, then connect the odd-th triangle and $D_j$ and if $\lnot x_i \in C_j$, then connect the even-th triangle and $D_j$.

Here are two technical issues.
First, note that each of $U_1$, $U_2$, and $U_3$ consists of three vertices, however, we cannot connect three consecutive triangles of $X_i'$ to $D_j$ simultaneously. 
Second, if we connect only the odd-th triangles or only the even-th triangles to clause gadgets, then some vertices of $X_i'$ remain uncovered. 
We resolve these issues by constructing an additional gadget, called \emph{splitter}, which makes four copies of a vertex in $X_i'$. 
More specifically, we first attach a splitter and we connect three of the copied vertices to $D_j$ and connect  the remaining copied vertex to the remaining vertices of $X_i'$.
See \cref{fig:ReductionIllust} for an example.

\begin{figure}
    \centering
    \includegraphics[scale=1]{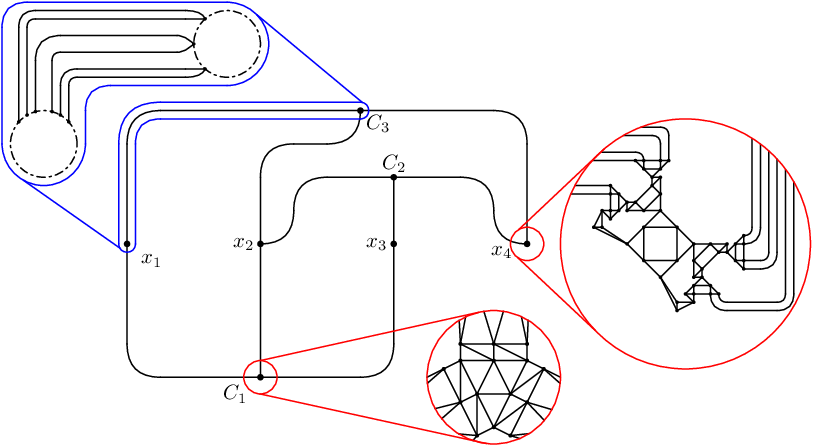}
    \caption{An example of our reduction. Here, the given formula is $\varphi\equiv C_1\land C_2\land C_3$, where $C_1\equiv x_1\lor x_2\lor x_3$, $C_2\equiv x_2\lor \lnot x_3\lor x_4$, and $C_3\equiv x_1\lor x_2\lor \lnot x_4$, and $\mathcal{X}=\set{x_1,x_2,x_3,x_4}$.}
    \label{fig:ReductionIllust}
\end{figure}

\subparagraph{Splitter gadget.}
We first define $L(g, g_1, g_2)$, an auxiliary graph $L$ together with specified vertices $g$, $g_1$, $g_2$ as follows:
\begin{itemize}
    \item $V(L)=\set{g,g_1,g_2, h_1^1,h_2^1, h^2_1, h^2_2, h^2_3, h^2_4}$; and
    \item $E(L)=\set{gh^1_1, gh^1_2, h^1_1h^1_2, h^1_1h^2_1, h^1_1h^2_2, h^1_2h^2_3, h^1_2h^2_4, h^2_1h^2_2, h^2_2h^2_3, h^2_3h^2_4, h^2_1g_1, h^2_2g_1, h^2_3g_2, h^2_4g_2}$.
\end{itemize}
Next, we define the \emph{splitter} $H^{\mathrm{sp}}(g, h, h_1, h_2, h_3)$, which is a graph $H^{\mathrm{sp}}$ together with specified vertices $g$, $h$, $h_1, h_2$ and $h_3$.
Intuitively, these specified vertices are the only vertices of the splitter that will be connected to other gadgets.
We construct $H^{\mathrm{sp}}$ as follows.
Starting from one copy of $L(g,g_1,g_2)$ and two vertex-disjoint copies $L(g^1, h_1, h_2)$ and $L(g^2, h_3, g_2^2)$ of $L(g, g_1, g_2)$, we
\begin{itemize}
    \item identify $g_i$ with $g^i$ for each $i\in \{1, 2\}$;
    \item add three vertices $w_1$, $w_2$, and $h$; and
    \item add the edges $g^2_2w_1$, $g^2_2w_2$, $w_1w_2$, $w_1h$, and $w_2 h$.
\end{itemize}
See~\cref{fig:Splitter} for an illustration. 
\begin{figure}[h]
    \centering
    \includegraphics[scale=0.7]{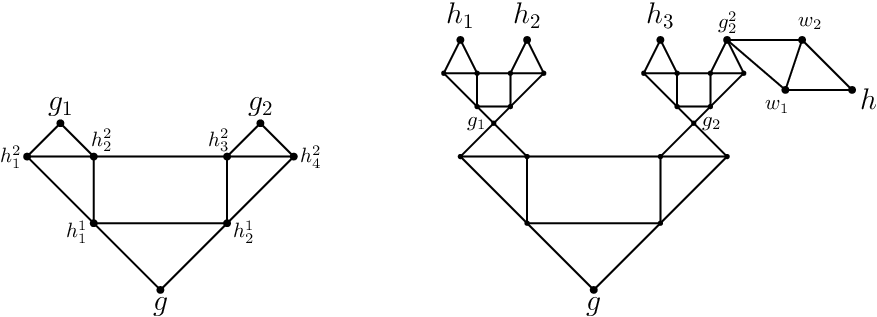}
    \caption{An illustration of $L(g,g_1,g_2)$ (left) and a splitter $H^{\mathrm{sp}}(g, h, h_1, h_2, h_3)$ (right).}
    \label{fig:Splitter}
\end{figure}

Let $\varphi\equiv C_1\land C_2\land\cdots \land C_m$ be a CNF formula such that each clause $C_i$ has exactly three literals.
Let $\mathcal{X}=\set{x_1, x_2, \ldots, x_n}$ be the set of all variables in $\varphi$ and $\mathcal{C} = \{C_1, \ldots, C_m\}$ be the set of all clauses.
Recall that the incidence graph $G_\varphi$ of $\varphi$ is planar. We fix a planar embedding of $G_\varphi$.
We begin by describing the variable and clause gadgets, and then explain how the entire graph is transformed into $G_\varphi'$.

\subparagraph{Variable gadget.}
For a variable $x_i\in\mathcal{X}$, let 
$\ell$ be its degree in $G_\varphi$ and let $C_{1}^i, C_{2}^i, \ldots, C_{\ell}^i \in \mathcal{C}$ be the clauses adjacent to $x_i$ in $G_\varphi$.
Here, the indices are assigned according to the cyclic order in the planar drawing of $G_\varphi$.
We construct an auxiliary graph $X_i'$ as follows. Starting from the cycle on the vertex set  $\{ v_1^i, v_2^i,\ldots, v_{4\ell}^i \}$, we add edges $v_j^i v_{j+2}^i$ for all odd $j \in [4\ell]$ where $v_{4\ell+1}^i=v_1^i$.
\begin{figure}[h]
    \centering
    \includegraphics[scale=0.6]{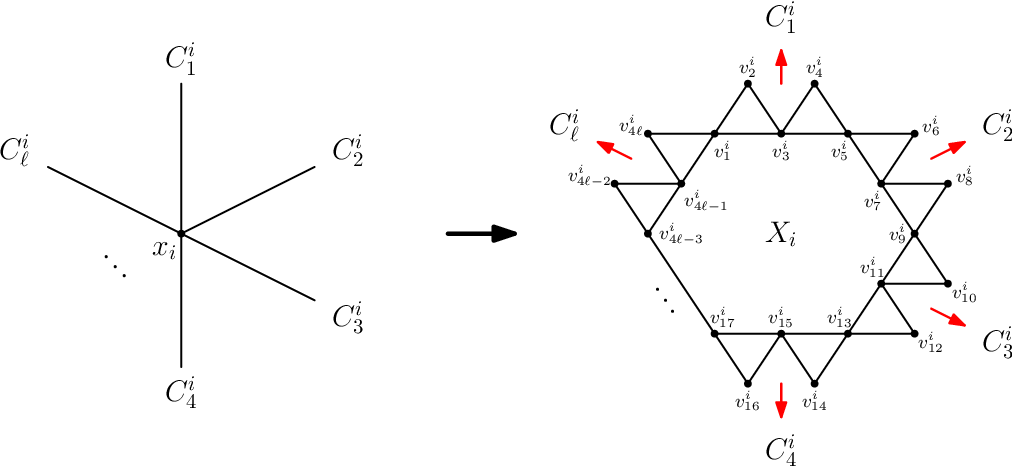}
    \caption{An illustration of a variable gadget.}
    \label{fig:GadgetVariable}
\end{figure}

Next, we define the variable gadget $X_i$ based on $X_i'$. For each integer $j\in [\ell]$, we consider a copy $H^{i, j}(g^{i, j}, h^{i, j}, h_1^{i, j}, h_2^{i, j}, h_3^{i, j})$ of $H^{\mathrm{sp}}$ with specified vertices $g^{i, j}, h^{i, j}, h_1^{i, j}, h_2^{i, j}$, and $h_3^{i, j}$. 
In the rest of this section, we write $H^{i,j}\coloneqq H^{i, j}(g^{i, j}, h^{i, j}, h_1^{i, j}, h_2^{i, j}, h_3^{i, j})$ if it is clear from the context.
Then we construct $X_i$ by gluing each $H^{i,j}$ for $j\in [\ell]$ onto $X_i'$ as follows. 
\begin{itemize}
    \item If $C_{j}^i$ contains $x_i$, then identify $g^{i, j}$ with $v_{4j}^i$ and $h^{i, j}$ with $v_{4j-2}^i$.
    \item If $C_{j}^i$ contains $\lnot x_i$, then we identify $g^{i, j}$ with $v_{4j-2}^i$ and $h^{i, j}$ with $v_{4j}^i$.
\end{itemize}
See~\cref{fig:VarSplitter} for an illustration.
\begin{figure}[h]
    \centering
    \includegraphics[scale=0.6]{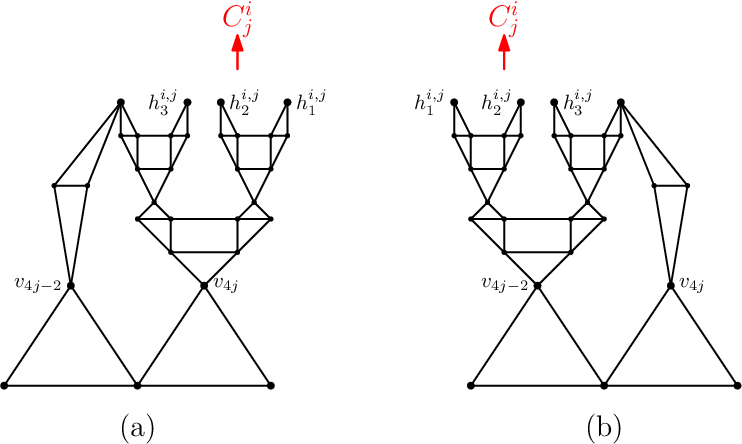}
    \caption{An illustration of two cases for $X_i$. (a) $C^i_j$ contains $x_i$. (b) $C^i_j$ contains $\lnot x_i$.}
    \label{fig:VarSplitter}
\end{figure}

\subparagraph{Clause gadget.}
For a clause $C_i \in \mathcal{C}$, let $x_1^i, x_2^i, x_3^i \in \mathcal{X}$ be its neighbors in $G_\varphi$, whose indices are assigned according to the cyclic order in the planar drawing of $G_\varphi$.  
We first define an auxiliary graph $W$ on $6$ vertices as follows: 
We start with a cycle of length $6$ on the vertex set $\set{a_1,a_2,\ldots, a_6}$ and add three chords  $a_1a_5$, $a_2a_4$, and~$a_2a_5$.

Let $W^1, W^2, W^3$ be the three disjoint copies of $W$ with vertex sets $V(W^j)=\{i \in [6] \mid a_{i}^j\}$ for each $j \in [3]$, where $a_{i}^j$ denotes the copy of $a_i$ in $W^j$.
Then starting from $W^1,W^2$ and $W^3$, we construct a clause gadget $D$ by identifying $a_{1}^j$ with $a_{3}^{j+1}$ for each $j \in [3]$; and adding edges $a_{2}^j a_{2}^{j+1}$ for each $j\in [3]$ where $a_{i}^4=a_{i}^1$ for $i \in [6]$.
See~\cref{fig:GadgetClause} for an illustration.

\begin{figure}[h]
    \centering
    \includegraphics[scale=0.6]{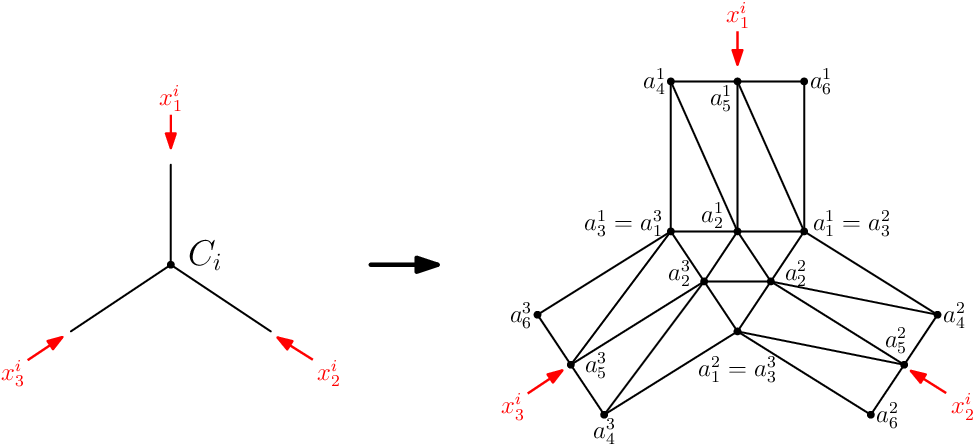}
    \caption{An illustration of a clause gadget.}
    \label{fig:GadgetClause}
\end{figure}

\subparagraph{Putting all together: the graph $G'_\varphi$.}
Finally, we construct $G'_\varphi$ from $G_\varphi$ as follows. See also~\cref{fig:ReductionIllust} for an illustration.
\begin{itemize}
    \item For each $x_i\in\mathcal{X}$, we replace it by $X_i$.
    \item For each $C_i\in \mathcal C$, we replace it by the copy of $D$ on vertex set $\{a_{k}^{i, j} \mid j \in [3], k \in [6]\}$, where each $a_k^{i, j}$ is a copy of $a_k^j$.
    \item For each edge $x_i C_{i'}$ of $G_\varphi$, let $j$ and $j'$ be indices such that $C_j^i = C_{i'}$ and $x_{j'}^{i'} = x_i$. 
    Note that $C_j^i$ and $x_{j'}^{i'}$ are a clause and a variable that are adjacent to $x_i$ and $C_{i'}$ in $G_\varphi$, respectively. 
    Then we identify $h_k^{i, j}$ and $a_{k+3}^{i', j'}$ for each $k \in [3]$ if $x_i \in C_{i'}$ and $h_k^{i, j}$ and $a_{7-k}^{i', j'}$ for each $k \in [3]$ if $\lnot x_i \in C_{i'}$.
\end{itemize}

Note that variable and clause gadgets are planar.
In addition, we can identify $h_k^{i, j}$ and $a_{k+3}^{i', j'}$ or $a_{7-k}^{i', j'}$ along the edge $x_i C_{i'}$ of $G_\varphi$. Since $G_\varphi$ is planar, this identification can be done without making a crossing. 
Thus, $G'_\varphi$ is a planar graph.

In the rest of this section, we show that it is a correct reduction. Before we show~\cref{thm:hardness_K3_factor}, we state key properties of the splitter gadget~(\cref{claim:splitter}), the variable gadget~(\cref{claim:gadget_for_variable}), and the clause gadget~(\cref{claim:gadget_for_clause}).

\begin{lemma}\label{claim:splitter}
    Let $T$ be a subgraph of $H^{\mathrm{sp}}$ consisting of vertex-disjoint triangles such that $V(T) \supseteq V(H^{\mathrm{sp}}) \setminus \{g, h, g_1, g_2, g_3\}$ and $\abs{V(T) \cap \{g, h\}}=1$. Then $V(T)\cap \{g,h,h_1,h_2,h_3\}$ is either $\{g\}$ or $\{h,h_1,h_2,h_3\}$.
    In addition, for each case, there exists such a subgraph $T$.
\end{lemma}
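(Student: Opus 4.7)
The plan is to first extract a ``rigidity'' lemma for the auxiliary building block $L(g,g_1,g_2)$, and then use it to propagate forced choices through the three copies of $L$ that make up $H^{\mathrm{sp}}$. By direct inspection of the edge set of $L$, its triangles are exactly
\[
T_1=\{g,h_1^1,h_2^1\},\quad T_2=\{h_1^1,h_1^2,h_2^2\},\quad T_3=\{g_1,h_1^2,h_2^2\},\quad T_4=\{h_2^1,h_3^2,h_4^2\},\quad T_5=\{g_2,h_3^2,h_4^2\};
\]
in particular, the edge $h_2^2h_3^2$ is not in any triangle, and each ``interior'' vertex lies in exactly two triangles. From this, a short case analysis shows that any subgraph of $L$ consisting of vertex-disjoint triangles whose vertex set contains all interior vertices $\{h_1^1,h_2^1,h_1^2,h_2^2,h_3^2,h_4^2\}$ must be one of two configurations: \emph{Option A} $=\{T_1,T_3,T_5\}$, which also covers all three terminals $g,g_1,g_2$, or \emph{Option B} $=\{T_2,T_4\}$, which covers none of them. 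The key point is that the three terminals of $L$ behave in lockstep.

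Next I would apply this classification to each of the three copies $L^{(0)}=L(g,g_1,g_2)$, $L^{(1)}=L(g^1,h_1,h_2)$, and $L^{(2)}=L(g^2,h_3,g_2^2)$ inside $H^{\mathrm{sp}}$, noting that the identifications $g^1=g_1$ and $g^2=g_2$ force consistency between the options chosen. The remaining four vertices $g_2^2,w_1,w_2,h$ form only two triangles, $\{g_2^2,w_1,w_2\}$ and $\{w_1,w_2,h\}$, which share the edge $w_1w_2$; since $w_1$ and $w_2$ must be covered, exactly one of these triangles appears in $T$.

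The proof then splits into two cases according to the hypothesis $\lvert V(T)\cap\{g,h\}\rvert=1$. If $g\in V(T)$, then $L^{(0)}$ must use Option A (the only triangle on $g$ is $T_1$), which covers $g_1$ and $g_2$; these terminals therefore cannot be covered again, so $L^{(1)}$ and $L^{(2)}$ must both use Option B, leaving $h_1,h_2,h_3,g_2^2$ uncovered by the $L$-pieces; $g_2^2$ is then covered by $\{g_2^2,w_1,w_2\}$ and $h$ is uncovered, giving $V(T)\cap\{g,h,h_1,h_2,h_3\}=\{g\}$. If instead $h\in V(T)$, then the triangle $\{w_1,w_2,h\}$ is used, so $g_2^2$ must be covered by $L^{(2)}$, forcing $L^{(2)}$ into Option A and covering $g_2,h_3$; since $g\notin V(T)$, $L^{(0)}$ cannot use $T_1$ and hence uses Option B, leaving $g_1$ uncovered by $L^{(0)}$, which forces $L^{(1)}$ into Option A and covers $h_1,h_2$; this yields $V(T)\cap\{g,h,h_1,h_2,h_3\}=\{h,h_1,h_2,h_3\}$. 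Finally, the two explicit choices (Option A for $L^{(0)}$, Option B for $L^{(1)},L^{(2)}$, plus $\{g_2^2,w_1,w_2\}$; respectively, Option B for $L^{(0)}$, Option A for $L^{(1)},L^{(2)}$, plus $\{w_1,w_2,h\}$) confirm the ``in addition'' existence statement.

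The only genuinely delicate step is the rigidity lemma for $L$: one must verify that the list $T_1,\dots,T_5$ is exhaustive and that no alternative covering of the interior six vertices exists. Once this is established, the rest is bookkeeping with forced choices propagating along the shared terminals $g_1=g^1$ and $g_2=g^2$.
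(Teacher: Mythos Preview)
Your proposal is correct and follows essentially the same strategy as the paper: establish the all-or-nothing dichotomy for the terminals of the building block $L$ (the paper states this as an Observation inside the proof), then propagate the forced choices through the three copies of $L$ via the shared vertices $g_1=g^1$ and $g_2=g^2$, and finally resolve the $\{g_2^2,w_1,w_2,h\}$ part by noting that exactly one of its two triangles can appear. Your explicit enumeration of the five triangles in $L$ and the observation that no triangle of $T$ can straddle two copies (since distinct copies share only a single vertex) make the argument, if anything, cleaner than the paper's version.
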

\begin{proof}
    We begin with the following observation.
\begin{observation}\label{claim:small_splitter}
Let $T \subseteq V(L)$ be a subgraph consisting of vertex-disjoint triangles in $L(g,g_1,g_2)$ such that $V(T) \supseteq \{h^1_1, h^2_2, h^1_2, h^2_3\}$.
Then either $T$ covers all vertices of $L(g,g_1,g_2)$, or $T$ covers all but $g,g_1$ and $g_2$. Moreover, for each case, there exists such a subgraph $T$.
\end{observation}
\begin{clproof}
    We first consider when $T$ does not contain $g$. 
    Then, to cover $h_1^1$ and $h_2^1$, two triangles $h_1^1h_1^2h_2^2$ and $h_2^1h_3^2h_4^2$ must be contained in $T$.
    Then it is a maximal collection of vertex-disjoint triangles of $L$. Thus, $T$ cannot cover $g_1$ and $g_2$.
    If $g$ is in $V(T)$, then the triangle $gh_1^1h_2^1$ is included in $T$. 
    Then to cover $h_1^2$ and $h_4^2$, the triangles $h_1^2h_2^1g_1$ and $h_3^2h_4^2g_2$ should be contained in $T$. In this case, all vertices of $L$ are contained in $T$.
\end{clproof}
    Next, we give a proof of~\cref{claim:splitter}.
    We first consider the case when $g \notin V(T)$ and $h \in V(T)$.
    Then by applying \Cref{claim:small_splitter} to $L$, $g_1$ and $g_2$ are isolated vertices in $T[V(L)]$.
    As $g_1$ and $g_2$ are in $V(T)$, they must be contained in a triangle in $T[V(L^1)]$ and $T[V(L^2)]$, respectively.
    By applying \Cref{claim:small_splitter} to $L^1$ and $L^2$, $T[V(L^1) \cup V(L^2)]$ must contain $g_1, g_2, g_3$. 
    In addition, by \Cref{claim:small_splitter}, $T \setminus \{h, w_1, w_2\}$ is uniquely determined and to guarantee $h, w_1, w_2 \in V(T)$, the triangle $hw_1w_2$ is contained in $T$.

    The case when $g \in V(T)$ and $h \notin V(T)$ is similar.
    By \Cref{claim:small_splitter} on $L$, $g_1$ and $g_2$ are contained in a triangle of $T[V(L)]$.
    Then as $T[V(L_1)]$ and $T[V(L_2)]$ contains $g_1$ and $g_2$ as an isolated vertex, by \Cref{claim:small_splitter}, $g_1, g_2, g_3$ are not contained in a triangle of $T[V(L_1) \cup V(L_2)]$ and this shows $g_1, g_2, g_3 \notin V(T)$.
    In addition, $T \setminus \{h,w_1, w_2\}$ is uniquely determined and to cover $w_1$ and $w_2$ and not $h$, the triangle $g_2^2w_1w_2$ must included in $T$. So it is the unique choice of $T$.
\end{proof}

\begin{lemma}\label{claim:gadget_for_variable}
    Each variable gadget $X_i$ contains exactly two maximal triangle tilings that cover all vertices of $V(X_i) \setminus \{h_k^{i, j} \mid j\in [\ell], k \in [3]\}$. Furthermore, one of them covers all $h_k^{i, j}$ with $x_i \in C_j^i$ and does not cover all $h_k^{i, j}$ with $\lnot x_i \in C_j^i$, and the other one covers all $h_k^{i, j}$ with $\lnot x_i \in C_j^i$ and does not cover all $h_k^{i, j}$ with $x_i \in C_j^i$.
\end{lemma}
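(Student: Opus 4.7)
The plan is to split the analysis of any valid tiling $T$ of $X_i$ into two parts: the restriction of $T$ to the auxiliary cycle $X_i'$, and its behaviour on each splitter $H^{i,j}$ via \cref{claim:splitter}. Throughout, the key structural fact is that no triangle of $X_i$ uses edges from both $X_i'$ and a splitter: the only vertices shared between $X_i'$ and a splitter $H^{i,j}$ are the even-indexed vertices $v_{4j-2}^i$ and $v_{4j}^i$ (identified with $g^{i,j}$ or $h^{i,j}$), so a triangle through one of these vertices uses either two $X_i'$-edges or two splitter-edges.

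First I will show that the triangles of $T$ contained in $X_i'$ form one of exactly two alternating sets. The odd-indexed vertices $v_{2k+1}^i$ of $X_i'$ have all their neighbours in $X_i'$, so they can only be covered by triangles of $X_i'$. The only triangles of $X_i'$ are the $T_k = v_{2k-1}^i v_{2k}^i v_{2k+1}^i$ for $k \in [2\ell]$ (indices taken cyclically), and each odd-indexed vertex $v_{2k+1}^i$ lies in exactly the two triangles $T_k$ and $T_{k+1}$, which share it. Since $T$ is vertex-disjoint and must cover every $v_{2k+1}^i$, exactly one of $\{T_k,T_{k+1}\}$ is used for each $k$. The triangles $T_1,\ldots,T_{2\ell}$ form an even cycle under this sharing relation, so this constraint has exactly two solutions: the odd-indexed set $\{T_1,T_3,\ldots,T_{2\ell-1}\}$, which covers every $v_{4j-2}^i$ but no $v_{4j}^i$; or the even-indexed set $\{T_2,T_4,\ldots,T_{2\ell}\}$, which covers every $v_{4j}^i$ but no $v_{4j-2}^i$.

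Next I will apply \cref{claim:splitter} to each splitter $H^{i,j}$. By the identification rules, $\{g^{i,j},h^{i,j}\}=\{v_{4j-2}^i,v_{4j}^i\}$, with $g^{i,j}=v_{4j}^i$ if $x_i\in C_j^i$ and $g^{i,j}=v_{4j-2}^i$ if $\lnot x_i\in C_j^i$. Every even-indexed vertex of $X_i$ must be covered exactly once, and since no triangle spans $X_i'$ and a splitter, the splitter case is pinned down by whether $g^{i,j}$ has already been covered by an $X_i'$-triangle. If it has, \cref{claim:splitter} forces $V(T)\cap\{g^{i,j},h^{i,j},h_1^{i,j},h_2^{i,j},h_3^{i,j}\}=\{h^{i,j},h_1^{i,j},h_2^{i,j},h_3^{i,j}\}$, so $h^{i,j}$ and all $h_k^{i,j}$ are covered internally by $H^{i,j}$; if not, the other case is forced, covering $g^{i,j}$ inside $H^{i,j}$ and leaving $h^{i,j}$ and all $h_k^{i,j}$ uncovered (they remain uncovered in $X_i$ since they lie only in triangles of $H^{i,j}$).

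Combining these two steps yields the required dichotomy. If the odd-indexed $T_k$'s are selected, then for $j$ with $x_i\in C_j^i$ we have $g^{i,j}=v_{4j}^i$ uncovered by $X_i'$, so the splitter leaves all $h_k^{i,j}$ uncovered; and for $j$ with $\lnot x_i\in C_j^i$, $g^{i,j}=v_{4j-2}^i$ is already covered by $X_i'$, so the splitter covers all $h_k^{i,j}$. The even-indexed choice is symmetric with $x_i$ and $\lnot x_i$ swapped. Uniqueness inside each splitter case is immediate from the proof of \cref{claim:splitter}, which constructs the splitter tiling uniquely in each case, so the two choices of $X_i'$-triangles yield exactly two valid tilings of $X_i$ with the coverage properties stated in the lemma. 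The main subtlety is the parity bookkeeping linking the positive/negative literal identifications $g^{i,j}\in\{v_{4j-2}^i,v_{4j}^i\}$ to the correct case of \cref{claim:splitter}; once this is pinned down, the rest is mechanical.
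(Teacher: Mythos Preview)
Your proposal is correct and follows essentially the same approach as the paper: reduce to the two alternating tilings of the cycle $X_i'$ forced by the odd-indexed vertices, then apply \cref{claim:splitter} to each splitter with the parity bookkeeping on $g^{i,j}\in\{v_{4j-2}^i,v_{4j}^i\}$ determining which case occurs. You make explicit the structural observation that no triangle spans $X_i'$ and a splitter, which the paper uses implicitly; otherwise the arguments are the same.
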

\begin{proof}
        Let $T \subseteq X_i$ be a subgraph consists with vertex-disjoint triangles and $V(T) \supseteq V(X_i) \setminus \{h_k^{i, j} \mid j\in [\ell], k \in [3]\}$.
    Then as $T$ contains all $v_1^i, \ldots, v_{4\ell-1}^i$, either $T$ contains all the triangles $v_{4k-3}^iv_{4k-2}^iv_{4k-1}^i$ for $k \in [\ell]$ or contains all triangles $v_{4k-1}^iv_{4k}^iv_{4k+1}^i$ for $k \in [\ell]$ where $v_{4\ell+1}^i=v_1^i$.

    We analyze the case in which $T$ contains all triangles $v_{4k-3}^iv_{4k-2}^iv_{4k-1}^i$ for $k \in [\ell]$. 
    The remaining case is symmetric and can be handled analogously. 
    Then for each $j \in [\ell]$, if $x_i \in C_j^i$, then $g^{i, j}$ was identified to $v_{4j}^i$ and $h^{i, j}$ was identified to $v_{4j-2}^i$.
    Thus, $T[V(H^{i, j})]$ contains $h^{i, j}$ as an isolated vertex and $g^{i, j}$ should be contained in a triangle of $T[V(H^{i, j})]$.
    By \Cref{claim:splitter}, triangles in $T[V(H^{i, j})]$ cover neither $h_1^{i, j}$, $h_2^{i, j}$, nor $h_3^{i, j}$. Thus, $h_1^{i, j}, h_2^{i, j}, h_3^{i, j}$ are not contained in $T$.
    If $\lnot x_i \in C_j^i$, then by the same argument, $T[V(H^{i, j})]$ contains $g^{i, j}$ as an isolated vertex and $h^{i, j}$ should be contained in a triangle of $T[V(H^{i, j})]$. Thus, by \Cref{claim:splitter}, triangles in $T[V(H^{i, j})]$ contain all $h_1^{i, j}, h_2^{i, j}, h_3^{i, j}$. Furthermore, by the uniqueness of \Cref{claim:splitter}, it is the unique possible triangle tiling in this case.
\end{proof}

\begin{lemma}\label{claim:gadget_for_clause}
    Let $U_i = \{a_4^i, a_5^i, a_6^i\} \subseteq V(D)$ for $i \in [3]$.
    Then for a set of indices $I \subseteq [3]$, the graph $D - \bigcup_{i \in I} U_i$ has a perfect triangle tiling if and only if $\abs{I}=2$. 
\end{lemma}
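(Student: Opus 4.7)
The plan is to enumerate the triangles of $D$, apply a forced-triangle argument centered on the degree-$2$ vertices $a_6^j$, and then run a short case analysis on $|I|$. Writing $v_j \coloneqq a_1^j = a_3^{j+1}$ (indices modulo $3$), the four triangles of $W$ are $\{a_1, a_2, a_5\}$, $\{a_2, a_3, a_4\}$, $\{a_2, a_4, a_5\}$, and $\{a_1, a_5, a_6\}$; hence the triangles of $D$ consist of the triangles inherited from each $W^j$ together with the ``crossing'' triangles $\{a_2^1, a_2^2, a_2^3\}$ and $\{a_2^j, a_2^{j+1}, v_j\}$ for $j \in [3]$, which exist because of the added edges $a_2^j a_2^{j+1}$ and the identifications $v_j = a_1^j = a_3^{j+1}$. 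A routine inspection confirms that no other edges of $D$ join distinct copies of $W$, so this triangle list is complete.

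The key observation is that $a_6^j$ has degree $2$ in $D$, with neighbors $v_j$ and $a_5^j$, so the unique triangle containing $a_6^j$ is $T_j \coloneqq \{v_j, a_5^j, a_6^j\}$. In particular, whenever $j \notin I$, every perfect triangle tiling of $D - \bigcup_{i \in I} U_i$ must include $T_j$. For the ``if'' direction with $|I| = 2$, by cyclic symmetry I may assume $I = \{1, 2\}$; then the three triangles $\{a_2^1, a_2^2, v_1\}$, $\{a_2^3, v_2, a_4^3\}$ (which is the inherited $W^3$-triangle $\{a_2^3, a_3^3, a_4^3\}$), and $T_3$ can be checked to cover exactly the nine remaining vertices of $D - (U_1 \cup U_2)$.

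For the ``only if'' direction with $|I| \in \{0, 1\}$, I would pick $j$ with $j \notin I$ and $j-1 \notin I$; this is always possible when $|I| \leq 1$ (for $I = \{i\}$ take $j = i + 2$). Then $T_{j-1}$ and $T_j$ are both forced, so $v_{j-1}$ and $a_5^j$ are already used; since the neighbors of $a_4^j$ in $D$ are exactly $\{a_2^j, v_{j-1}, a_5^j\}$, the vertex $a_4^j$ has at most one available neighbor remaining and cannot appear in any triangle, contradicting the existence of a tiling. For $|I| = 3$, the remaining six vertices are $\{a_2^1, a_2^2, a_2^3, v_1, v_2, v_3\}$, and the triangle enumeration leaves only $\{a_2^1, a_2^2, a_2^3\}$ and $\{a_2^j, a_2^{j+1}, v_j\}$ for $j \in [3]$. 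Each $v_j$ lies in a unique one of these triangles, but covering $v_1$ uses both $a_2^1$ and $a_2^2$, after which the only triangle containing $v_2$ is no longer available; hence no tiling exists.

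The main obstacle is the careful bookkeeping around the identifications $v_j = a_1^j = a_3^{j+1}$: one has to verify both that the enumeration of triangles of $D$ is exhaustive (in particular, handling the ``crossing'' triangles that mix edges from two $W^j$'s) and that the residual neighborhoods of the $a_4^j$ are correctly computed after the forced triangles are removed. Once this small finite data is set up correctly, the case analysis above is short and direct.
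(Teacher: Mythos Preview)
Your proof is correct and follows the same forced-triangle strategy as the paper: both hinge on the fact that $a_6^j$ lies in a unique triangle of $D$, which forces the tiling near $U_j$ whenever $j\notin I$, and then finish by a short case analysis on $|I|$. Your version is somewhat more explicit (a full enumeration of the triangles of $D$, and the $|I|=3$ case spelled out rather than left to ``one can check''), but the underlying argument is the same.
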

\begin{proof}
    We first note that a triangle tiling of $W$ that covers $a_4, a_5, a_6$ must be a perfect triangle tiling. It is because $a_4, a_5, a_6$ do not form a triangle, so a triangle tiling that covers $a_4, a_5, a_6$ should contain at least $6$ vertices.
    Also note that there is a unique such tiling that consists of $a_1a_5a_6$ and $a_2a_3a_4$.

    We now consider when $\abs{I} \leq 1$. By symmetry, we may assume that $1, 2 \notin I$.
    Then by previous observation, to cover $a_4^j, a_5^j, a_6^j$, one must use triangles $a_1^ja_5^ja_6^j$ and $a_2^ja_3^ja_4^j$ for $j \in [2]$. However, as $a_3^1 = a_1^2$, those triangles cannot be contained in a perfect triangle tiling simultaneously. 

    When $\abs{I}=3$, one can check $D- \bigcup_{i=1}^3 U_i$ does not contain two vertex-disjoint triangles. Thus, it does not have a perfect triangle tiling.
    When $\abs{I}=2$, we may assume that $I=\{2, 3\}$.
    Then, there is a perfect triangle tiling consisting of $a_1^1a_5^1a_6^1$, $a_2^1a_2^1a_4^1$, and $a_2^2a_1^2a_2^3$.

\end{proof}

\begin{proof}[Proof of~\cref{thm:hardness_K3_factor}]
We now prove that $\varphi$ is a \textsc{Yes} instance of $1$-in-$3$ SAT if and only if $G'_\varphi$ contains a perfect triangle tiling.
We first assume that $G'_\varphi$ contains a perfect triangle tiling. 
Then let $T \subseteq G$ be any subgraph consisting of vertex-disjoint triangles that covers all the vertices of $G'_\varphi$.
Then triangles in $T[V(X_i)]$ cover all the vertices in $V(X_i) \setminus \{h_k^{i, j} \mid j\in [\ell], k \in [3]\}$.
Thus, by \Cref{claim:gadget_for_variable}, triangles in $T[V(X_i)]$ either cover all $h_k^{i, j}$ with $x_i \in C_j^i$ and do not cover all $h_k^{i, j}$ with $\lnot x_i \in C_j^i$, or cover all $h_k^{i, j}$ with $\lnot x_i \in C_j^i$ and do not cover all $h_k^{i, j}$ with $x_i \in C_j^i$.
For the formal case, we assign \textsc{False} to $x_i$ and for the latter case, we assign \textsc{True} to $x_i$.
Then we claim for each clause, exactly one literal has value \textsc{True} and thus $\varphi$ is a \textsc{Yes} instance of $1$-in-$3$ SAT.
It is because each $h_k^{i, j}$ that is not covered by a triangle in~$T[V(X_i)]$ should be covered by triangles in~$T[D_{i'}]$ for some $i' \in [m]$.
By \Cref{claim:gadget_for_clause}, to $T$ be a triangle factor, exactly one of $\{a_4^{i', 1}, a_5^{i', 1}, a_6^{i', 1}\}$, $\{a_4^{i', 2}, a_5^{i', 2}, a_6^{i', 2}\}$, and $\{a_4^{i', 2}, a_5^{i', 2}, a_6^{i', 2}\}$ are not covered by triangles in $\bigcup_{i=1}^n T[V(X_i)]$. This implies that every clause has exactly one literal with the value \textsc{True}.

To prove the converse, we fix an assignment of $x_i$ such that each clause has exactly one literal with the value \textsc{True}.
We now construct a perfect triangle tiling of $G'_\varphi$ as follows.
First, we choose a triangle tiling on each $X_i$ by choosing one of the triangle tilings in \Cref{claim:gadget_for_variable} where the formal one is chosen when $x_i$ is assigned to \textsc{False}.
Then for each $D_{i'}$, we use a triangle tiling in \Cref{claim:gadget_for_clause}. As exactly one of the $\{a_4^{i', 1}, a_5^{i', 1}, a_6^{i', 1}\}$, $\{a_4^{i', 2}, a_5^{i', 2}, a_6^{i', 2}\}$, and $\{a_4^{i', 2}, a_5^{i', 2}, a_6^{i', 2}\}$ are not covered by the previous tiling, this can be done.
By combining all the triangles, we obtain the desired perfect triangle tiling.
\end{proof}

\section{Concluding remarks}\label{sec:concluding}

Including our results on \textsc{$\mathcal{F}$-subgraph-free Edge Deletion}, in many applications of the product structure theorem, it is beneficial when an explicit embedding into $H \boxtimes P$ is given. 
For instance, one can compute a bounded queue layout of $G$ for a given embedding into $H \boxtimes P$~\cite{dujmovic2020planar, Morin2021AlgoProduct}.
Moreover, even if we can ``approximate'' the product structure, above mentioned applications can still be obtained, albeit they may have weak guarantees.
As a consequence, it is natural to ask whether one can efficiently compute the approximated product structure for a graph $G$ having a product structure. 

\begin{question}\label{ques:compute_prod_struct}
Let $G$ be a subgraph of the product structure
$H\boxtimes P$, where $H$ is a graph of tree-width $w$ and $P$ is a path.
Then, does there exist an FPT-approximation algorithm for finding a product structure $H'\boxtimes P$, where the tree-width of $H'$ is at most $f(w)$ for some computable function $f$?
\end{question}

An affirmative answer to \cref{ques:compute_prod_struct} implies that \textsc{$\mathcal{F}$-subgraph-free Edge Deletion} can be solved in FPT time for every class of graphs with a product structure. 
Furthermore, we would like to ask whether \textsc{$\mathcal{F}$-subgraph-free Edge Deletion} can be solved efficiently without applying the answer of~\cref{ques:compute_prod_struct}.

\begin{question}\label{ques:without_embedding}
Does there exist an FPT algorithm for \textsc{$\mathcal{F}$-subgraph-free Edge Deletion} when given an input graph that has a product structure but an embedding is not given?
\end{question}

\section*{Acknowledgement}
This work was initiated at the 2024 Korean Student Combinatorics Workshop~(KSCW2024) at Gongju, South Korea. The authors would like to thank IBS Discrete Mathematics Group for supporting the workshop. The authors are also grateful for O-joung Kwon, Sang-il Oum, and Sebastian Wiederrecht for valuable comments that improved the presentation of the paper.

\bibliographystyle{plain}
\bibliography{FFEDSTACS/EdgeDeletion}

\begin{thebibliography}{10}

\bibitem{AGHR02}
Noga Alon, Jaros{\l}aw Grytczuk, Mariusz Ha{\l}uszczak, and Oliver Riordan.
\newblock Nonrepetitive colorings of graphs.
\newblock {\em Random Structures and Algorithms}, 21(3-4):336--346, 2002.

\bibitem{ASS09}
Noga Alon, Asaf Shapira, and Benny Sudakov.
\newblock Additive approximation for edge-deletion problems.
\newblock {\em Annals of Mathematics. Second Series}, 170(1):371--411, 2009.

\bibitem{an2023faster}
Shinwoo An, Kyungjin Cho, and Eunjin Oh.
\newblock Faster algorithms for cycle hitting problems on disk graphs.
\newblock In {\em the 18th Algorithms and Data Structures Symposium (WADS 2023)}, pages 29--42. Springer, 2023.

\bibitem{Aravind2017Hfreedichotomy}
N.~R. Aravind, R.~B. Sandeep, and Naveen Sivadasan.
\newblock Dichotomy results on the hardness of {$H$}-free edge modification problems.
\newblock {\em SIAM Journal on Discrete Mathematics}, 31(1):542--561, 2017.

\bibitem{AH82}
Takao Asano and Tomio Hirata.
\newblock Edge-deletion and edge-contraction problems.
\newblock In {\em Proceedings of the Fourteenth Annual ACM Symposium on Theory of Computing (STOC 1982)}, page 245–254. Association for Computing Machinery, 1982.

\bibitem{baker1994approximation}
Brenda~S. Baker.
\newblock Approximation algorithms for {NP}-complete problems on planar graphs.
\newblock {\em Journal of the Association for Computing Machinery}, 41(1):153--180, 1994.

\bibitem{bekos2024graph}
Michael~A. Bekos, Giordano Da~Lozzo, Petr Hlin{\v{e}}n{\'{y}}, and Michael Kaufmann.
\newblock Graph product structure for {$h$}-framed graphs.
\newblock {\em Electronic Journal of Combinatorics}, 31(4):Paper No. 4.56, 33, 2024.

\bibitem{bender2000lca}
Michael~A. Bender and Martin Farach-Colton.
\newblock The {LCA} problem revisited.
\newblock In {\em Latin American Symposium on Theoretical Informatics (LATIN 2020)}, pages 88--94. Springer, 2000.

\bibitem{berthe2024subexponential}
Ga\'etan Berthe, Marin Bougeret, Daniel Gon{\c c}alves, and Jean-Florent Raymond.
\newblock Subexponential algorithms in geometric graphs via the subquadratic grid minor property: the role of local radius.
\newblock In {\em 19th {S}candinavian {S}ymposium on {A}lgorithm {T}heory (SWAT 2024)}, volume 294 of {\em LIPIcs. Leibniz Int. Proc. Inform.}, pages Art. No. 11, 18. Schloss Dagstuhl. Leibniz-Zent. Inform., Wadern, 2024.

\bibitem{Bliznets2018approximateHfree}
Ivan Bliznets, Marek Cygan, Pawe\l{} Komosa, and Micha\l Pilipczuk.
\newblock Hardness of approximation for {$H$}-free edge modification problems.
\newblock {\em ACM Transactions on Computation Theory}, 10(2):Art. 9, 32, 2018.

\bibitem{bodlaender1993linear}
Hans~L. Bodlaender.
\newblock A linear-time algorithm for finding tree-decompositions of small treewidth.
\newblock {\em SIAM Journal on Computing}, 25(6):1305--1317, 1996.

\bibitem{bodlaender91producttw}
Hans~L. Bodlaender, John~R. Gilbert, Hj\'almt\'yr Hafsteinsson, and Ton Kloks.
\newblock Approximating treewidth, pathwidth, and minimum elimination tree height.
\newblock In {\em Graph-theoretic concepts in computer science (WG 1991)}, volume 570 of {\em Lecture Notes in Comput. Sci.}, pages 1--12. Springer, Berlin, 1992.

\bibitem{bonnet2021twin2}
{\'E}douard Bonnet, Colin Geniet, Eun~Jung Kim, St{\'e}phan Thomass{\'e}, and R{\'e}mi Watrigant.
\newblock Twin-width {II}: small classes.
\newblock In {\em Proceedings of the 2021 ACM-SIAM Symposium on Discrete Algorithms (SODA 2021)}, pages 1977--1996. SIAM, 2021.

\bibitem{BKTW20}
\'Edouard Bonnet, Eun~Jung Kim, St\'ephan Thomass\'e, and R\'emi Watrigant.
\newblock Twin-width {I}: {T}ractable {FO} model checking.
\newblock {\em Journal of the ACM}, 69(1):Art. 3, 46, 2022.

\bibitem{BMO22}
Prosenjit Bose, Pat Morin, and Saeed Odak.
\newblock An optimal algorithm for product structure in planar graphs.
\newblock In {\em 18th {S}candinavian {S}ymposium and {W}orkshops on {A}lgorithm {T}heory (SWAT 2022)}, volume 227 of {\em LIPIcs. Leibniz Int. Proc. Inform.}, pages Art. No. 19, 14. Schloss Dagstuhl. Leibniz-Zent. Inform., Wadern, 2022.

\bibitem{breu1998unit}
Heinz Breu and David~G. Kirkpatrick.
\newblock Unit disk graph recognition is {NP}-hard.
\newblock {\em Computational Geometry: Theory and Applications}, 9(1-2):3--24, 1998.

\bibitem{Cai1996FPTforhereditary}
Leizhen Cai.
\newblock Fixed-parameter tractability of graph modification problems for hereditary properties.
\newblock {\em Information Processing Letters}, 58(4):171--176, 1996.

\bibitem{CTY07}
Pierre Charbit, St\'ephan Thomass\'e, and Anders Yeo.
\newblock The minimum feedback arc set problem is {NP}-hard for tournaments.
\newblock {\em Combinatorics, Probability and Computing}, 16(1):1--4, 2007.

\bibitem{choi1989graph}
Hyeong-Ah Choi, Kazuo Nakajima, and Chong~S. Rim.
\newblock Graph bipartization and via minimization.
\newblock {\em SIAM Journal on Discrete Mathematics}, 2(1):38--47, 1989.

\bibitem{Erdosongraph1998}
Fan Chung and Ron Graham.
\newblock {\em Erd\H os on graphs}.
\newblock A K Peters, Ltd., Wellesley, MA, 1998.
\newblock His legacy of unsolved problems.

\bibitem{Courcelle90}
Bruno Courcelle.
\newblock The monadic second-order logic of graphs. {I}. {R}ecognizable sets of finite graphs.
\newblock {\em Information and Computation}, 85(1):12--75, 1990.

\bibitem{Crespelle2023survey}
Christophe Crespelle, P{\aa}l~Gr{\o}n{\aa}s Drange, Fedor~V. Fomin, and Petr Golovach.
\newblock A survey of parameterized algorithms and the complexity of edge modification.
\newblock {\em Computer Science Review}, 48:Paper No. 100556, 31, 2023.

\bibitem{cygan2015parameterized}
Marek Cygan, Fedor~V. Fomin, {\L}ukasz Kowalik, Daniel Lokshtanov, D\'aniel Marx, Marcin Pilipczuk, Micha\l{} Pilipczuk, and Saket Saurabh.
\newblock {\em Parameterized algorithms}.
\newblock Springer, Cham, 2015.

\bibitem{Cygan2017polykernel}
Marek Cygan, Marcin Pilipczuk, Micha\l{} Pilipczuk, Erik~Jan van Leeuwen, and Marcin Wrochna.
\newblock Polynomial kernelization for removing induced claws and diamonds.
\newblock {\em Theory of Computing Systems}, 60(4):615--636, 2017.

\bibitem{demaine2008bidimensionality}
Erik~D. Demaine and MohammadTaghi Hajiaghayi.
\newblock The bidimensionality theory and its algorithmic applications.
\newblock {\em The Computer Journal}, 51(3):292--302, 2008.

\bibitem{demaine2006bidimensional}
Erik~D. Demaine, MohammadTaghi Hajiaghayi, and Dimitrios~M. Thilikos.
\newblock The bidimensional theory of bounded-genus graphs.
\newblock {\em SIAM Journal on Discrete Mathematics}, 20(2):357--371, 2006.

\bibitem{Diestel2025Book}
Reinhard Diestel.
\newblock {\em Graph theory}, volume 173 of {\em Graduate Texts in Mathematics}.
\newblock Springer, Berlin, sixth edition, [2025] \copyright 2025.

\bibitem{distel2022improved}
Marc Distel, Robert Hickingbotham, Tony Huynh, and David~R. Wood.
\newblock Improved product structure for graphs on surfaces.
\newblock {\em Discrete Mathematics \& Theoretical Computer Science}, 24(2):Paper No. 6, 10, 2022.

\bibitem{distel2024powers}
Marc Distel, Robert Hickingbotham, Micha{\l}~T. Seweryn, and David~R. Wood.
\newblock Powers of planar graphs, product structure, and blocking partitions.
\newblock {\em Innovations in Graph Theory}, 1:39--86, 2024.

\bibitem{downey1995fixed}
Rod~G. Downey and Michael~R. Fellows.
\newblock Fixed-parameter tractability and completeness. {I}. {B}asic results.
\newblock {\em SIAM Journal on Computing}, 24(4):873--921, 1995.

\bibitem{DEJMW20}
Vida Dujmovi\'c, David Eppstein, Gwena\"el Joret, Pat Morin, and David~R. Wood.
\newblock Minor-closed graph classes with bounded layered pathwidth.
\newblock {\em SIAM Journal on Discrete Mathematics}, 34(3):1693--1709, 2020.

\bibitem{DEGJMM21}
Vida Dujmovi\'c, Louis Esperet, Cyril Gavoille, Gwena\"el Joret, Piotr Micek, and Pat Morin.
\newblock Adjacency labelling for planar graphs (and beyond).
\newblock {\em Journal of the ACM}, 68(6):Art. 42, 33, 2021.

\bibitem{dujmovic2020planar}
Vida Dujmovi\'c, Gwena\"el Joret, Piotr Micek, Pat Morin, Torsten Ueckerdt, and David~R. Wood.
\newblock Planar graphs have bounded queue-number.
\newblock {\em Journal of the ACM}, 67(4):Art. 22, 38, 2020.

\bibitem{dujmovic2023graph}
Vida Dujmovi\'c, Pat Morin, and David~R. Wood.
\newblock Graph product structure for non-minor-closed classes.
\newblock {\em Journal of Combinatorial Theory. Series B}, 162:34--67, 2023.

\bibitem{DHJLW21}
Zden{\v{e}}k Dvo{\v{r}}\'ak, Tony Huynh, Gwena\"el Joret, Chun-Hung Liu, and David~R. Wood.
\newblock Notes on graph product structure theory.
\newblock In {\em 2019--20 {MATRIX} annals}, volume~4 of {\em MATRIX Book Ser.}, pages 513--533. Springer, Cham, 2021.

\bibitem{dyer1986planar}
Martin~E. Dyer and Alan~M. Frieze.
\newblock Planar {$3$}{DM} is {NP}-complete.
\newblock {\em Journal of Algorithms}, 7(2):174--184, 1986.

\bibitem{enright2018deleting}
Jessica Enright and Kitty Meeks.
\newblock Deleting edges to restrict the size of an epidemic: a new application for treewidth.
\newblock {\em Algorithmica}, 80(6):1857--1889, 2018.

\bibitem{EG59}
Paul Erd{\H{o}}s and Tibor Gallai.
\newblock On maximal paths and circuits of graphs.
\newblock {\em Acta Mathematica. Academiae Scientiarum Hungaricae}, 10:337--356 (unbound insert), 1959.

\bibitem{EJM23}
Louis Esperet, Gwena\"el Joret, and Pat Morin.
\newblock Sparse universal graphs for planarity.
\newblock {\em Journal of the London Mathematical Society. Second Series}, 108(4):1333--1357, 2023.

\bibitem{fomin2011bidimensionality}
Fedor~V. Fomin, Daniel Lokshtanov, Venkatesh Raman, and Saket Saurabh.
\newblock Bidimensionality and {EPTAS}.
\newblock In {\em Proceedings of the {T}wenty-{S}econd {A}nnual {ACM}-{SIAM} {S}ymposium on {D}iscrete {A}lgorithms (SODA 2011)}, pages 748--759. SIAM, Philadelphia, PA, 2011.

\bibitem{GM21+}
Ajinkya Gaikwad and Soumen Maity.
\newblock Edge deletion to restrict the size of an epidemic.
\newblock {\em arXiv preprint, arXiv:2102.06068}, 2021.

\bibitem{gary1979npcompleteness}
Michael~R. Garey and David~S. Johnson.
\newblock {\em Computers and intractability}.
\newblock A Series of Books in the Mathematical Sciences. W. H. Freeman and Co., San Francisco, CA, 1979.
\newblock A guide to the theory of NP-completeness.

\bibitem{GJS76}
Michael~R. Garey, David~S. Johnson, and Larry~J. Stockmeyer.
\newblock Some simplified {NP}-complete graph problems.
\newblock {\em Theoretical Computer Science}, 1(3):237--267, 1976.

\bibitem{GLS24}
Lior Gishboliner, Yevgeny Levanzov, and Asaf Shapira.
\newblock Trimming forests is hard (unless they are made of stars).
\newblock {\em SIAM Journal on Discrete Mathematics}, 38(4):3028--3042, 2024.

\bibitem{Guruswami1998triangle}
Venkatesan Guruswami, Chandrasekaran Pandu~Rangan, Maw-Shang Chang, Gerard~Jennhwa Chang, and Chak~Kuen Wong.
\newblock The vertex-disjoint triangles problem.
\newblock In {\em Graph-theoretic concepts in computer science (WG 1998)}, volume 1517 of {\em Lecture Notes in Comput. Sci.}, pages 26--37. Springer, Berlin, 1998.

\bibitem{Hadlock75}
Frank~O. Hadlock.
\newblock Finding a maximum cut of a planar graph in polynomial time.
\newblock {\em SIAM Journal on Computing}, 4(3):221--225, 1975.

\bibitem{HLR92}
Lenwood~S. Heath, Frank~Thomson Leighton, and Arnold~L. Rosenberg.
\newblock Comparing queues and stacks as mechanisms for laying out graphs.
\newblock {\em SIAM Journal on Discrete Mathematics}, 5(3):398--412, 1992.

\bibitem{hickingbotham2024shallow}
Robert Hickingbotham and David~R. Wood.
\newblock Shallow minors, graph products, and beyond-planar graphs.
\newblock {\em SIAM Journal on Discrete Mathematics}, 38(1):1057--1089, 2024.

\bibitem{hickingbotham25product}
Robert Hickingbotham and David~R. Wood.
\newblock Structural properties of graph products.
\newblock {\em Journal of Graph Theory}, 109(2):107--136, 2025.

\bibitem{hlinery2025hereditaryproductstructure}
Petr Hlin{\v{e}}n{\'{y}} and Jan Jedelsk{\'{y}}.
\newblock Hereditary graph product structure theory and induced subgraphs of strong products.
\newblock {\em arXiv preprint, arXiv:2403.16789}, 2025.

\bibitem{jacob22twinwidthbound}
Hugo Jacob and Marcin Pilipczuk.
\newblock Bounding twin-width for bounded-treewidth graphs, planar graphs, and bipartite graphs.
\newblock In {\em Graph-theoretic concepts in computer science (WG 2022)}, volume 13453 of {\em Lecture Notes in Comput. Sci.}, pages 287--299. Springer, Cham, 2022.

\bibitem{KNR92}
Sampath Kannan, Moni Naor, and Steven Rudich.
\newblock Implicit representation of graphs.
\newblock {\em SIAM Journal on Discrete Mathematics}, 5(4):596--603, 1992.

\bibitem{Karp72}
Richard~M. Karp.
\newblock Reducibility among combinatorial problems.
\newblock In {\em Complexity of computer computations ({P}roc. {S}ympos., {IBM} {T}homas {J}. {W}atson {R}es. {C}enter, {Y}orktown {H}eights, {N}.{Y}., 1972)}, The IBM Research Symposia Series, pages 85--103. Plenum, New York-London, 1972.

\bibitem{Kirkpatrick1983factors}
David~G. Kirkpatrick and Pavol Hell.
\newblock On the complexity of general graph factor problems.
\newblock {\em SIAM Journal on Computing}, 12(3):601--609, 1983.

\bibitem{Konstantinidis2021cluster}
Athanasios~L. Konstantinidis and Charis Papadopoulos.
\newblock Cluster deletion on interval graphs and split related graphs.
\newblock {\em Algorithmica}, 83(7):2018--2046, 2021.

\bibitem{korhonen2022twapprox}
Tuukka Korhonen.
\newblock A single-exponential time 2-approximation algorithm for treewidth.
\newblock {\em SIAM Journal on Computing}, 0(0):FOCS21--174--FOCS21--194, 0.

\bibitem{lokshtanov2022framework}
Daniel Lokshtanov, Fahad Panolan, Saket Saurabh, Jie Xue, and Meirav Zehavi.
\newblock A framework for approximation schemes on disk graphs.
\newblock {\em arXiv preprint, arXiv:2211.02717}, 2022.

\bibitem{lokshtanov2023framework}
Daniel Lokshtanov, Fahad Panolan, Saket Saurabh, Jie Xue, and Meirav Zehavi.
\newblock A framework for approximation schemes on disk graphs.
\newblock In {\em Proceedings of the 2023 {A}nnual {ACM}-{SIAM} {S}ymposium on {D}iscrete {A}lgorithms ({SODA 2023})}, pages 2228--2241. SIAM, Philadelphia, PA, 2023.

\bibitem{merker2024intersection}
Laura Merker, Lena Scherzer, Samuel Schneider, and Torsten Ueckerdt.
\newblock Intersection graphs with and without product structure.
\newblock In {\em 32nd {I}nternational {S}ymposium on {G}raph {D}rawing and {N}etwork {V}isualization (GD 2024)}, volume 320 of {\em LIPIcs. Leibniz Int. Proc. Inform.}, pages Art. No. 23, 19. Schloss Dagstuhl. Leibniz-Zent. Inform., Wadern, 2024.

\bibitem{Morin2021AlgoProduct}
Pat Morin.
\newblock A fast algorithm for the product structure of planar graphs.
\newblock {\em Algorithmica}, 83(5):1544--1558, 2021.

\bibitem{sparsity2012}
Jaroslav Ne{\v s}et{\v r}il and Patrice Ossona~de Mendez.
\newblock {\em Sparsity}, volume~28 of {\em Algorithms and Combinatorics}.
\newblock Springer, Heidelberg, 2012.
\newblock Graphs, structures, and algorithms.

\bibitem{pilipczuk2019edge}
Marcin Pilipczuk, Micha\l{} Pilipczuk, and Marcin Wrochna.
\newblock Edge bipartization faster than {$2^k$}.
\newblock {\em Algorithmica}, 81(3):917--966, 2019.

\bibitem{preparata2012computational}
Franco~P Preparata and Michael~I Shamos.
\newblock {\em Computational geometry: an introduction}.
\newblock Springer Science \& Business Media, 2012.

\bibitem{RS86}
Neil Robertson and Paul~D. Seymour.
\newblock Graph minors. {V}. {E}xcluding a planar graph.
\newblock {\em Journal of Combinatorial Theory. Series B}, 41(1):92--114, 1986.

\bibitem{RS}
Neil Robertson and Paul~D. Seymour.
\newblock Graph minors. {I}-{XXIII.}
\newblock {\em Journal of Combinatorial Theory. Series B}, 1986-2010.

\bibitem{ueckerdt2022improved}
Torsten Ueckerdt, David~R. Wood, and Wendy Yi.
\newblock An improved planar graph product structure theorem.
\newblock {\em Electronic Journal of Combinatorics}, 29(2):Paper No. 2.51, 12, 2022.

\bibitem{Yannakakis1978}
Mihalis Yannakakis.
\newblock Node- and edge-deletion {NP}-complete problems.
\newblock In {\em the {T}enth {A}nnual {ACM} {S}ymposium on {T}heory of {C}omputing (STOC 1978)}, pages 253--264. ACM, New York, 1978.

\bibitem{Yannakaksis1981cycles}
Mihalis Yannakakis.
\newblock Edge-deletion problems.
\newblock {\em SIAM Journal on Computing}, 10(2):297--309, 1981.

\end{thebibliography}

\end{document}